\pgfplotsset{width=7cm,compat=1.9}
\definecolor{weborange}{rgb}{.8,.3,.3}
\definecolor{webblue}{rgb}{0,0,.8}
\definecolor{internallinkcolor}{rgb}{0,.5,0}
\definecolor{externallinkcolor}{rgb}{0,0,.5}
\newcommand{\qq}[1]{``#1''} 
\definecolor{White}{rgb}{1,1,1} %
\definecolor{Black}{rgb}{0,0,0} %
\definecolor{LightGray}{rgb}{.8,.8,.8} %
\colorlet{ChannelColor}{LightGray} %
\colorlet{ChannelTextColor}{Black} %
\colorlet{ReadoutColor}{White} %
\numberwithin{equation}{section}
\newtheorem{thm}{Theorem}[section]
\newtheorem{lem}[thm]{Lemma}
\newtheorem*{lem*}{Lemma}
\newtheorem{cor}[thm]{Corollary}
\newtheorem{prop}[thm]{Proposition}
\newtheorem{question}[thm]{Question}
\theoremstyle{definition}
\newtheorem{defn}[thm]{Definition}
\theoremstyle{remark}
\newtheorem{rmk}[thm]{Remark}
\newtheorem{ex}[thm]{Example}
\theoremstyle{plain}
\providecommand{\customgenericname}{}
\newcommand{\newcustomtheorem}[2]{%
  \newenvironment{#1}[1]
  {%
   \renewcommand\customgenericname{#2}%
   \renewcommand\theinnercustomgeneric{##1}%
   \innercustomgeneric
  }
  {\endinnercustomgeneric}
}
\renewcommand{\hat}[1]{\widehat{#1}} 
\renewcommand{\tilde}[1]{\widetilde{#1}} 
\newcommand{\complex}{\mathbb{C}} %
\newcommand{\real}{\mathbb{R}} %
\newcommand{\integer}{\mathbb{Z}} %
\let\epsilon=\varepsilon %
\newcommand{\microspace}{\mspace{.5mu}} %
\newcommand{\ket}[1]{\ensuremath{\lvert\microspace #1
    \microspace\rangle}} %
\newcommand{\bra}[1]{\ensuremath{\langle\microspace #1
    \microspace\rvert}} %
\newcommand{\paren}[1]{(#1)}
\newcommand{\Paren}[1]{\left(#1\right)}
\newcommand{\bigparen}[1]{\big(#1\big)}
\newcommand{\abs}[1]{\lvert#1\rvert}
\newcommand{\norm}[1]{\lVert#1\rVert}
\newcommand{\Norm}[1]{\left\lVert#1\right\rVert}
\newcommand{\bignorm}[1]{\big\lVert#1\big\rVert}
\newcommand{\iprod}[1]{\langle#1\rangle}
\newcommand{\class}[1]{\textup{#1}\xspace} %
\newcommand{\QMIP}{\class{QMIP}} %
\newcommand\QMIP*{\ensuremath{\class{QMIP}^*}} %
\newcommand{\MIP}{\class{MIP}} %
\newcommand\MIP*{\ensuremath{\class{MIP}^*}} %
\newtheorem{theorem}{Theorem} 
\newtheorem{claim}[theorem]{Claim} %
\newcommand{\setft}[1]{\mathrm{#1}} %
\newcommand{\Proj}{\setft{Proj}} %
\newcommand{\Unitary}{\setft{U}} %
\newcommand{\Lin}{\setft{L}} %
\def\01{\{0,1\}}
\DeclareMathOperator{\Tr}{Tr}
\DeclareMathOperator{\CHSH}{CHSH}
\newcommand{\Mod}[1]{\ \mathrm{mod}\ #1}
\begin{document}

\title{A generalization of CHSH and the algebraic structure of optimal strategies}

\author{David Cui}
\email{dz.cui@mail.utoronto.ca}
\author{Arthur Mehta}
\email{arthur.mehta@mail.utoronto.ca}
\affiliation{Department of Mathematics, University of Toronto, Toronto, Canada.}
\author{Hamoon Mousavi}
\email{hmousavi@cs.toronto.edu}
\author{Seyed Sajjad Nezhadi}
\email{sajjad.nezhadi@mail.utoronto.ca}
\affiliation{Department of Computer Science, University of Toronto, Toronto, Canada.}

\maketitle
\begin{abstract}
\emph{Self-testing} has been a rich area of study in quantum information theory. It allows an experimenter to interact classically with a black box quantum system and to test that a specific entangled state was present and a specific set of measurements were performed. Recently, self-testing has been central to high-profile results in complexity theory as seen in the work on entangled games PCP of Natarajan and Vidick (FOCS 2018), iterated compression by Fitzsimons et al.\ (STOC 2019), and NEEXP in MIP* due to Natarajan and Wright (FOCS 2019). 
The most studied self-test is the CHSH game which features a bipartite system with two isolated devices. This game certifies the presence of a single EPR entangled state and the use of anti-commuting Pauli measurements. Most of the self-testing literature has focused on extending these results to self-test for tensor products of EPR states and tensor products of Pauli measurements.

In this work, we introduce an algebraic generalization of CHSH by viewing it as a linear constraint system (LCS) game, exhibiting self-testing properties that are qualitatively different. These provide the first example of LCS games that self-test non-Pauli operators resolving an open question posed by Coladangelo and Stark (QIP 2017). Our games also provide a self-test for states other than the maximally entangled state, and hence resolves the open question posed by Cleve and Mittal (ICALP 2012). Additionally, our games have $1$ bit question and $\log n$ bit answer lengths making them suitable candidates for complexity theoretic application.
This work is the first step towards a general theory of self-testing arbitrary groups. In order to obtain our results, we exploit connections between sum of squares proofs, non-commutative ring theory, and the Gowers-Hatami theorem from approximate representation theory. A crucial part of our analysis is to introduce a sum of squares framework that generalizes the \emph{solution group} of Cleve, Liu, and Slofstra (Journal of Mathematical Physics 2017) to the non-pseudo-telepathic regime. Finally, we give a game that is not a self-test by "gluing" together two copies of the magic square game. Our results suggest a richer landscape of self-testing phenomena than previously considered. 

\end{abstract}
\pagebreak
\section{Introduction}
In 1964, Bell showed that local hidden-variable theories, which are classical in nature, cannot explain all quantum mechanical phenomena \cite{Bell}. This is obtained by exhibiting a violation of a \emph{Bell inequality} by correlations arising from local measurements on an entangled state. Furthermore, in some instances, it is known that only certain measurements can produce these correlations. So through local measurements not only is it possible to verify that nature is not solely governed by classical theories, it is also possible to obtain conclusive statistical evidence that a specific quantum state was present and specific measurements were performed. Results of this nature are often referred to as \emph{self-testing} (also known as \emph{rigidity}), first formalized by Mayers and Yao in \cite{MaoSelfTest}. Self-testing has wide reaching applications in areas of theoretical computer science including complexity theory \cite{low-degree,iterated-compression,neexp}, certifiable randomness \cite{QuantumDice}, device independent quantum cryptography \cite{QuantumCrypto1,device-independent}, and delegated quantum computation \cite{verifier-on-a-leash}. See \cite{Selftesting} for a comprehensive review. Below we visit five natural questions on the topic of self-testing that we answer in this paper.  

The CHSH game \cite{CHSH} is the prototypical example of a \emph{non-local game}. In CHSH, two separated players, Alice and Bob, are each provided with a single classical bit, $s$ and $t$, respectively, chosen uniformly at random by a referee; the players reply with single classical bits $a$ and $b$ to the referee; and win the game if and only if $a \oplus b = s \wedge t$. Classically, the players can win the CHSH game with probability at most $75\%$. Remarkably, if we allow Alice and Bob to share an entangled state and employ a \emph{quantum strategy}, then the optimal winning probability is approximately $85\%$. For an introduction to non-local games, see \cite{Watrous}.

CHSH is also a canonical example of a self-testing game. Prior to the formalization of self-testing by Mayers and Yao it was already known \cite{Summers,SelfTestCHSH} that any optimal quantum strategy for CHSH must be, up to application of local isometries, using the Einstein-Podolsky-Rosen (EPR) state
\[ \ket{\psi}= \frac{1}{\sqrt{2}} \Paren{\ket{00} + \ket{11} }. \]

Self-testing can be framed either as an statement about non-local games, Bell inequalities, or more generally correlations. CHSH is an instance of a \emph{non-pseudo-telepathic} game. A \emph{pseudo-telepathic} game is one that exhibits \emph{quantum advantage} (i.e, its quantum value is strictly larger than that of its classical value) and its quantum value is $1$. CHSH can also be viewed as a \emph{linear constraint system} (LCS) game over $\mathbb{Z}_2$ \cite{BCSTensor}. 
LCS games are non-local games in which Alice and Bob cooperate to convince the referee that they have a solution to a system of linear equations. We introduce a new generalization of CHSH to a family of non-pseudo-telepathic LCS games over $\mathbb{Z}_{n}$ for all $n\geq 2$. These games resolve the following questions.

\begin{question}\label{Q1}
Are there states other than the maximally entangled state that can be self-tested by a non-local game?
\end{question}
To date much has been discovered about self-testing the maximally entangled state, $\frac{1}{\sqrt{d}}\sum^{d-1}_{j=0} \ket{j}\ket{j}$.
Mermin's \emph{magic square} game \cite{Mermin} can be used to self-test two copies of the EPR state and the \emph{parallel-repeated magic square} game can be used to self-test $2n$ copies of the EPR state \cite{RepeatedMagicSquare}.

The sum of squares (SOS) decomposition technique in \cite{TiltedCHSH} shows that the \emph{tilted CHSH} is a self-test for any pure state of two entangled qubits. This self-testing is stated in terms of violation of Bell inequalities. It is an open problem if the same applies for non-local games. The case for self-testing in higher dimensions has proven more difficult to analyze. Remarkably, it is still possible to self-test any bipartite entangled state, in any dimension \cite{selftestAnyState}. However, these self-test results are presented in terms of violations of correlations, unlike the CHSH game which arises from a non-local game (with binary payoff). Our games also resolve in the negative the question ``Can every LCS game be played optimally using the maximally entangled state?'' posed in \cite{BCSTensor}. 

\begin{question}\label{Q2}
Are there non-local games that provide a self-test for measurements that are not constructed from qubit Pauli operators?
\end{question}

The protocols in all of the above examples also provide a self-test for the measurement operators. That is if the players are playing optimally then they must, up to application of local isometries, have performed certain measurements. Self-testing proofs rely on first showing that operators in optimal strategies must satisfy certain algebraic relations. These relations help identify optimal operators as representations of some group. This is then used to determine the measurements and state up to local isometries. In the case of CHSH, one can verify that Alice and Bob's measurements must anti-commute if they are to play optimally. These relations are then enough to conclude that operators of optimal strategies generate the dihedral group of degree $4$ (i.e., the Pauli group). Thus CHSH is a self-test for the well-known Pauli matrices $\sigma_X$ and $\sigma_Z$ \cite{NewCHSH}. 

Self-tests for measurements in higher dimensions have been primarily focused on self-testing $n$-fold tensor-products of $\sigma_X$ and $\sigma_Z$ \cite{VidickNatarajan, ColadangeloParallel, Matthew}. It is natural to ask if there are self-tests for operators that are different than ones constructed from qubit Pauli operators. Self-testing Clifford observables has also been shown in \cite{Leash}. Our games provides another example that is neither Pauli nor Clifford. Since our games are LCS this resolves the question, first posed by \cite{RobustRigidityLCS}, in the affirmative. 

\begin{question}\label{Q3}
Can we extend the solution group formalism for pseudo-telepathic LCS games to a framework for proving self-testing for all LCS games?
\end{question}

The \emph{solution group} introduced in \cite{BCSCommuting} is an indispensable tool for studying pseudo-telepathic LCS games. To each such game there corresponds a group known as the solution group. Optimal strategies for these games are characterized by their solution group in the sense that any perfect quantum strategy must induce certain representations of this group. Additionally, the work in \cite{RobustRigidityLCS} takes this further by demonstrating a streamlined method to prove self-testing certain LCS games. It is natural to ask whether these methods can be extended to cover all LCS games. In this paper we make partial progress in answering this question by introducing a SOS framework, and use it to prove self-testing for our games. At its core, this framework utilizes the interplay between sum of squares proofs, non-commutative ring theory, and the Gowers-Hatami theorem \cite{GowersHatami,VidickBlog} from approximate representation theory.

\begin{question}\label{Q4}
Is there a systematic approach to design self-tests for arbitrary finite groups?
\end{question}

Informally a game is a self-test for a group if every optimal strategy induces a \emph{state dependent representation} of the group.
In every example that we are aware of, the self-tested solution group for pseudo-telepathic LCS games is the Pauli group. Slofstra, in \cite{embedding}, introduced an embedding theorem that embeds (almost) any finite group into the solution group of some LCS game. With the embedding theorem, the problem of designing games with certain properties reduces to finding groups with specific properties. Slofstra uses this connection to design games that exhibit separations between correlation sets resolving the `middle' Tsirelson's Problem.

However, there are three shortcomings to this approach. Firstly, the resulting game is very complex. Secondly, not all properties of the original group are necessarily preserved. Finally, the game is not a self-test for the original group. Our games self-test an infinite family of groups, non of which are the Paulis. One such example is the alternating group of degree $4$. The SOS framework makes partial progress towards a general theory for self-testing arbitrary groups.

\begin{question}\label{Q5}
Is there a non-local game that is not a self-test?
\end{question}
In addition to the infinite family of games, we introduce an LCS game that is obtained from \qq{gluing} together two copies of the magic square game. This \emph{glued magic square} provides an example of a game that is not a self-test \cite{Mermin}.  

\subsection{Main Results}\label{Results}
We introduce a family of non-local games $\mathcal{G}_{n}$ defined using the following system of equations over $\mathbb{Z}_n$
\begin{align*}
    &x_0x_1=1,  \\
    &x_0x_1=\omega_{n}.
\end{align*}
We are identifying $\mathbb{Z}_{n}$ as a multiplicative group and $\omega_n$ as the primitive $n$th root of unity. Note that the equations are inconsistent, but this does not prevent the game from being interesting. Alice and Bob try to convince a referee that they have a solution to this system of equations. Each player receives a single bit, specifying an equation for Alice and a variable for Bob, and subsequently each player returns a single number in $\mathbb{Z}_{n}$. Alice's response should be interpreted as an assignment to variable $x_0$ in the context of the equation she received, and Bob's response is interpreted as an assignment to the variable he received. The referee accepts their response iff their assignments are consistent and satisfy the corresponding equation. The case $n=2$ is the CHSH game. The classical value of these games is $\frac{3}{4}$. In Section \ref{LowerBounds}, we give a lower-bound on the \emph{quantum value} of this family of games. Specifically in Theorem \ref{LowerBoundThm}, we show that the quantum value is bounded below by
\[\frac{1}{2} + \frac{1}{2n\sin\Paren{ \frac{\pi}{2n} } } > \frac{3}{4}. \]
We show that the lower-bound is tight in the case of $n\leq 5$. We have numerical evidence that these lower-bounds are tight for all $n$. Specifically, we can find an upper-bound on the quantum value of a non-local game using the well-known hierarchy of semi-definite programs due to \cite{NPA}. It is of interest to note that the upper-bound is not obtained using the first level of the NPA hierarchy, as is the case with the CHSH game. Instead, the second level of this hierarchy was needed for $n\geq 3$. 

The optimal \emph{quantum strategy} for these games uses the entangled state 
\[ \ket{\psi_n} = \frac{1}{\gamma_n} \sum_{i=0}^{n-1}(1 - z^{n+2i+1})\ket{\sigma^i(0), \sigma^{-i}(0)} \in \mathcal{H}_A\otimes \mathcal{H}_B,\]
where $\gamma_n$ is the normalization factor, $\sigma_{n}=(0, 1, \dots, n-1)$ is a permutation, and $z_{n}$ is a $4n$'th root of unity. 
Observe that the state $\ket{\psi_n}$ has full Schmidt rank. Despite this, in all cases except $n=2$, the state $\ket{\psi_n}$ is not the maximally entangled state. For $n>2$, the entropy of our state is not maximal, but approaches the maximal entropy of $\log(n)$ in the limit. 

In Section \ref{Group generated by solutions}, we show that the group generated by the optimal strategy has the following presentation
\[ G_{n} = \left\langle P_{0}, P_{1}, J \ | \ P_{0}^{n}, P_{1}^{n}, J^{n}, [J, P_{0}], [J, P_{1}], J^{i}\Paren{ P_{0}^{i} P_{1}^{-i} }^{2} \text{ for } i = 1, 2, \dots, \left\lfloor n/2 \right\rfloor \right\rangle. \]
For example $G_3 = \integer_3\times A_4$ where $A_4$ is the alternating group of degree $4$. We show that our games are a self-test for these groups, for $n\leq 5$, in the sense that every optimal play of this game induces a representation of this group. We conjecture that this is true for all $n$. This partially resolves Question \ref{Q4}.

In section \ref{Upperbounds}, we analyze our game in the case $n=3$ and show that it can be used as a robust self-test for the following state
\[\frac{1}{\sqrt{10}} \Paren{ (1 - z^{4})\ket{00} + 2\ket{12} + (1 + z^{2})\ket{21} } \in \mathbb{C}^3 \otimes \mathbb{C}^3,\]
where $z:= e^{i \pi/6}$  is the primitive $12$th root of unity. Since this state is not the maximally entangled state, we have thus provided an answer to Question \ref{Q1}. This game also answers Question \ref{Q2} since it provides a robust self-test for the following operators
\begin{gather*}
    A_{0} = \begin{pmatrix} 
0 & 0 & 1 \\
1 & 0 & 0 \\
0 & 1 & 0
\end{pmatrix}, \quad A_{1} = \begin{pmatrix} 
0 & 0 & -z^2 \\
z^2 & 0 & 0 \\
0 & z^2 & 0
\end{pmatrix}, \\
B_{0} = \begin{pmatrix} 
0 & 0 & 1 \\
1 & 0 & 0 \\
0 & 1 & 0
\end{pmatrix}, \quad B_{1} = \begin{pmatrix} 
0 & -z^2 & 0 \\
0 & 0 & z^2 \\
z^2 & 0 & 0
\end{pmatrix},
\end{gather*}
which do not generate the Pauli group of dimension $3$.


In Section \ref{sec:sos_prelim}, we introduce the sum of squares framework, using an important lemma proven in Section \ref{sec:gh_prelim}, that gives a streamlined method for proving self-testing. We then use this framework to prove self-testing for our games. 
Furthermore, in Section \ref{BCSwithSOS}, we show that when restricted to pseudo-telepathic games, the SOS framework reduces to the solution group formalism of Cleve, Liu, and Slofstra \cite{BCSCommuting}.


In section \ref{semirigid}, we construct an LCS game that is obtained from ``gluing'' two copies of the magic square game together. This game is summarized in Figure \ref{fig:glued}. We exhibit two inequivalent perfect strategies and thus provide an answer to Question \ref{Q5}.

\begin{figure}[H]
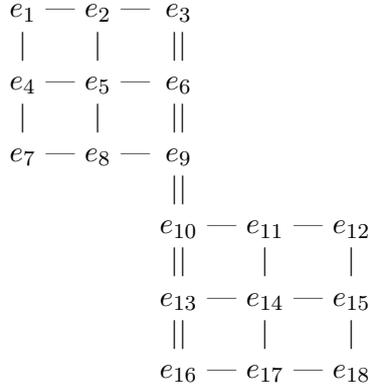

\[\begin{array}{c@{}c@{}c@{}c@{}c@{}c@{}c@{}c@{}c}
e_{1} & \text{ --- } & e_{2} & \text{ --- } & e_{3} & & & & \\
 \vert & & \vert & & \vert\vert& & & & \\
e_{4} & \text{ --- } & e_{5} & \text{ --- } & e_{6} & & & &\\
 \vert & & \vert & & \vert\vert & & & &\\
e_{7} & \text{ --- } & e_{8} & \text{ --- } & e_{9} & & & & \\

& & & & \vert\vert & & & \\

& & & & e_{10} & \text{ --- } & e_{11} & \text{ --- } & e_{12}\\
& & & & \vert\vert & & \vert & & \vert \\
& & & & e_{13} & \text{ --- } & e_{14} & \text{ --- } & e_{15}\\
& & & & \vert\vert & & \vert & & \vert \\
& & & & e_{16} & \text{ --- } & e_{17} & \text{ --- } & e_{18}\\
\end{array}\]

\caption{This describes an LCS game with 18 variables $e_{1}, e_{2}, \dots, e_{18}$. Each single-line indicates that the variables along the line multiply to $1$, and the double-line indicates that the variables along the line multiply to $-1$. }\label{fig:glued}
\end{figure}

\subsection{Proof techniques}
We prove self-testing in this paper following a recipe that we refer to as the \emph{SOS framework}. At its core it applies the Gowers-Hatami (GH) theorem which is a result in approximate-representation theory. GH has been used previously in proving self-testing, but some of the details have been overlooked in the literature. In this paper, we prove Lemma \ref{lemma:application_of_gh} that encapsulates the use of GH in proving self-testing. In Section \ref{sec:gh_prelim}, we define approximate representations, irreducible strategies, the Gowers-Hatami theorem and present the proof of the following lemma. 

\begin{lem*}[informal] Let $G_A,G_B$ be groups. Suppose every optimal strategy of the game $\mathcal{G}$ induces a pair of approximate representations of $G_A$ and $G_B$. Further suppose that there is a unique optimal irreducible strategy $(\rho,\sigma,\ket{\psi})$ where $\rho,\sigma$ are irreps of $G_A,G_B$, respectively. Then $\mathcal{G}$ is a self-test.\end{lem*}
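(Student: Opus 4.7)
The plan is to show that any (near-)optimal strategy is locally equivalent, up to an isometry and an auxiliary state, to the canonical triple $(\rho,\sigma,\ket{\psi})$. The main tool is the Gowers--Hatami theorem, which upgrades an approximate group representation into a genuine representation on an enlarged Hilbert space; the uniqueness hypothesis then pins down which irreducible components can survive in the optimal limit.

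First, I would fix an arbitrary optimal (or $\delta$-near-optimal) strategy $(\{A_g\},\{B_h\},\ket{\phi})$ for $\mathcal{G}$. By hypothesis, this strategy induces maps $f_A\colon G_A\to\Unitary(\cH_A)$ and $f_B\colon G_B\to\Unitary(\cH_B)$ that are $\varepsilon$-approximate representations with $\varepsilon=\varepsilon(\delta)\to 0$ as $\delta\to 0$. Applying the Gowers--Hatami theorem on each side yields isometries
\[
V_A\colon \cH_A \to \bigoplus_{\pi\in\widehat{G_A}} \cH_\pi\otimes \mathcal{K}^A_\pi, \qquad V_B\colon \cH_B \to \bigoplus_{\tau\in\widehat{G_B}} \cH_\tau\otimes \mathcal{K}^B_\tau,
\]
such that for every $g\in G_A$ and $h\in G_B$,
\[
V_A\,f_A(g) \;\approx\; \Bigl(\bigoplus_{\pi} \pi(g)\otimes \Id_{\mathcal{K}^A_\pi}\Bigr)V_A, \qquad V_B\,f_B(h) \;\approx\; \Bigl(\bigoplus_{\tau} \tau(h)\otimes \Id_{\mathcal{K}^B_\tau}\Bigr)V_B,
\]
with the error controlled by $\varepsilon$ in the state-weighted norm relevant for game value.

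Next I would analyse the vector $(V_A\otimes V_B)\ket{\phi}$, which decomposes, after reordering tensor factors, as $\sum_{\pi,\tau}\ket{\psi_{\pi,\tau}}\otimes\ket{\mathrm{aux}_{\pi,\tau}}$ with $\ket{\psi_{\pi,\tau}}\in\cH_\pi\otimes\cH_\tau$ and $\ket{\mathrm{aux}_{\pi,\tau}}\in\mathcal{K}^A_\pi\otimes\mathcal{K}^B_\tau$. Because the intertwining given by Gowers--Hatami is approximate, the winning probability of the original strategy equals, up to $O(\varepsilon)$, a convex combination of the winning probabilities of the irreducible strategies $(\pi,\tau,\widetilde{\psi}_{\pi,\tau})$ weighted by $\|\ket{\mathrm{aux}_{\pi,\tau}}\|^2$, where $\widetilde{\psi}_{\pi,\tau}$ denotes the normalised block vector. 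Since the combination is a convex one achieving the optimal value up to $O(\varepsilon)$, and by the uniqueness hypothesis the sole irreducible triple attaining the optimum is $(\rho,\sigma,\ket{\psi})$, all blocks with $(\pi,\tau)\neq(\rho,\sigma)$ must carry zero total weight in the exact case. In the robust setting, a gap-from-optimum argument applied to the (by uniqueness) strictly sub-optimal remaining irreducible strategies bounds their total weight by $O(\sqrt{\varepsilon})$, and a second application of the uniqueness statement restricted to the $(\rho,\sigma)$-block forces $\widetilde{\psi}_{\rho,\sigma}$ to be $O(\sqrt{\varepsilon})$-close to $\ket{\psi}$.

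Composing $V_A$ and $V_B$ with the projection onto the $(\rho,\sigma)$-isotypic component then produces the local isometries demanded by the definition of a self-test: they carry $\ket{\phi}$ to $\ket{\psi}\otimes\ket{\mathrm{aux}}$ and conjugate Alice's and Bob's measurement operators into the canonical ones acting on $\cH_\rho,\cH_\sigma$ tensored with identity on the auxiliary space, all up to an error polynomial in $\varepsilon$. The hardest step will be the quantitative bookkeeping in paragraph three: Gowers--Hatami only yields approximate intertwiners in a state-dependent norm, and converting block-by-block \emph{score deficits} back into norm bounds on $\ket{\mathrm{aux}_{\pi,\tau}}$ requires an explicit spectral gap between the optimal irreducible value and the best non-$(\rho,\sigma)$ irreducible value, together with careful use of the approximate intertwining when relating the game observables to the direct-sum action on the enlarged space.
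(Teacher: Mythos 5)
Your proposal follows essentially the same route as the paper's proof of Lemma \ref{lemma:application_of_gh}: apply Gowers--Hatami to each side, write the winning probability as a convex combination $\sum_{\pi,\tau} p_{\pi,\tau}\,\nu(\mathcal{G},\mathcal{S}_{I\otimes\pi,I\otimes\tau,\ket{\tilde\psi_{\pi,\tau}}})$ over irreducible blocks, and invoke the uniqueness of the optimal irrep pair and of the optimal state to kill all blocks but $(\bar\rho,\bar\sigma)$ and identify the surviving block state with $\ket{\text{junk}}\ket{\psi}$. The only divergence is that you sketch the robust ($\varepsilon>0$) bookkeeping, which the paper explicitly sets aside and proves only in the exact case; your identification of the spectral-gap requirement there is correct but not needed for the statement as the paper uses it.
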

Applying this lemma requires us to ascertain two properties of the game:
\begin{enumerate}
    \item Every optimal strategy induces approximate representations of some groups $G_A$ and $G_B$.
    \item There is a unique irreducible strategy $(\rho,\sigma,\ket{\psi})$ for the game $\mathcal{G}$.
\end{enumerate}

The first step is to obtain the bias expression for the game $\mathcal{G}$ that allows for a simple calculation of the wining probability of any startegy $\mathcal{S}=(\{A_i\},\{B_j\},\ket{\psi})$ (here $A_i$ and $B_j$ are Alice and Bob's measurement observables, respectively, and $\ket{\psi}$ is the shared state). The bias expression for $\mathcal{G}_n$ is given by
\[\mathcal{B}_n(A_0,A_1,B_0,B_1) = \sum_{i=1}^{n-1} A_{0}^{i} B_{0}^{-i}+ A_{0}^{i}  B_{1}^{i} + A_{1}^{i} B_{0}^{-i} + \omega^{-i} A_{1}^{i} B_{1}^{i}.\]
Then the winning probability of $\mathcal{S}$ is given by $\nu(\mathcal{G},\mathcal{S}) = \bra{\psi}\paren{\frac{1}{4n}\mathcal{B}_n(A_0,A_1,B_0,B_1) + \frac{1}{n}}\ket{\psi}$. For any real $\lambda$ for which there exist some polynomials $T_k$ giving a sum of squares decomposition such as
\[
\lambda I - \mathcal{B}_n(A_0,A_1,B_0,B_1) = \sum_{k} T_k^{\ast}(A_0,A_1,B_0,B_1) T_k(A_0,A_1,B_0,B_1),
\]
provides an upper bound of $\frac{\lambda}{4n}+\frac{1}{n}$ on the optimal value of the game (which we denote by $\nu^\ast(\mathcal{G}_n)$). This follows since expressing $\lambda I - \mathcal{B}_n$ as an SOS proves that it is a positive semidefinite operator and consequently $\bra{\psi}\mathcal{B}_n\ket{\psi} \leq \lambda $ for all states $\ket{\psi}$. 

Now if we have an SOS for $\lambda = 4n\nu^\ast(\mathcal{G})-4$,
then we can obtain some algebraic relations that every optimal strategy must satisfy. This follows since every optimal strategy must satisfy $\bra{\psi}\paren{\lambda I - B_n}\ket{\psi} = 0$, from which it follows $T_k \ket{\psi} = 0$ for all $k$.

Let $\paren{M_j(A_0,A_1) -I}\ket{\psi} = 0$ be all the relations derived from the SOS relations $T_k\ket{\psi}=0$ such that $M_i$ are monomials only in Alice's operators, and let $G_A$ be the group with the presentation 
\begin{align*}
G_A = \langle P_0, P_1 : M_i(P_0,P_1)\rangle
\end{align*}
We similarly obtain a group $G_B$ for Bob. These are the group referred in the above lemma. For the first assumption one must show that any optimal strategy gives approximate representations of these groups.

The next step is to prove the second assumption. We need to show that among all the pairs of irreps of $G_A$ and $G_B$ only one could give rise to an optimal strategy. To this end, we let $R_i(A_0,A_1)\ket{\psi} = 0$ be all the relations derived from relations $T_k\ket{\psi}=0$. These $R_i$ are allowed to be arbitrary polynomials (as opposed to monomials in the case of group relations). So any optimal irrep must satisfy all these polynomial relations. In some special cases, e.g., games $\mathcal{G}_n$, there is one polynomial relation that is enough to identify the optimal irreps.

\subsection{Relation to prior work}\label{Relation to other work}

Much work has been done to generalize CHSH to games over $\mathbb{Z}_n$. Initial generalizations were done by Bavarian and Shor \cite{Shor} and later extended in Kaniewski et al. \cite{WeakRigidity}. The game we present in section \ref{Zn Games} provides a different generalization by viewing CHSH as an LCS game. 
The classical value of our games is found to be $\frac{3}{4}$ from casual observation. Furthermore, we showcase quantum advantage by providing a lower bound on the quantum value for all $n$.

In contrast the generalization of CHSH discussed in Kaniewski et al. is so difficult to analyze that even the classical value is not known except in the cases of $n=3,5,7$. Additionally the quantum value of their Bell inequality is only determined after multiplying by choices of \qq{phase} coefficients. Self-testing for this generalization is examined by Kaniewski et al., where they prove self-testing for $n=3$ and show a weaker form of self-testing in the cases of $n=5,7$. For the games we introduce, we have self-testing for $n=3,4,5$ and we conjecture that they are self-tests, in the strict sense, for all $n$.

Furthermore, in \cite{Slofstra_2011}, Slofstra exhibits a game whose correlations are not extreme point, which suggests that it is also not a self-test, his result is not formulated in the language of self-testing and it would be interesting to rigorously show this to be the case. Independently of our work, in \cite{weak2}, a family of Bell inequalities, which includes the $I_{3322}$ game, is shown to self-test the maximally entangled state but no measurement operators. 

\subsection{Further work}\label{open-problems}
This paper leaves many open problems and avenues for further investigation. The most important of these follow.

\begin{enumerate}
    \item We conjecture that the class of games $\mathcal{G}_n$ are rigid for all $n$. The step missing from resolving this conjecture is an SOS decomposition $\nu(\mathcal{G}_n,\mathcal{S}_n)I - \mathcal{B}_n = \sum_{k}\alpha_{n,k} T_{n,k}^\ast T_{n,k}$ for $n > 5$ where polynomials $T_{n,k}$ viewed as vectors have unit norms and $\alpha_{n,k}$ are positive real numbers. 
    
    If this conjecture is true, then we have a simple family of games with $1$ bit question and $\log n$ bit answer sizes that are self-testing full-Schmidt rank entangled states of any dimension. In fact, we show that the amount of entanglement in these self-tested states rapidly approaches the maximum amount of entanglement. To the best of our knowledge this is the first example of a family of games with such parameters. 
    
    \item In Section \ref{Group generated by solutions}, we give efficient explicit presentations for $G_n$ and its multiplication table. Can we go further and characterize these groups in terms of direct and semidirect products of small well-known groups? The first few cases are as follows $$G_3 \cong \integer_3\times A_4, G_4\cong \paren{\integer_2^3\rtimes \integer_4}\rtimes \integer_4, G_5\cong \paren{\integer_2^4\rtimes \integer_5}\times \integer_5,$$ $$G_6 \cong \integer_3 \times \Paren{\paren{\paren{\paren{\integer_4 \times \integer_2^3} \rtimes \integer_2} \rtimes \integer_2} \rtimes \integer_3}.$$

    \item The third problem is to characterize all$\Mod n$ games over two variables and two equations. Let $(\integer_n,m_1,m_2)$ be the LCS game$\Mod n$ based on the system of equations 
    \begin{align*}
        x_0 x_1 &= \omega_n^{m_1}\\
        x_0 x_1 &= \omega_n^{m_2}.
    \end{align*}
    So for example $(\integer_n,0,1) = \mathcal{G}_n$. A full characterization includes explicit construction of optimal strategies, a proof of self-testing, and a characterization of the group generated by optimal strategies (i.e., the \emph{self-tested group}). 
    Interesting observations can be made about these games. For example $(\integer_4,0,2)$ self-tests the same strategy as $\CHSH$. Another interesting observation is that the self-tested group of $(\integer_3,0,1)$ and $(\integer_3,0,2)$ is $G_3 \cong \integer_3 \times A_4$, whereas the self-tested group of $(\integer_3,1,2)$ is $A_4$.  
    
    These games have similar bias expressions to those of $\mathcal{G}_n$. It is likely that the same kind of methods can be used to find optimal strategies and establish self-testing for these games. For example $(\integer_n,0,m)$ for all $m \in [n]\setminus \{0\}$ self-test the same group $G_n$. Just like $\mathcal{G}_n$, the representation theory of $G_n$ dictates the optimal strategies of all these games: the optimal irreducible strategies of $(\integer_n,0,m)$ for all $m \in [n]\setminus \{0\}$ are distinct irreps of $G_n$ of degree $n$.

    For example optimal strategies for all games $(\integer_5,0,m)$, where $m \in [5] \setminus \{0\}$, generate $G_5$. This group has $15$ irreps of degree five. For each $m \in [5]$, there are three irreps sending $J\to \omega_5^m I_5$. For each $m \in [5] \setminus \{0\}$, the unique optimal irrep strategy of $(\integer_5,0,m)$ is one of these three irreps. 
    
    These games are a rich source of examples for self-testing of groups. A full characterization is a major step toward resolving Question \ref{Q4}.
        
    \item One drawback of $\Mod n$ games is that the size of the self-tested groups grows exponentially, $\abs{G_n} = 2^{n-1}n^2$. Where are the games that self-test smaller groups for example the dihedral group of degree $5$, $D_5$? It seems that to test more groups, we need to widen our search space. 
    
    In a similar fashion to$\Mod n$ games, define games $(G,g_1,g_2)$ where $G$ is a finite group and $g_1,g_2 \in G$, based on the system of equations
    \begin{align*}
        x_0 x_1 &= g_1\\
        x_0 x_1 &= g_2.
    \end{align*}
    Understanding the map that sends $(G,g_1,g_2)$ to the self-tested group helps us develop a richer landscape of group self-testing. 
    
    \item How far can the SOS framework be pushed to prove self-testing? The first step in answering this question is perhaps a characterization of games $(G,g_1,g_2)$ (and their variants, e.g., system of equations with more variables and equations) using this framework.
    
    \item Glued magic square, as presented in Section \ref{semirigid}, is not a self-test for any operator solution, but both inequivalent strategies that we present use the maximally entangled state. Is the glued magic square a self-test for the maximally entangled state? If true, this would give another example of a non-local game that only self-tests the state and not the measurement operators.  
    
    After the publication of our work, Man\v{c}inska et al. \cite{Mancinska2} showed that this is indeed the case; specifically they showed that the glued magic square self-tests convex combinations of the two inequivalent strategies we presented in our work. Along with \cite{weak2}, these positively resolve a question asked in \cite{Selftesting} in the context of non-local games.
    
\end{enumerate}

\subsection{Organization of paper}
In section \ref{Prelims}, we fix the nomenclature and give basic definitions for non-local games, winning strategies, self-testing, LCS games, approximate representation, and the Gowers-Hatami theorem. In section \ref{Zn Games}, we give the generalization of CHSH and derive the bias operator of these games, that is used in the rest of the paper. In Section \ref{LowerBounds}, we establish lower-bounds on the quantum value for these games by presenting explicit strategies. In this section we also analyse the entanglement entropy of the shared states in these explicit strategies. In Section \ref{Group generated by solutions}, we give a presentation for the groups generated by Alice and Bob's observables. In Section \ref{sec:sos_prelim}, we present the SOS framework and give a basic example of its application in proving self-testing. In section \ref{Upperbounds}, we use the SOS framework to show that our lower-bound is tight in the case of $n=3$, and answer the questions we posed about self-testing. In section \ref{BCSwithSOS}, we show that the SOS framework reduces to the solution group formalism in the case of pseudo-telepathic LCS games. Finally, in Section \ref{semirigid} we provide an example of a non-rigid game.

\section*{Acknowledgements}
We would like to thank Henry Yuen for having asked many of the motivating questions for this project, as well as many insightful discussions and helpful notes on writing this paper. We also thank the anonymous reviewers for their suggestions and for suggesting the name glued magic square.

\tableofcontents
\section{Preliminaries}\label{Prelims}
We assume the reader has a working understanding of basic concepts from the field of quantum information theory. For an overview of quantum information, refer to \cite{TQI, QCQI,PaulsenNotes}.

\subsection{Notation}

We use $G$ to refer to a group, while $\mathcal{G}$ is reserved for a non-local game. Let $[n, m]$ denote the set $\{n, n+1, \dots, m\}$ for integers $n \leq m$, and the shorthand $[n] = [0, n-1]$. This should not be confused with $[X,Y]$, which is used to denote the commutator $XY - YX$. We let $I_n$ denote the $n \times n$ identity matrix and $e_i$, for $i\in [n]$, be the ith standard basis vector. The pauli observables are denoted $\sigma_x, \sigma_y,$ and $\sigma_z$. The Kronecker delta is denoted by $\delta_{i,j}$.

We will let $\mathcal{H}$ denote a finite dimensional Hilbert space and use the notation $\ket{\psi} \in \mathcal{H}$ to refer to vectors in  $\mathcal{H}$. We use $\Lin(\mathcal{H})$ to denote the set of linear operators in the Hilbert space $\mathcal{H}$. We use $\Unitary_n(\complex)$ to denote the set of unitary operators acting on the Hilbert space $\complex^n$. The set of projection operators acting on $\mathcal{H}$ are denoted by $\Proj(\mathcal{H})$. Given a linear operator $A \in \Lin(\mathcal{H})$, we let $A^* \in \Lin(\mathcal{H})$ denote the adjoint operator. For $X,Y \in \Lin(\mathcal{H})$, the Hilber-Schmidt inner product is given by $\iprod{X,Y}=\Tr(X^\ast Y)$. We also use the following shorthands $\Tr_{\rho}(X)= \Tr(X \rho)$ and $\iprod{X,Y}_\rho = \Tr_{\rho}(X^\ast Y)$  where $X,Y\in \Lin(\mathcal{H})$ and $\rho$ is a density operator acting on $\mathcal{H}$ (i.e., positive semidefinite with trace $1$). The von Neumann entropy of a density matrix $\rho$ is given by $S(\rho) = -\Tr(\rho \log{\rho})$.

We use $\Re(\alpha)$ to denote the real part of a complex number $\alpha$. We let $\omega_n = e^{2i\pi/n}$ be the $n$th root of unity. The Dirichlet kernel is $\mathcal{D}_{m}(x) = \frac{1}{2\pi}\sum_{k = -m}^{m} e^{ikx}$ which by a well known identity is equal to $ \frac{ \sin\Paren{ \Paren{ m + \frac{1}{2} } x } }{2\pi \sin\Paren{ \frac{x}{2} } }$.

The maximally entangled state with local dimension $n$ is given by $\ket{\Phi_n} = \frac{1}{\sqrt{n}} \sum_{i=0}^{n-1} \ket{i}\ket{i} \in \complex^n\otimes \complex^n$.

Let $\mathcal{H}_A,\mathcal{H}_B$ be Hilbert spaces of dimension $n$ and $\ket{\psi} \in \mathcal{H}_A \otimes \mathcal{H}_B$ be a bipartite state. Then there exists orthonormal bases $\{\ket{i_A}\}_{i=0}^{n-1}$ for $\mathcal{H}_A$ and $\{\ket{i_B}\}_{i=0}^{n-1}$ for $\mathcal{H}_B$ and unique non-negative real numbers $\{\lambda_i\}_{i=0}^{n-1}$ such that $\ket{\psi} = \sum_{i=0}^{n-1} \lambda_i \ket{i_A} \ket{i_B}$. The $\lambda_i$'s are known as Schmidt coefficients.  

The Schmidt rank of a state is the number of non-zero Schmidt coefficients $\lambda_i$. The Schmidt rank is a rough measure of entanglement. In particular, a pure state $\ket{\psi}$ is entangled if and only if it has Schmidt rank greater than one.

Another measure of entanglement is the \emph{entanglement entropy}. Given the Schmidt decomposition of a state $\ket{\psi} = \sum_{i=0}^{n-1} \lambda_i \ket{i_A} \ket{i_B}$, the entanglement entropy $S_{\psi}$ is given by $- \sum_{i=0}^{n-1} \lambda_i^2 \log(\lambda_i^2)$. The maximum entanglement entropy is $\log(n)$. A pure state is separable (i.e. not entangled) when the entanglement entropy is zero. If the entanglement entropy of a state $\ket{\psi}$ is maximum, then the state is the maximally entangled state up to local unitaries, i.e., there exist unitaries $U_A,U_B \in \Unitary_n(\complex)$, such that $\ket{\psi} = U_A \otimes U_B \ket{\Phi_n}$.

\subsection{Non-local games}\label{sec:non-local-games}
A \emph{non-local game} is played between a referee and two cooperating players Alice and Bob who cannot communicate once the game starts. The referee provides each player with a question (input), and the players each respond with an answer (output). The referee determines whether the players win with respect to fixed conditions known to all parties. Alice does not know Bob's question and vice-versa as they are not allowed to communicate once the game starts. However, before the game starts, the players could agree upon a strategy that maximizes their success probability. Below we present the formal definition and some accompanying concepts.

\begin{defn}
A non-local game $\mathcal{G}$ is a tuple $(\mathcal{I}_A, \mathcal{I}_B, \mathcal{O}_A, \mathcal{O}_B, \pi, V) $ where $\mathcal{I}_A$ and $\mathcal{I}_B$ are finite question sets, $\mathcal{O}_A$ and $\mathcal{O}_B$ are finite answer sets, $\pi$ denotes the probability distribution on the set $\mathcal{I}_A \times \mathcal{I}_B$  and $V:\mathcal{I}_A \times \mathcal{I}_B \times \mathcal{O}_A \times \mathcal{O}_B \rightarrow \lbrace 0 , 1 \rbrace$ defines the winning conditions of the game. 

When the game begins, the referee chooses a pair $(i,j)\in \mathcal{I}_A \times \mathcal{I}_B$ according to the distribution $\pi$. The referee sends $i$ to Alice and $j$ to Bob. Alice then responds with $a\in \mathcal{O}_A$ and Bob with $b \in \mathcal{O}_B$. The players win if and only if $V(i,j,a,b) = 1$.    
\end{defn}

A \emph{classical strategy} is defined by a pair of functions $f_A: \mathcal{I}_A \rightarrow \mathcal{O}_A$ for Alice and $f_B: \mathcal{I}_B \rightarrow \mathcal{O}_B$ for Bob. The winning probability of this strategy is $$\sum_{i,j} \pi(i,j)V(i,j,f_A(i),f_B(j)).$$ The \emph{classical value}, $\nu(\mathcal{G})$, of a game is the supremum of this quantity over all classical strategies $(f_A,f_B)$. 

A \emph{quantum strategy} $\mathcal{S}$ for $\mathcal{G}$ is given by Hilbert spaces $\mathcal{H}_A$, $\mathcal{H}_B$, a state $\ket{\psi} \in \mathcal{H}_A \otimes \mathcal{H}_B$, and projective measurements $\lbrace E_{i,a} \rbrace_{a\in \mathcal{O}_A} \subset \Proj(\mathcal{H}_A)$ and $\lbrace F_{j,b} \rbrace_{b\in \mathcal{O}_B} \subset \Proj(\mathcal{H}_B)$ for all $i\in \mathcal{I}_A$ and $j \in \mathcal{I}_B$. 

Alice and Bob each have access to Hilbert spaces $\mathcal{H}_A$ and $\mathcal{H}_B$ respectively. On input $(i,j)$, Alice and Bob measure their share of the state $\ket{\psi}$ according to $\lbrace E_{i,a} \rbrace_{a\in \mathcal{O}_A}$ and $\lbrace F_{j,b} \rbrace_{b\in \mathcal{O}_B}$. The probability of obtaining outcome $a,b$ is given by
$\bra{\psi} E_{i,a} \otimes F_{j,b} \ket{\psi}.$ The winning probability of strategy $\mathcal{S}$, denoted by $\nu(\mathcal{G},\mathcal{S})$ is therefore $$\nu(\mathcal{G},\mathcal{S}) = \sum_{i,j,a,b} \pi(i,j)\bra{\psi} E_{i,a} \otimes F_{j,b} \ket{\psi}V(i,j,a,b).$$ The quantum value of a game, written $\nu^*( \mathcal{G})$, is the supremum of the winning probability over all quantum strategies.

The famous CHSH game \cite{CHSH} is the tuple $(\mathcal{I}_A, \mathcal{I}_B, \mathcal{O}_A, \mathcal{O}_B, \pi, V)$ where $\mathcal{I}_A = \mathcal{I}_B = \mathcal{O}_A=\mathcal{O}_B = \{0,1\}$, $\pi$ is the uniform distribution on $\mathcal{I}_A\times \mathcal{I}_B$, and $V(i,j,a,b)=1$ if and only if
\[ a +b \equiv i j \mod 2 .\] 
The CHSH game has a classical value of $0.75$ and a quantum value of $\frac{1}{2}+\frac{\sqrt{2}}{4} \approx 0.85$ \cite{CHSH}.

A strategy $\mathcal{S}$ is optimal if $\nu(\mathcal{G}, \mathcal{S}) = \nu^{*}(\mathcal{G})$. When a game's quantum value is larger than the classical value we say that the game exhibits \emph{quantum advantage}. A game is \emph{pseudo-telepathic} if it exhibits quantum advantage and its quantum value is $1$.

An \emph{order-$n$ generalized observable} is a unitary $U$ for which $U^n = I$. It is customary to assign an order-$n$ generalized observable to a projective measurement system $\lbrace E_0, \dots , E_{n-1} \rbrace$ as
\[ A= \sum^{n-1}_{i=0} \omega_n^{i}E_i. \]
Conversely, if $A$ is an order-$n$ generalized observable, then we can recover a projective measurement system $\{ E_0, \dots, E_{n-1} \}$ where
\[ E_i = \frac{1}{n}\sum^{n-1}_{k=0} \left( \omega_n^{-i} A\right)^k. \]
In this paper, present strategies in terms of generalized observables.

Consider the strategy $\mathcal{S}$ consisting of the shared state $\ket{\psi}\in \mathcal{H}_A\otimes \mathcal{H}_B$ and observables $\{A_i\}_{i\in \mathcal{I}_A}$ and $\{B_j\}_{j\in \mathcal{I}_B}$ for Alice and Bob. We say the game $\mathcal{G}$ is a \emph{self-test} for the strategy $\mathcal{S}$ if there exist $\epsilon_0 \geq 0$ and $\delta: \real^+\to\real^+$  a continuous function with $\delta(0) = 0$, such that the following hold
\begin{enumerate}
    \item $\mathcal{S}$ is optimal for $\mathcal{G}$.
    \item For any $0 \leq \epsilon \leq \epsilon_0$ and any strategy $\tilde{\mathcal{S}} = (\{\tilde{A}_i\}_{i\in \mathcal{I}_A},\{\tilde{B_j}\}_{j\in \mathcal{I}_B},\ket{\tilde{\psi}})$ where $\ket{\tilde{\psi}}\in \tilde{\mathcal{H}}_A\otimes \tilde{\mathcal{H}}_B$ and $\nu(\mathcal{G},\tilde{\mathcal{S}}) \geq \nu^\ast(\mathcal{G}) - \epsilon$, there exist local isometries $V_A$ and $V_B$, and a state $\ket{\text{junk}}$ such that the following hold
    \begin{itemize}
        \item $\bignorm{V_A\otimes V_B \ket{\tilde{\psi}} - \ket{\psi}\ket{\text{junk}}} \leq \delta(\epsilon)$,
        \item $\bignorm{V_A \tilde{A}_i \otimes V_B \ket{\tilde{\psi}} - \paren{A_i\otimes I \ket{\psi}}\ket{\text{junk}}} \leq \delta(\epsilon)$ for all $i \in \mathcal{I}_A$,
        \item $\bignorm{V_A\otimes V_B \tilde{B_j} \ket{\tilde{\psi}} - \paren{I\otimes B_j\ket{\psi}}\ket{\text{junk}}} \leq \delta(\epsilon)$ for all $j \in \mathcal{I}_B$.
    \end{itemize}
\end{enumerate}
We use the terminology \emph{rigidity} and self-testing interchangeably. \emph{Exact rigidity} is a weaker notion in which, we only require the second condition to hold for $\epsilon = 0$. In Section \ref{sec:sos_prelim}, we give as an example the proof of exact rigidity of the CHSH game. 

\subsection{Linear constraint system games}\label{LCS Prelims}
A \emph{linear constraint system} (LCS) game is a non-local game in which Alice and Bob cooperate to convince the referee that they have a solution to a system of linear equations over $\mathbb{Z}_n$. The referee sends Alice an equation and Bob a variable in that equation, uniformly at random. In response, Alice specifies an assignment to the variables in her equation and Bob specifies an assignment to his variable. The players win exactly when Alice's assignment satisfies her equation and Bob's assignment agrees with Alice. It follows that an LCS game has a perfect classical strategy if and only if the system of equations has a solution over $\integer_n$. Similarly the game has a perfect quantum strategy if and only if the system of equations, when viewed in the multiplicative form, has an \emph{operator solution} \cite{BCSTensor}. 

To each LCS game there corresponds a group referred to as the \emph{solution group}. The representation theory of solution group is an indispensable tool in studying pseudo-telepathic LCS games \cite{BCSCommuting,RobustRigidityLCS}. In what follows we define these terms formally, but the interested reader is encouraged to consult the references to appreciate the motivations. In this paper, we are interested in extending solution group formalism to general LCS games using the sum of squares approach. We explore this extension in Section \ref{Upperbounds}. When restriced to psuedo-telepathic LCS games, our SOS approach is identical to the solution group formalism. We present this in section \ref{BCSwithSOS} for completeness. 

Consider a system of linear equations $Ax=b$ where $A \in \mathbb{Z}_n^{r\times s}$, $b \in \mathbb{Z}_n^r$. We let $V_i$ denote the set of variables occurring in equation $i$ 
\[V_i = \lbrace j \in [s] : a_{i,j} \neq 0 \rbrace . \]
To view this system of linear equations in multiplicative form, we identify $\mathbb{Z}_n$ multiplicatively as $\{1,\omega_n,\ldots,\omega_n^{n-1}\}$. Then express the $i$th equation as
\[ \prod_{j \in V_{i}} x^{a_{ij}}_j = \omega_n^{b_{i}}. \]
In this paper we only use this multiplicative form. We let $S_i$ denote the set of satisfying assignments to equation $i$. 
In the LCS game $\mathcal{G}_{A,b}$, Alice receives an equation $i\in [r]$ and Bob receives a variable $j \in V_i$, uniformly at random. Alice responds with an assignment $x$ to variables in $V_i$ and Bob with an assignment $y$ to his variable $j$. They win if $x\in S_i$ and $x_j=y$.

The solution group $G_{A,b}$ associated with $\mathcal{G}_{A,b}$, is the group generated by $g_1, \dots, g_s, J,$ satisfying the relations
\begin{enumerate}
    \item $g^n_j=J^n=1$ for all $j$,
    \item $g_jJ=Jg_j$ for all $j$,
    \item $g_jg_k=g_kg_j$ for $j,k \in V_i$ for all $i$, and
    \item $\prod_{j \in V_i}g^{A_{ij}}_j = J^{b_{i}}$.
\end{enumerate}

\subsection{Gowers-Hatami theorem and its application to self-testing}\label{sec:gh_prelim}

In order to precisely state our results about self-testing in Section \ref{Upperbounds}, we recall the Gowers-Hatami theorem and $(\epsilon, \ket{\psi})$-representation \cite{GowersHatami,RobustRigidityLCS,VidickBlog}.
\begin{defn}\label{Defnrepresentation}
Let $G$ be a finite group, $n$ an integer, Hilbert spaces $\mathcal{H}_A,\mathcal{H}_B$ of dimension $n$, and $\ket{\psi}\in \mathcal{H}_A\otimes \mathcal{H}_B$ a state with the reduced density matrix $\sigma \in \Lin(\mathcal{H}_A)$. An $ (\epsilon, \ket{\psi})$-representation of $G$, for $\epsilon \geq 0$, is a function $f: G \rightarrow U_n(\complex)$ such that  
\begin{align}
 \mathbb{E}_{x,y} \Re \left( \iprod{f(x)^\ast f(y),f(x^{-1}y)}_{\sigma} \right)\geq 1-\epsilon.\label{representation_condition}
\end{align}
 
\end{defn}
In the case of $\epsilon =0$, we abbreviate and call such a map a $\ket{\psi}$-representation, in which case the condition \ref{representation_condition} simplifies to
\begin{align*}
 \iprod{f(x)^\ast f(y),f(x^{-1}y)}_{\sigma}=1,
\end{align*}
or equivalently 
\begin{align}
 f(y)^*f(x)f(x^{-1}y)\ket{\psi}= \ket{\psi},\label{simplified_representation_condition}
\end{align}
for all $x,y\in G$. In Condition (\ref{simplified_representation_condition}), we are implicitly dropping the tensor with identity on $\mathcal{H}_B$. Note that a $\ket{\psi}$-representation $f$ is just a group representation 
when restricted to the Hilbert space $\mathcal{H}_0 = \text{span} \lbrace f(g) \ket{\psi}: g \in G \rbrace$, i.e., the Hilbert space generated by the image of $f$ acting on $\ket{\psi}$. To see this, we first rewrite (\ref{simplified_representation_condition}) as \[ f(x^{-1}y)\ket{\psi} = f(x)^*f(y)\ket{\psi}.\] Thus for any $x,y \in G$ we have 
\begin{align*}  f(x^{-1})^*f(x^{-1}y)\ket{\psi} =f(xx^{-1}y)\ket{\psi}= f(y)\ket{\psi}. \end{align*}
We can multiply both sides by $f(x^{-1})$ to obtain $ f(x^{-1}y)\ket{\psi}= f(x^{-1})f(y)\ket{\psi}$ for all $x,y \in G$ or equivalently
\begin{align}\label{simplified_representation_condition_2}
    f(x)f(y)\ket{\psi} = f(xy)\ket{\psi} \text{ for all } x,y \in G.
\end{align}
This shows that for all $x \in G$, the operator $f(x)$ leaves the subspace $H_0$ invariant. Thus we can view $f(x)|_{H_{0}}$, the restriction of $f(x)$ to this subspace, as an element of $\Lin(H_0)$. Furthermore, by (\ref{simplified_representation_condition_2}), the map $x \mapsto f(x)|_{H_{0}}$ is a homormorphism and thus a representation of $G$ on $H_{0}$. 

We need the following special case of the Gowers-Hatami (GH) theorem as presented in \cite{VidickBlog}. The analysis of the robust rigidity of these games uses the general statement of GH, using $(\epsilon,\ket{\psi})$-representation. Although skipped in this paper, the tools are in place to analyse the robust case.

\begin{thm}[Gowers-Hatami]\label{gh_theorem} Let $d$ be an integer, $\ket{\psi} \in \complex^d\otimes \complex^d$ a bipartite state, $G$ a finite group, and $f:G\rightarrow \Unitary_d(\complex)$ a $\ket{\psi}$-representation. Then there exist $d'\geq d$, a representation $g: G\rightarrow \Unitary_{d'}(\complex)$, and an isometry $V:\complex^d\rightarrow \complex^{d'}$ such that $f(x)\otimes I\ket{\psi} = V^\ast g(x) V\otimes I\ket{\psi}$.
\end{thm}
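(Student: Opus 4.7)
The plan is to construct $g$ explicitly as (a tensor with) the right regular representation of $G$, and $V$ as the natural embedding twisted by $f$. Concretely, I would set $d' = \abs{G}\cdot d$, and take $g(h) = \rho_R(h) \otimes I_d$, where $\rho_R: G \to \Unitary_{\abs{G}}(\complex)$ acts on the canonical basis of $\complex^{\abs{G}}$ by $\rho_R(h)\ket{x} = \ket{x h^{-1}}$. A one-line check shows $\rho_R(h_1)\rho_R(h_2)\ket{x} = \ket{x(h_1 h_2)^{-1}} = \rho_R(h_1 h_2)\ket{x}$, so $g$ is a genuine unitary representation of $G$ on $\complex^{d'}$.

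Next I would define $V: \complex^d \to \complex^{\abs{G}} \otimes \complex^d$ by
\[V\ket{v} \;=\; \frac{1}{\sqrt{\abs{G}}}\sum_{x\in G} \ket{x} \otimes f(x)\ket{v}.\]
Unitarity of each $f(x)$ yields $V^{\ast} V = \frac{1}{\abs{G}}\sum_{x} f(x)^{\ast} f(x) = I_d$, so $V$ is an isometry, and $d' = \abs{G}\cdot d \geq d$ is immediate.

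The substantive step is verifying the identity $V^{\ast} g(h) V \otimes I_d \ket{\psi} = f(h) \otimes I_d \ket{\psi}$. Expanding $g(h)V\ket{v}$ and substituting $z = xh^{-1}$ on the first tensor factor, a direct computation gives
\[V^{\ast} g(h) V \;=\; \frac{1}{\abs{G}}\sum_{z\in G} f(z)^{\ast} f(zh).\]
Applying this to Alice's side of $\ket{\psi}$ and invoking the $\ket{\psi}$-representation identity $f(zh)\otimes I \ket{\psi} = f(z)f(h)\otimes I\ket{\psi}$ (the relation $f(x)f(y)\ket{\psi}=f(xy)\ket{\psi}$ established immediately before the theorem statement), each summand collapses to $f(z)^{\ast} f(z) f(h) \otimes I_d\ket{\psi} = f(h)\otimes I_d\ket{\psi}$, and averaging over $z$ preserves this, yielding the claim.

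The only subtlety I anticipate is a bookkeeping one: one must use the right regular representation rather than the left, since this is precisely what makes the product $f(zh)$ factor as $f(z)f(h)$ on $\ket{\psi}$ in the correct order to cancel against the $f(z)^{\ast}$ coming from $V^{\ast}$. Using the left regular representation would instead produce $f(h^{-1}z)$, whose expansion leaves an orphan $f(h^{-1})$ sandwiched between $f(z)^{\ast}$ and $f(z)$ and does not simplify. There is no genuine analytic obstacle in the exact case stated here; the approximate Gowers--Hatami theorem for $(\epsilon,\ket{\psi})$-representations (needed for robust self-testing) replaces the exact equality $V^{\ast}V = I_d$ by a near-isometry bound and requires a spectral perturbation argument to extract the honest representation $g$ from the near-invariance relation above.
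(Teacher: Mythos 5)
Your proof is correct, and it is worth noting that the paper itself does not prove this theorem at all: it imports it from Vidick's exposition and only records, as a remark after the statement, the specific form $g = \oplus_{\rho} I_d\otimes I_{d_\rho}\otimes \rho$ and the block factorization $Vu = \oplus_\rho V_\rho u$ with $\sum_\rho V_\rho^\ast V_\rho = I_d$ that the cited proof produces. The cited argument constructs $V$ from the matrix coefficients of the irreducible representations of $G$ (essentially the Fourier transform of $f$), whereas you work directly with the regular representation: your $g(h) = \rho_R(h)\otimes I_d$ and your twisted-diagonal isometry $V\ket{v} = \tfrac{1}{\sqrt{\abs{G}}}\sum_x \ket{x}\otimes f(x)\ket{v}$ are exactly the pre-Fourier-transform versions of the same objects, since by Peter--Weyl $\rho_R \cong \oplus_\rho \rho^{\oplus d_\rho}$. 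Your route is more elementary and fully self-contained for the exact ($\epsilon=0$) case stated here; the computation $V^\ast g(h)V = \tfrac{1}{\abs{G}}\sum_z f(z)^\ast f(zh)$ combined with $f(zh)\otimes I\ket{\psi} = f(z)f(h)\otimes I\ket{\psi}$ is a complete verification, your isometry check is right, and your remark about needing the right rather than the left regular representation is accurate (the left version leaves an average of conjugates $\tfrac{1}{\abs{G}}\sum_z f(z)^\ast f(h)^\ast f(z)\ket{\psi}$ that does not collapse). What the irrep-indexed construction buys, and what your version defers rather than eliminates, is the explicit direct-sum structure of $g$ and $V$ over irreducibles: Lemma \ref{lemma:application_of_gh} relies on decomposing the optimal strategy's winning probability into contributions $p_{\rho,\sigma}$ indexed by pairs of irreps, so if one adopts your construction one must still decompose $\rho_R\otimes I_d$ into irreducibles before that lemma can be run.
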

From the proof of this theorem in \cite{VidickBlog}, we can take $g = \oplus_{\rho} I_d\otimes I_{d_\rho}\otimes \rho$ where $\rho$ ranges over irreducible representations of $G$ and $d_\rho$ is the dimension of $\rho$. Additionally, in the same bases, we can factorize $V$ into a direct sum over irreps such that $Vu = \oplus_{\rho} \paren{V_\rho u}, \text{ for all } u\in \complex^d$ where $V_{\rho} \in \Lin(\complex^d,\complex^d\otimes \complex^{d_\rho}\otimes \complex^{d_\rho})$ are some linear operators. It holds that $ \sum_\rho V_\rho^\ast V_\rho  = V^\ast V = I_{d}$. 

In some special cases, such as in our paper, we can restrict $g$ to be a single irreducible representation of $G$. In such cases we have a streamlined proof of self-testing. Lemma \ref{lemma:application_of_gh} below captures how GH is applied in proving self-testing in these cases.

Let $\mathcal{G}=(\mathcal{I}_A, \mathcal{I}_B, \mathcal{O}_A, \mathcal{O}_B, \pi, V)$ be a game, $G_A$ and $G_B$ be groups with generators $\{P_i\}_{i\in I_A}$ and $\{Q_j\}_{j\in I_B}$, $\hat{G}_A$ and $\hat{G}_B$ be free groups over $\{P_i\}_{i\in I_A}$ and $\{Q_j\}_{j\in I_B}$, and $\mathcal{S} = (\{A_i\},\{B_j\},\ket{\psi})$ be a strategy where $\ket{\psi} \in \complex^{d_A}\otimes \complex^{d_B}$. We define two functions $f_A^\mathcal{S} : \hat{G}_A \to \Unitary_{d_A}(\complex)$, $f_B^\mathcal{S}: \hat{G}_B \to \Unitary_{d_B}(\complex)$ where $f_A^{\mathcal{S}}(P_i) = A_i$ and $f_B^{\mathcal{S}}(Q_j) = B_j$ and they are extended homomorphically to all of $\hat{G}_A$ and $\hat{G}_B$, respectively. Suppose that the game $\mathcal{G}$ has the property that for every optimal strategy $\tilde{\mathcal{S}} = (\{\tilde{A}_i\},\{\tilde{B}_j\},\ket{\tilde{\psi}})$,  $f_A^{\tilde{\mathcal{S}}}$ and $f_B^{\tilde{\mathcal{S}}}$ are $\ket{\tilde{\psi}}$-representations for $G_A$ and $G_B$, respectively. 
 
 Now applying GH, for every optimal strategy $\tilde{\mathcal{S}}$, there exist representations $g_A, g_B$ of $G_A, G_B$, respectively, and isometries $V_A,V_B$ such that 
 \begin{align*}
 f_A^{\tilde{\mathcal{S}}}(x)\otimes I \ket{\tilde{\psi}} = V_A^\ast g_A(x) V_A \otimes I\ket{\tilde{\psi}} \text{ for all } x \in G_A,\\
 I \otimes f_B^{\tilde{\mathcal{S}}}(y)\ket{\tilde{\psi}} = I \otimes V_B^\ast g_B(y) V_B \ket{\tilde{\psi}}\text { for all } y \in G_B.
 \end{align*}
 Unfortunately this is not enough to establish rigidity for $\mathcal{G}$ as defined in Section \ref{sec:non-local-games}. To do this, we need and extra assumption on $\mathcal{G}$ that we deal with in the following lemma. 

For any pair of representations $\rho,\sigma$ of $G_A,G_B$ respectively, and state $\ket{\psi} \in \complex^{d_\sigma}\otimes \complex^{d_\rho}$, let $\mathcal{S}_{\rho,\sigma,\ket{\psi}} = (\{\rho(P_i)\}_{i\in \mathcal{I}_A},\{\sigma(Q_j)\}_{j\in \mathcal{I}_B},\ket{\psi})$ be the strategy induced by the pair of representations $(\rho,\sigma)$. Also let $\nu(\mathcal{G},\rho,\sigma) = \max_{\ket{\psi}} \nu(\mathcal{G},\mathcal{S}_{\rho,\sigma,\ket{\psi}})$. 

\begin{lem}\label{lemma:application_of_gh}
Suppose that there is only one pair of irreps $\bar{\rho},\bar{\sigma}$ for which $\nu(\mathcal{G},\bar{\rho},\bar{\sigma})=\nu^\ast(\mathcal{G})$. Additionally assume that $\ket{\psi}$ is the unique state (up to global phase) for which $\mathcal{S}_{\bar{\rho},\bar{\sigma},\ket{\psi}}$ is an optimal strategy. Let $\tilde{\mathcal{S}} = (\{\tilde{A}_i\},\{\tilde{B}_j\},\ket{\tilde{\psi}})$ be an optimal strategy of $\mathcal{G}$ such that $\ket{\tilde{\psi}} \in \complex^{d_A}\otimes \complex^{d_B}$, $f_A^{\tilde{\mathcal{S}}}$ and $f_B^{\tilde{\mathcal{S}}}$ are $\ket{\tilde{\psi}}$-representations for $G_A$ and $G_B$, respectively.
Then there exist isometries $V_A: \complex^{d_A}\to \complex^{d_A \abs{G_A}},V_B:\complex^{d_B}\to\complex^{d_B \abs{G_B}}$, and a state $\ket{\text{junk}}$ such that 
\begin{align*}
V_A\otimes V_B \ket{\tilde{\psi}} &= \ket{\text{junk}}\ket{\psi},\\
V_A \tilde{A}_i\otimes V_B \ket{\tilde{\psi}} &= \ket{\text{junk}} \bar{\rho}(P_i) \otimes I_{d_{\bar{\sigma}}} \ket{\psi},\\
V_A \otimes V_B\tilde{B}_j\ket{\tilde{\psi}} &= \ket{\text{junk}} I_{d_{\bar{\rho}}} \otimes \bar{\sigma}(Q_j) \ket{\psi},
\end{align*} 
for all $i\in I_A, j\in I_B$. 
\end{lem}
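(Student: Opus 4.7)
My plan is to apply the Gowers--Hatami theorem to Alice's and Bob's $\ket{\tilde{\psi}}$-representations and then leverage the uniqueness of the optimal irrep pair $(\bar{\rho}, \bar{\sigma})$ and of its optimal state to force $V_A \otimes V_B \ket{\tilde{\psi}}$ into the desired product form. Since $f_A^{\tilde{\mathcal{S}}}$ and $f_B^{\tilde{\mathcal{S}}}$ are $\ket{\tilde{\psi}}$-representations, Theorem~\ref{gh_theorem} yields representations $g_A = \bigoplus_\rho I_{d_A} \otimes I_{d_\rho} \otimes \rho$ and $g_B = \bigoplus_\sigma I_{d_B} \otimes I_{d_\sigma} \otimes \sigma$, together with isometries $V_A, V_B$ satisfying $f_A(x) \otimes I \ket{\tilde{\psi}} = V_A^\ast g_A(x) V_A \otimes I \ket{\tilde{\psi}}$ and its Bob analogue. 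Applying the two relations on disjoint tensor factors gives the joint identity $(f_A(x) \otimes f_B(y)) \ket{\tilde{\psi}} = (V_A^\ast g_A(x) V_A \otimes V_B^\ast g_B(y) V_B) \ket{\tilde{\psi}}$; since the measurement projectors are polynomials in the generalized observables, this lifts to an identification of matrix elements of the projectors, yielding
\[
\nu(\mathcal{G}, \tilde{\mathcal{S}}) = \nu\bigl(\mathcal{G}, (\{g_A(P_i)\}, \{g_B(Q_j)\}, \ket{\Psi})\bigr), \quad \ket{\Psi} := V_A \otimes V_B \ket{\tilde{\psi}}.
\]

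Next I would decompose $\ket{\Psi}$ according to the block structure of $g_A \otimes g_B$. Each $(\rho, \sigma)$-block has the form $(\complex^{d_A} \otimes \complex^{d_\rho}) \otimes \complex^{d_\rho} \otimes (\complex^{d_B} \otimes \complex^{d_\sigma}) \otimes \complex^{d_\sigma}$, with $g_A \otimes g_B$ acting trivially on the first and third factors (the multiplicity factors) and as $\rho \otimes \sigma$ on the second and fourth (the active factors). Writing $\ket{\Psi} = \sum_{\rho, \sigma} \ket{\phi_{\rho, \sigma}}$ and performing, within each block, a Schmidt-type decomposition $\ket{\phi_{\rho, \sigma}} = \sum_k \lambda_k^{\rho, \sigma} \ket{u_k^{\rho, \sigma}} \otimes \ket{v_k^{\rho, \sigma}}$ that separates multiplicity from active (so that $\ket{v_k^{\rho, \sigma}} \in \complex^{d_\rho} \otimes \complex^{d_\sigma}$), the winning probability becomes the convex combination
\[
\nu(\mathcal{G}, \tilde{\mathcal{S}}) = \sum_{\rho, \sigma, k} |\lambda_k^{\rho, \sigma}|^2 \, \nu(\mathcal{G}, \mathcal{S}_{\rho, \sigma, v_k^{\rho, \sigma}}).
\]

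Optimality of $\tilde{\mathcal{S}}$ forces every term with non-zero weight in this convex combination to equal $\nu^\ast(\mathcal{G})$. The first uniqueness hypothesis then rules out every irrep pair except $(\bar{\rho}, \bar{\sigma})$, confining $\ket{\Psi}$ to that single block. The second uniqueness hypothesis forces each $\ket{v_k^{\bar{\rho}, \bar{\sigma}}}$ to equal $\ket{\psi}$ up to a phase $e^{i\theta_k}$; absorbing these phases into the multiplicity vectors gives $\ket{\Psi} = \ket{\text{junk}} \otimes \ket{\psi}$ with $\ket{\text{junk}} := \sum_k \lambda_k^{\bar{\rho}, \bar{\sigma}} e^{i\theta_k} \ket{u_k^{\bar{\rho}, \bar{\sigma}}}$, which is exactly the first claim.

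For the operator identities, applying the lifted GH relation to $P_i$ and composing with $V_A \otimes V_B$ on the left yields $V_A \tilde{A}_i \otimes V_B \ket{\tilde{\psi}} = V_A V_A^\ast (g_A(P_i) \otimes I) \ket{\Psi}$, and since $g_A(P_i)$ acts on the $(\bar{\rho}, \bar{\sigma})$-block as identity on the multiplicity factors tensored with $\bar{\rho}(P_i) \otimes I_{d_{\bar{\sigma}}}$ on the active ones, the right-hand side equals $\ket{\text{junk}} (\bar{\rho}(P_i) \otimes I_{d_{\bar{\sigma}}}) \ket{\psi}$ modulo the projection $V_A V_A^\ast$; the analogous computation handles Bob. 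The one genuinely delicate point is this projection, because $\text{Im}(V_A)$ need not be $g_A$-invariant a priori. My plan to resolve this is to replace the GH isometries post hoc by ones whose images are tailored to the $(\bar{\rho}, \bar{\sigma})$-block in a way compatible with the $\bar{\rho}$- and $\bar{\sigma}$-actions, so that the projection becomes trivial on the vectors of interest. This bookkeeping is the sole technical obstacle; the representation-theoretic content resides entirely in Gowers--Hatami and the two uniqueness hypotheses.
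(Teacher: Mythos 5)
Your argument tracks the paper's proof closely up to the final step: apply Gowers--Hatami to both players, decompose the winning probability of $\tilde{\mathcal{S}}$ as a convex combination over irrep blocks, and invoke the two uniqueness hypotheses to confine $V_A\otimes V_B\ket{\tilde{\psi}}$ to the $(\bar{\rho},\bar{\sigma})$-block with the active factor equal to $\ket{\psi}$. Your extra Schmidt-type decomposition within each block is a slightly more explicit version of what the paper does implicitly when it passes from the uniqueness of the optimal state to $\ket{\tilde{\psi}_{\bar{\rho},\bar{\sigma}}} = \ket{\text{junk}'}\ket{\psi}$; that part is sound.

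The genuine gap is exactly the point you flag and then defer: the projection $V_AV_A^\ast$ in $V_A\tilde{A}_i\otimes V_B\ket{\tilde{\psi}} = \paren{V_AV_A^\ast\otimes I}\, g_A(P_i)V_A\otimes V_B\ket{\tilde{\psi}}$. Your proposed fix --- replacing the GH isometries post hoc by ones adapted to the $(\bar{\rho},\bar{\sigma})$-block --- is not carried out, and it is not innocuous: any replacement isometry must still satisfy the intertwining identity $f_A^{\tilde{\mathcal{S}}}(x)\otimes I\ket{\tilde{\psi}} = V_A^\ast g_A(x)V_A\otimes I\ket{\tilde{\psi}}$ delivered by Gowers--Hatami, which you would have to re-establish from scratch. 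In fact no modification is needed. Since $V_A,V_B$ are isometries and $\tilde{A}_i$, $g_A(P_i)$ are unitary, both $V_A\tilde{A}_i\otimes V_B\ket{\tilde{\psi}}$ and $g_A(P_i)V_A\otimes V_B\ket{\tilde{\psi}}$ are unit vectors, and the first is the image of the second under the orthogonal projection $V_AV_A^\ast\otimes I$; a projection cannot carry one unit vector to another unless it fixes it, so the two vectors coincide and the projection drops out. After that, the block computation you describe (identity on multiplicity factors, $\bar{\rho}(P_i)\otimes I_{d_{\bar{\sigma}}}$ on the active ones) finishes the proof; this one-line norm argument is precisely how the paper closes the step you left open.
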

\begin{proof}
For simplicity, we only prove the case of binary games, i.e., we assume $\abs{\mathcal{O}_A}=\abs{\mathcal{O}_B}=2$. The general case follows similarly. For binary games we only need to consider strategies comprised of binary observables ($A$ is a binary observable if it is Hermitian and $A^2 = I$). Without loss of generality, we can assume that there exist some complex numbers $\lambda_{ij},\lambda_i,\lambda_j,\lambda$ such that for any strategy $\mathcal{S} = (\{A_i\},\{B_j\},\ket{\psi})$  
\begin{align}\label{eq:winning probability in bias form}
\nu(\mathcal{G},\mathcal{S}) = \bra{\psi}\Paren{\sum_{i\in I_A,j \in I_B} \lambda_{ij}A_i \otimes B_j +\sum_{i\in I_A} \lambda_{i}A_i \otimes I+ \sum_{j \in I_B} \lambda_{j}I \otimes B_j + \lambda I\otimes I}\ket{\psi}.
\end{align}

As argued earlier, by GH, we have
\begin{align}f_A^{\tilde{\mathcal{S}}}(x)\otimes I \ket{\tilde{\psi}} &= V_A^\ast g_A(x) V_A \otimes I\ket{\tilde{\psi}},\label{eq:first consequence of gh 1}\\
I \otimes f_B^{\tilde{\mathcal{S}}}(x) \ket{\tilde{\psi}} &= I\otimes V_B^\ast g_B(x) V_B\ket{\tilde{\psi}}\label{eq:first consequence of gh 2},
\end{align}
where $g_A = \oplus_{\rho} I_{d_A d_\rho}\otimes \rho,g_B = \oplus_{\sigma} I_{d_B d_\sigma}\otimes \sigma,$ where $\rho$ and $\sigma$ range over irreducible representations of $G_A$ and $G_B$, respectively. We also have the factorization $V_A u = \oplus_{\rho} \paren{V_{A,\rho} u}, \text{ for all } u\in \complex^{d_A}$ as well as $V_B u = \oplus_{\sigma} \paren{V_{B,\sigma} u}, \text{ for all } u\in \complex^{d_B}$. As mentioned above in the discussion that followed Theorem \ref{gh_theorem}, $V_{A,\rho}$ and $V_{B,\sigma}$ are some linear operators for which $\sum_{\rho} V_{A,\rho}^\ast V_{A,\rho} = I_{d_A}$ and $\sum_{\sigma}V_{B,\sigma}^\ast V_{B,\sigma} = I_{d_B}$. 

We want to write the winning probability of $\tilde{\mathcal{S}}$ in terms of the winning probabilities of irrep strategies. To this end, let 
\begin{align*}
p_{\rho,\sigma} &= \norm{V_{A,\rho}\otimes V_{B,\sigma}\ket{\tilde{\psi}}}^2,\\ \ket{\tilde{\psi}_{\rho,\sigma}} &= \begin{cases}\frac{1}{\sqrt{p_{\rho,\sigma}}}V_{A,\rho}\otimes V_{B,\sigma}\ket{\tilde{\psi}} &\quad p_{\rho,\sigma} > 0,\\
0 &\quad p_{\rho,\sigma} = 0,
\end{cases}
\end{align*}
and consider strategies $$\mathcal{S}_{I\otimes\rho,I\otimes\sigma, \ket{\tilde{\psi}_{\rho,\sigma}}} = \paren{\{I_{d_A d_\rho}\otimes\rho(P_i)\},\{I_{d_B d_\sigma}\otimes\sigma(Q_j)\},\ket{\tilde{\psi}_{\rho,\sigma}}}.$$ Using (\ref{eq:winning probability in bias form}), we can write
\begin{align*}
\nu(\mathcal{G},\tilde{\mathcal{S}}) &= \bra{\tilde{\psi}}\Paren{\sum_{i\in I_A,j \in I_B} \lambda_{ij}\tilde{A}_i \otimes \tilde{B}_j +\sum_{i\in I_A} \lambda_{i}\tilde{A}_i \otimes I+ \sum_{j \in I_B} \lambda_{j}I \otimes \tilde{B}_j + \lambda I\otimes I}\ket{\tilde{\psi}}\\
&= \sum_{\rho,\sigma}\bra{\tilde{\psi}}V_{A,\rho}^\ast \otimes V_{B,\sigma}^\ast\Bigl(\sum_{i\in I_A,j \in I_B} \lambda_{ij} \paren{I_{d_A d_\rho}\otimes \rho({P}_i)}\otimes \paren{ I_{d_B d_\sigma}\otimes\sigma({Q}_j)} +\sum_{i\in I_A} \lambda_{i}\paren{I_{d_A d_\rho}\otimes \rho({P}_i)} \otimes I \\
& \quad \quad \quad \quad \quad \quad \quad \quad \quad \quad+  \sum_{j \in I_B} \lambda_{j}I \otimes  \paren{I_{d_B d_\sigma}\otimes\sigma({Q}_j)} + \lambda I\otimes I\Bigr)V_{A,\rho}\otimes V_{B,\sigma}\ket{\tilde{\psi}}\\
&= \sum_{\rho,\sigma} p_{\rho,\sigma} \nu(\mathcal{G},\mathcal{S}_{I\otimes\rho,I\otimes\sigma, \ket{\tilde{\psi}_{\rho,\sigma}}}).
\end{align*}
Note that $\sum_{\rho,\sigma} p_{\rho,\sigma} = 1$.
In other words, the winning probability of $\tilde{\mathcal{S}}$ is a convex combination of the winning probabilities of irreducible strategies $\mathcal{S}_{I\otimes\rho,I\otimes\sigma, \ket{\tilde{\psi}_{\rho,\sigma}}}$. It is easily verified that $\nu(\mathcal{G},\mathcal{S}_{I\otimes\rho,I\otimes\sigma, \ket{\tilde{\psi}_{\rho,\sigma}}}) \leq \nu(\mathcal{G},\rho,\sigma)$. By assumption of the lemma $\nu(\mathcal{G},\rho,\sigma) < \nu^\ast(\mathcal{G})$ except when $(\rho,\sigma) = (\bar{\rho},\bar{\sigma})$. Now since $\tilde{\mathcal{S}}$ is an optimal strategy, we have $$p_{\rho,\sigma}= \begin{cases} 1 &\quad (\rho,\sigma) = (\bar{\rho},\bar{\sigma}),\\0 &\quad \text{otherwise}.\end{cases}$$
Therefore $\nu(\mathcal{G},\tilde{\mathcal{S}}) = \nu(\mathcal{G},\mathcal{S}_{I\otimes\rho,I\otimes\sigma, \ket{\tilde{\psi}_{\rho,\sigma}}})$ and hence $\mathcal{S}_{I\otimes\bar{\rho},I\otimes\bar{\sigma}, \ket{\tilde{\psi}_{\bar{\rho},
\bar{\sigma}}}}$ is an optimal strategy. From the assumption of the lemma ,$\ket{\psi}$ is the unique state optimizing the strategy induced by $(\bar{\rho},\bar{\sigma})$. Therefore $\ket{\tilde{\psi}_{\bar{\rho},\bar{\sigma}}} = \ket{\text{junk}'}\ket{\psi}$ where both $\ket{\text{junk}'}$ and $\ket{\psi}$ are shared between Alice and Bob such that $\ket{\text{junk}'}$ is the state of the register upon which the identities of Alice and Bob in the operators $(I\otimes \rho)_A \otimes (I\otimes \sigma)_B$ are applied.
In summary
\begin{align}\label{eq:states}
\ket{\tilde{\psi}_{\rho,\sigma}}= \begin{cases} \ket{\text{junk}'}\ket{\psi} &\quad (\rho,\sigma) = (\bar{\rho},\bar{\sigma}),\\0 &\quad \text{otherwise}.\end{cases}
\end{align}

Now using (\ref{eq:first consequence of gh 1}), it follows that
\begin{align*}
\tilde{A}_i\otimes V_B \ket{\tilde{\psi}} &= V_{A}^\ast g_A(P_i) V_{A} \otimes V_B\ket{\tilde{\psi}},
\end{align*} 
from which
\begin{align*}
V_A\tilde{A}_i\otimes V_B \ket{\tilde{\psi}} &= V_AV_{A}^\ast g_A(P_i) V_{A} \otimes V_B\ket{\tilde{\psi}}.
\end{align*} 
Since $V_A V_A^\ast$ is a projection and $V_A\tilde{A}_i\otimes V_B \ket{\tilde{\psi}}$ and $g_A(P_i) V_{A} \otimes V_B\ket{\tilde{\psi}}$ are both unit vectors, it holds that
\begin{align*}
V_A\tilde{A}_i\otimes V_B \ket{\tilde{\psi}} &= g_A(P_i) V_{A} \otimes V_B\ket{\tilde{\psi}}\\
&=  \bigoplus_{\rho,\sigma}\paren{I_{d_A d_\rho}\otimes\rho(P_i)}\otimes I_{d_B d_\sigma^2}\ket{\tilde{\psi}_{\rho,\sigma}}\\
&= \Paren{\ket{\text{junk}'}\bar{\rho}(P_i)\otimes I_{d_{\bar{\sigma}}}\ket{\psi}}\oplus_{(\rho,\sigma)\neq (\bar{\rho},\bar{\sigma})} 0_{d_A d_{\rho}^2 d_B d_\sigma^2}\\
&= \ket{\text{junk}}\bar{\rho}(P_i)\otimes I_{d_{\bar{\sigma}}}\ket{\psi},
\end{align*} 
where the third equality follows from (\ref{eq:states}), and in the fourth equality $\ket{\text{junk}} = \ket{\text{junk}'}\oplus 0$ where $0 \in \complex^{d_Ad_B\paren{\frac{\abs{G_A}\abs{G_B}}{d_{\bar{\rho}} d_{\bar{\sigma}}}- d_{\bar{\rho}} d_{\bar{\sigma}}}}$. Note that $d_Ad_B\paren{\frac{\abs{G_A}\abs{G_B}}{d_{\bar{\rho}} d_{\bar{\sigma}}}- d_{\bar{\rho}} d_{\bar{\sigma}}}$ is a positive integer because the degree of an irreducible representation divides the order of the group.
\end{proof}

\begin{cor}\label{corollary: application of gh}
If in addition to the assumptions of Lemma \ref{lemma:application_of_gh}, it holds that for every optimal strategy $\tilde{\mathcal{S}} = (\{\tilde{A}_i\},\{\tilde{B}_j\},\ket{\tilde{\psi}})$, $f_A^{\tilde{\mathcal{S}}}$ and $f_B^{\tilde{\mathcal{S}}}$ are $\ket{\tilde{\psi}}$-representations, then $\mathcal{G}$ is a self-test for the strategy $\mathcal{S}_{\bar{\rho},\bar{\sigma},\ket{\psi}}$.
\end{cor}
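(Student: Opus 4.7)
The plan is essentially a direct reduction to Lemma \ref{lemma:application_of_gh}. The corollary's extra hypothesis does exactly the work needed to promote the lemma (which starts by assuming $f_A^{\tilde{\mathcal{S}}}, f_B^{\tilde{\mathcal{S}}}$ are $\ket{\tilde{\psi}}$-representations for a \emph{given} optimal strategy) into a statement that applies to \emph{every} optimal strategy. Once this is in hand, there is no serious analytic obstacle; the main thing to do is to verify that the three equalities produced by the lemma line up with the three conditions in the definition of self-testing.

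First I would take an arbitrary optimal strategy $\tilde{\mathcal{S}} = (\{\tilde{A}_i\}, \{\tilde{B}_j\}, \ket{\tilde{\psi}})$ of $\mathcal{G}$. By the additional hypothesis of the corollary, $f_A^{\tilde{\mathcal{S}}} : \hat{G}_A \to \Unitary_{d_A}(\complex)$ and $f_B^{\tilde{\mathcal{S}}} : \hat{G}_B \to \Unitary_{d_B}(\complex)$ descend to $\ket{\tilde{\psi}}$-representations of $G_A$ and $G_B$ respectively. Hence all hypotheses of Lemma \ref{lemma:application_of_gh} are met, and that lemma produces isometries $V_A : \complex^{d_A} \to \complex^{d_A \abs{G_A}}$, $V_B : \complex^{d_B} \to \complex^{d_B \abs{G_B}}$ and a state $\ket{\text{junk}}$ for which
\begin{align*}
V_A \otimes V_B \ket{\tilde{\psi}} &= \ket{\text{junk}}\ket{\psi}, \\
V_A \tilde{A}_i \otimes V_B \ket{\tilde{\psi}} &= \ket{\text{junk}}\,\bar{\rho}(P_i)\otimes I_{d_{\bar{\sigma}}}\ket{\psi}, \\
V_A \otimes V_B \tilde{B}_j \ket{\tilde{\psi}} &= \ket{\text{junk}}\,I_{d_{\bar{\rho}}}\otimes \bar{\sigma}(Q_j)\ket{\psi},
\end{align*}
for all $i \in \mathcal{I}_A, j \in \mathcal{I}_B$. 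These are precisely the three conditions from the second bullet of the self-testing definition in Section \ref{sec:non-local-games}, evaluated at $\epsilon = 0$, with $\delta \equiv 0$ (any continuous $\delta$ with $\delta(0)=0$ then works on $[0, \epsilon_0]$ for $\epsilon_0 = 0$).

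It remains only to verify condition (1) of the self-test definition, namely that the target strategy $\mathcal{S}_{\bar{\rho}, \bar{\sigma}, \ket{\psi}}$ is itself optimal for $\mathcal{G}$. This is immediate from the hypothesis of Lemma \ref{lemma:application_of_gh} that $\nu(\mathcal{G}, \bar{\rho}, \bar{\sigma}) = \nu^\ast(\mathcal{G})$ together with the choice of $\ket{\psi}$ as the state achieving this maximum. So the corollary follows essentially without additional work. The one conceptual point worth stating clearly in the write-up is that the corollary delivers only exact rigidity (the $\epsilon = 0$ case); a robust self-testing statement would require the $(\epsilon, \ket{\tilde{\psi}})$-representation version of Gowers-Hatami and a quantitative version of Lemma \ref{lemma:application_of_gh}, which is outside the scope of the present exact formulation.
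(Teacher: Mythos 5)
Your proposal is correct and matches the paper's (implicit) argument: the paper states the corollary without proof as an immediate consequence of Lemma \ref{lemma:application_of_gh}, and your reduction — apply the lemma to an arbitrary optimal strategy, match its three equalities to the self-test definition at $\epsilon=0$, and note optimality of $\mathcal{S}_{\bar{\rho},\bar{\sigma},\ket{\psi}}$ from the lemma's hypotheses — is exactly that consequence. Your closing remark about exact versus robust rigidity also mirrors the paper's own comment immediately following the corollary.
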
 

Note that all these results can be stated robustly using the notion of ($\epsilon,\ket{\psi}$)-representation, but in this paper we focus our attention on exact rigidity. In this paper we use SOS to obtain the extra assumption of Corollary \ref{corollary: application of gh} as seen in Sections \ref{sec:sos_prelim} and \ref{Upperbounds}.

\section{A generalization of CHSH}\label{Zn Games}

The CHSH game can also be viewed as an LCS game where the linear system, over multiplicative $\integer_2$, is given by
\begin{align*}
    &x_0x_1=1,  \\
    &x_0x_1=-1.
\end{align*}
The CHSH viewed as an LCS is first considered in \cite{BCSTensor}. We generalize this to a game $\mathcal{G}_n$ over $\integer_n$ for each $n \geq 2$
\begin{align*}
    &x_0x_1=1,  \\
    &x_0x_1=\omega_n.
\end{align*}
As is the case for $\mathcal{G}_2 = CHSH$, the classical value of $\mathcal{G}_n$ is easily seen to be $0.75$. In Section \ref{LowerBounds}, we exhibit quantum advantage by presenting a strategy $\mathcal{S}_n$ showing that $\nu^*(\mathcal{G}_n) \geq \nu(\mathcal{G}_n,\mathcal{S}_n) = \frac{1}{2} + \frac{1}{2n\sin\Paren{ \frac{\pi}{2n} } } > \frac{1}{2} + \frac{1}{\pi} \approx 0.81$. In Section \ref{Group generated by solutions}, we present the group $G_n$ generated by the operators in $\mathcal{S}_n$.
In Section \ref{Upperbounds}, we show that $\mathcal{G}_3$ is a self-test, and conjecture that this is true for all $n\geq 2$.

As defined in the preliminaries, conventionally, in an LCS game, Alice has to respond with an assignment to all variables in her equation. It is in Alice's best interest to always respond with a satisfying assignment. Therefore, the referee could always determine Alice's assignment to $x_1$ from her assignment to $x_0$. Hence, without loss of generality, in our games, Alice only responds with an assignment to $x_0$. 

Formally $\mathcal{G}_n = ([2],[2],\integer_n,\integer_n,\pi,V)$ where $\integer_n = \{1,\omega_n,\ldots,\omega_n^{n-1}\}$, $\pi$ is the uniform distribution on $[2]\times [2]$, and
\begin{align*}
    V(0,0,a,b) &= 1 \iff a = b,\\
    V(0,1,a,b) &= 1 \iff ab = 1,\\
    V(1,0,a,b) &= 1 \iff a = b,\\
    V(1,1,a,b) &= 1 \iff ab = \omega_n. 
\end{align*}

Consider the quantum strategy $\mathcal{S}$ given by the state $\ket{\psi}$, and projective measurements $\lbrace E_{0,a} \rbrace_{a\in [n]}$ and $\lbrace E_{1,a} \rbrace_{a\in [n]}$ for Alice, and $\lbrace F_{0,b} \rbrace_{b\in [n]}$ and $\lbrace F_{1,b} \rbrace_{b\in [n]}$ for Bob. Note that in our measurement systems, we identify outcome $a\in [n]$ with answer $\omega_n^a \in \integer_n$. As done in the preliminaries, define the generalized observables $A_0= \sum^{n-1}_{i=0} \omega_n^{i}E_{0,i}, A_1 = \sum^{n-1}_{i=0} \omega_n^{i}E_{1,i}, B_0= \sum^{n-1}_{i=0} \omega_n^{i}F_{0,i}, B_1 = \sum^{n-1}_{i=0} \omega_n^{i}F_{1,i}$. We derive an expression for the winning probability of this strategy in terms of the these generalized observables. We do so by introducing the bias operator

\[\mathcal{B}_n = \mathcal{B}_n(A_0,A_1,B_0,B_1) = \sum_{i=1}^{n-1} A_{0}^{i} B_{0}^{-i} + A_{0}^{i}  B_{1}^{i} + A_{1}^{i} B_{0}^{-i}  + \omega_n^{-i} A_{1}^{i}  B_{1}^{i},\]
in which we dropped the tensor product symbol between Alice and Bob's operators.

\begin{prop}\label{Bias equation theorem}
Given the strategy $\mathcal{S}$ above, it holds that $ \nu \left(\mathcal{G}_n, \mathcal{S} \right) = \frac{1}{4n}\bra{\psi}\mathcal{B}_n\ket{\psi} + \frac{1}{n}$.
\end{prop}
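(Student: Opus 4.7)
The plan is to directly expand the winning probability using the definition of the quantum value for a non-local game and then substitute the expression of each projector $E_{i,a}, F_{j,b}$ in terms of the generalized observables $A_i, B_j$. The hoped-for simplifications come from discrete Fourier orthogonality $\sum_{a\in[n]} \omega_n^{a(k-l)} = n\delta_{k\equiv l}$.

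First I would write out $\nu(\mathcal{G}_n,\mathcal{S})$ as $\frac{1}{4}$ times the sum of four terms, one per question pair $(i,j)\in[2]\times[2]$. Using the explicit winning predicate $V$ defined above, these are:
\begin{align*}
T_{00} &= \sum_{a\in[n]} \bra{\psi} E_{0,a}\otimes F_{0,a}\ket{\psi}, &\quad T_{01} &= \sum_{a\in[n]} \bra{\psi} E_{0,a}\otimes F_{1,-a}\ket{\psi},\\
T_{10} &= \sum_{a\in[n]} \bra{\psi} E_{1,a}\otimes F_{0,a}\ket{\psi}, &\quad T_{11} &= \sum_{a\in[n]} \bra{\psi} E_{1,a}\otimes F_{1,1-a}\ket{\psi},
\end{align*}
where indices on measurement operators are taken mod $n$.

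Next I would substitute $E_{i,a}=\frac{1}{n}\sum_{k=0}^{n-1}\omega_n^{-ak}A_i^k$ and $F_{j,b}=\frac{1}{n}\sum_{\ell=0}^{n-1}\omega_n^{-b\ell}B_j^\ell$ into each $T_{ij}$. For $T_{00}$, the sum over $a$ gives $\sum_a \omega_n^{-a(k+\ell)}=n\delta_{k+\ell\equiv 0}$, collapsing $T_{00}$ to $\frac{1}{n}\sum_{k=0}^{n-1}\bra{\psi}A_0^k\otimes B_0^{-k}\ket{\psi}$. An identical calculation gives $T_{10}=\frac{1}{n}\sum_k\bra{\psi}A_1^k\otimes B_0^{-k}\ket{\psi}$. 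For $T_{01}$, the exponent becomes $\omega_n^{-ak+a\ell}$, forcing $\ell=k$ and yielding $\frac{1}{n}\sum_k\bra{\psi}A_0^k\otimes B_1^k\ket{\psi}$. For $T_{11}$, the exponent is $\omega_n^{-ak-(1-a)\ell}$ which produces a factor $\omega_n^{-\ell}$ and again forces $\ell=k$, giving $\frac{1}{n}\sum_k\omega_n^{-k}\bra{\psi}A_1^k\otimes B_1^k\ket{\psi}$.

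Finally I would separate each sum into its $k=0$ contribution (which contributes $\bra{\psi}I\otimes I\ket{\psi}=1$) and the $k=1,\dots,n-1$ part (which is exactly the corresponding summand of $\mathcal{B}_n$). Adding and multiplying by $\frac{1}{4}$ gives
\[
\nu(\mathcal{G}_n,\mathcal{S}) = \frac{1}{4n}\Paren{4 + \bra{\psi}\mathcal{B}_n\ket{\psi}} = \frac{1}{4n}\bra{\psi}\mathcal{B}_n\ket{\psi}+\frac{1}{n}.
\]
There is no serious obstacle here; the only bookkeeping care is in keeping the answer labels mod $n$ straight (especially in $T_{01}$ and $T_{11}$ where $b\equiv -a$ or $b\equiv 1-a$) and in verifying the phase $\omega_n^{-k}$ on the $A_1^k\otimes B_1^k$ term, which is what distinguishes $\mathcal{G}_n$ from a trivial rearrangement of CHSH.
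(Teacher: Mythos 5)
Your proof is correct and is essentially the paper's own computation run in the reverse direction: the paper expands $\mathcal{B}_n + 4I$ in terms of the projectors via $A_i^k = \sum_a \omega_n^{ka}E_{i,a}$ and the identity $1+\omega_n+\cdots+\omega_n^{n-1}=0$, whereas you expand the winning probability in terms of the observables via the inverse relation, but both rest on the same Fourier orthogonality and yield the same bookkeeping. No gap; the approach matches the paper's.
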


\begin{proof}
\begin{align*}
\mathcal{B}_n + 4I &= \sum_{i=0}^{n-1} A_{0}^{i}  B_{0}^{-i} + A_{0}^{i}  B_{1}^{i} + A_{1}^{i}  B_{0}^{-i} + \omega_n^{-i} A_{1}^{i}  B_{1}^{i}\\
&=\sum_{i=0}^{n-1}\sum_{a,b=0}^{n-1}  \omega_n^{i(a-b)}E_{0,a}F_{0,b} + \omega_n^{i(a+b)}E_{0,a}F_{1,b} + \omega_n^{i(a-b)}E_{1,a}F_{0,b} + \omega_n^{i(a + b-1)}E_{1,a}F_{1,b}\\
&=\sum_{a,b=0}^{n-1}\sum_{i=0}^{n-1}  \omega_n^{i(a-b)}E_{0,a}F_{0,b} + \omega_n^{i(a+b)}E_{0,a}F_{1,b} + \omega_n^{i(a-b)}E_{1,a}F_{0,b} + \omega_n^{i(a + b - 1)}E_{1,a}F_{1,b}\\
&=n \sum_{a=0}^{n-1} E_{0,a} F_{0,a}+E_{0,a} F_{1,-a}+E_{1,a} F_{0,a}+E_{1,a} F_{1,1-a}
\end{align*}
in which in the last equality we used the identity $1+\omega_n+\ldots+\omega_n^{n-1}=0$. Also note that in $F_{1,-a}$ and $F_{1,1-a}$ second indices should be read mod $n$. Finally notice that $$\nu(\mathcal{G}, \mathcal{S}) = \frac{1}{4} \bra{\psi}\Paren{\sum_{a=0}^{n-1} E_{0,a} F_{0,a}+E_{0,a} F_{1,-a}+E_{1,a} F_{0,a}+E_{1,a} F_{1,1-a} }\ket{\psi}.$$
\end{proof}

\section{Strategies for \texorpdfstring{$\mathcal{G}_n$}{Gn}}\label{LowerBounds}

In this section, we present quantum strategies $\mathcal{S}_n$ for $\mathcal{G}_{n}$ games. 
In Section \ref{Analysis of winning strategy}, we show that $\nu(\mathcal{G}_n,\mathcal{S}_n) = \frac{1}{2} + \frac{1}{2n\sin\Paren{ \frac{\pi}{2n} } }$ and that this value approaches $\frac{1}{2} + \frac{1}{\pi}$ from above as $n$ tends to infinity. This lower bounds the quantum value $\nu^\ast(\mathcal{G}_n)$, and proves that these games exhibit quantum advantage with a constant gap $> \frac{1}{\pi}-\frac{1}{4}$. We also show that the states in these strategies have full-Schmidt rank. Furthermore the states tend to the maximally entangled state as $n\to\infty$.

We conjecture that $\mathcal{S}_n$ are optimal and that the games $\mathcal{G}_n$ are self-tests for $\mathcal{S}_n$. In Section \ref{Upperbounds}, we prove this for $n = 3$. Using the NPA hierarchy, we verify the optimality numerically up to $n=7$. If the self-testing conjecture is true, we have a family of games with one bit questions and $\log(n)$ bits answers, that self-test entangled states of local dimension $n$ for any $n$. 

\subsection{Definition of the strategy}\label{Observables}
Let $\sigma_{n} = (0 \, 1 \, 2 \, \dots \, n-1) \in S_{n}$ denote the cycle permutation that sends $i$ to $i+1\Mod{n}$. Let $z_{n} = \omega_{n}^{1/4} =  e^{i \pi/2n}$. Let $D_{n,j} = I_n - 2 e_j e_j^\ast$ be the diagonal matrix with $-1$ in the $(j,j)$ entry, and $1$ everywhere else in the diagonal. Then let $D_{n,S} := \prod_{j \in S} D_{n,j}$, where $S \subset [n]$. Finally, let $X_{n}$ be the shift operator (also known as the generalized Pauli $X$), i.e.,  $X_n e_i = e_{\sigma_n(i)}$. For convenience, we shall often drop the $n$ subscript when the dimension is clear from context, and so just refer to $z_n, D_{n,j}, D_{n,S}, X_n$ as $z, D_{j}, D_{S}, X$, respectively. 

Let $\mathcal{H}_A=\mathcal{H}_B = \complex^n$. Then Alice and Bob's shared state in $\mathcal{S}_n$ is defined to be
\[ \ket{\psi_n} = \frac{1}{\gamma_n} \sum_{i=0}^{n-1}(1 - z^{n+2i+1})\ket{\sigma^i(0), \sigma^{-i}(0)} \in \mathcal{H}_A\otimes \mathcal{H}_B,\]
where $\gamma_n = \sqrt{2n + \frac{2}{\sin\Paren{\frac{\pi}{2n}}}}$ is the normalization factor. The generalized observables in $\mathcal{S}_n$ are
\begin{align*}
    A_{0} &= X \\
    A_{1} &= z^{2} D_{0}X \\
    B_{0} &= X \\
    B_{1} &= z^{2} D_{0}X^{*}.
\end{align*}

\begin{ex} \label{example:CHSH}
In $\mathcal{S}_{2}$, Alice and Bob's observables are 
\begin{gather*}
    A_{0} = \sigma_x = \begin{pmatrix} 0 & 1 \\ 1 & 0 \end{pmatrix}, \quad A_{1} = \sigma_y = \begin{pmatrix} 0 & -i \\ i & 0 \end{pmatrix}, \\
    B_{0} = \sigma_x = \begin{pmatrix} 0 & 1 \\ 1 & 0 \end{pmatrix}, \quad B_{1} = \sigma_y = \begin{pmatrix} 0 & -i \\ i & 0 \end{pmatrix},
\end{gather*}
and their entangled state is 
\[ \ket{\psi_2} = \frac{1}{\sqrt{ 4 + 2\sqrt{2} }} \Paren{ \Paren{1 + \frac{1-i}{\sqrt{2}}}\ket{00} - \Paren{1 + \frac{1 + i}{\sqrt{2}}}\ket{11} }. \]

One can verify that this indeed give us the quantum value for CHSH $\frac{1}{2} + \frac{\sqrt{2}}{4}$. 
\end{ex}

\begin{ex}\label{z3_example}

In $\mathcal{S}_{3}$, Alice and Bob's observables are
\begin{gather*}
    A_{0} = \begin{pmatrix} 
0 & 0 & 1 \\
1 & 0 & 0 \\
0 & 1 & 0
\end{pmatrix}, \quad A_{1} = \begin{pmatrix} 
0 & 0 & -z^2 \\
z^2 & 0 & 0 \\
0 & z^2 & 0
\end{pmatrix}, \\
B_{0} = \begin{pmatrix} 
0 & 0 & 1 \\
1 & 0 & 0 \\
0 & 1 & 0
\end{pmatrix}, \quad B_{1} = \begin{pmatrix} 
0 & -z^2 & 0 \\
0 & 0 & z^2 \\
z^2 & 0 & 0
\end{pmatrix},
\end{gather*}
with the entangled state
\[ \ket{\psi_3} = \frac{1}{\sqrt{10}} \Paren{ (1 - z^{4})\ket{00} + 2\ket{12} + (1 + z^{2})\ket{21} }. \] 
One can compute that $\bra{\psi}\mathcal{B}_3\ket{\psi} = 6$. Hence, by Proposition \ref{Bias equation theorem}, we have $\nu^\ast(\mathcal{G}_{3}) \geq \frac{5}{6}$.

\end{ex}

\subsection{Analysis of the strategy}\label{Analysis of winning strategy}

In this section, we prove that $\mathcal{S}_n$ is a quantum strategy and calculate its winning probability. We then prove that the entanglement entropy of $\ket{\psi_n}$ approaches the maximum entropy as $n$ tends to infinity. 

\begin{prop} \label{lowerbound_lem_1}
For $n \in \mathbb{N}$, it holds that $\sum_{j=0}^{n-1} z_n^{2j+n+1} = \sum_{j=0}^{n-1} z_n^{-(2j+n+1)}$.
\end{prop}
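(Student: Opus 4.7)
The plan is to prove this by a simple index substitution, exploiting the fact that $z_n^{4n} = e^{2\pi i} = 1$. The claim is essentially saying that the sum $\sum_{j=0}^{n-1} z_n^{2j+n+1}$ is invariant under complex conjugation (since $\overline{z_n} = z_n^{-1}$, as $z_n$ lies on the unit circle), i.e., that the sum is real.

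First, I would perform the substitution $j \mapsto n-1-j$ in the left-hand sum. Under this change of index, the exponent transforms as
\[
2j + n + 1 \longmapsto 2(n-1-j) + n + 1 = 3n - 1 - 2j,
\]
so the left-hand side becomes $\sum_{j=0}^{n-1} z_n^{3n-1-2j}$. Next I would observe that $3n - 1 - 2j = -(2j + n + 1) + 4n$, and since $z_n = e^{i\pi/2n}$ satisfies $z_n^{4n} = e^{2\pi i} = 1$, this exponent can be reduced modulo $4n$ to $-(2j+n+1)$. Therefore
\[
\sum_{j=0}^{n-1} z_n^{2j+n+1} \;=\; \sum_{j=0}^{n-1} z_n^{3n-1-2j} \;=\; \sum_{j=0}^{n-1} z_n^{-(2j+n+1)},
\]
which is exactly the claimed identity.

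There is no genuine obstacle here; the only thing to be careful about is bookkeeping in the exponent and confirming that $z_n^{4n} = 1$ from the definition $z_n = \omega_n^{1/4} = e^{i\pi/2n}$. As an aside that will presumably be useful in the subsequent analysis of $\gamma_n$, I note that this identity really just says the sum is real; one can evaluate it in closed form as a geometric series, obtaining $-1/\sin(\pi/2n)$, which is consistent with the stated normalization $\gamma_n = \sqrt{2n + 2/\sin(\pi/2n)}$ appearing in the definition of $\ket{\psi_n}$.
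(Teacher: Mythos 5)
Your proof is correct, but it takes a different route from the paper's. The paper evaluates the sum as a geometric series, $\sum_{j=0}^{n-1} z^{2j+n+1} = \frac{2z^{n+1}}{1-z^{2}}$ (using $z^{2n}=-1$ to get the numerator), and then checks by direct algebra that this closed form is invariant under $z \mapsto z^{-1}$. You instead exhibit a term-by-term bijection: the reindexing $j \mapsto n-1-j$ sends the exponent $2j+n+1$ to $3n-1-2j = -(2j+n+1)+4n$, and $z_n^{4n}=1$ finishes the argument. Your bookkeeping is right at every step. The two approaches buy slightly different things: yours is purely combinatorial and makes transparent \emph{why} the sum is conjugation-invariant (the exponents $\{2j+n+1\}_{j}$ form a set symmetric about $2n$ modulo $4n$), whereas the paper's computation produces the closed form $\frac{2z^{n+1}}{1-z^2}$ as a byproduct, which feeds directly into the evaluation $-1/\sin(\pi/2n)$ in the next proposition. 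Your closing aside correctly anticipates that evaluation and its consistency with $\gamma_n$.
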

\begin{proof}
A direct computation gives
\[ \sum_{j=0}^{n-1} z^{2j+n+1} = \frac{2z^{n+1}}{1-z^{2}} = \frac{2z^{-n-1}}{1-z^{-2}} = \sum_{j=0}^{n-1} z^{-(2j+n+1)}, \]
where we have used the fact that $z^{2n} = -1$.
\end{proof}

\begin{prop}
\label{lowerbound_lem_2}
For $n \in \mathbb{N}$, it holds that $\sum_{j = 0}^{n-1} z_n^{2j+n+1} = -\frac{1}{ \sin\Paren{ \frac{\pi}{2n} } }.$
\end{prop}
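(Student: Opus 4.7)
The plan is to evaluate the sum directly by factoring out a power of $z_n$ to expose a geometric series, and then to simplify the resulting closed form using the fact that $z_n = e^{i\pi/(2n)}$ satisfies the clean identity $z_n^n = i$. This is the same overall strategy that already appeared in the proof of Proposition~\ref{lowerbound_lem_1}, which suggests the authors have this factorization in mind.

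Concretely, I would first write
\[
\sum_{j=0}^{n-1} z_n^{2j+n+1} \;=\; z_n^{n+1} \sum_{j=0}^{n-1} \paren{z_n^{2}}^{j}.
\]
The inner sum is a geometric series with common ratio $z_n^2 = e^{i\pi/n} \neq 1$, so by the geometric series formula and the identity $z_n^{2n} = e^{i\pi} = -1$ I obtain
\[
\sum_{j=0}^{n-1}\paren{z_n^2}^j \;=\; \frac{1 - z_n^{2n}}{1-z_n^2} \;=\; \frac{2}{1 - z_n^2}.
\]
Combining these two steps yields the closed form $\frac{2\,z_n^{n+1}}{1-z_n^{2}}$, which is the expression already appearing inside the calculation of Proposition~\ref{lowerbound_lem_1}.

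The final step is purely trigonometric bookkeeping. I would multiply numerator and denominator by $z_n^{-1}$ to symmetrize the denominator:
\[
\frac{2\,z_n^{n+1}}{1-z_n^{2}} \;=\; \frac{2\,z_n^{n}}{z_n^{-1}-z_n}.
\]
Now $z_n^n = e^{i\pi/2} = i$, while $z_n^{-1}-z_n = e^{-i\pi/(2n)} - e^{i\pi/(2n)} = -2i\sin\!\paren{\tfrac{\pi}{2n}}$. Substituting,
\[
\frac{2i}{-2i\sin\!\paren{\tfrac{\pi}{2n}}} \;=\; -\frac{1}{\sin\!\paren{\tfrac{\pi}{2n}}},
\]
which is exactly the claim.

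There is no real obstacle here; the statement is essentially a trigonometric identity and the only thing to be careful about is that $z_n^2 \neq 1$ (so the geometric series formula applies) and that the simplifications $z_n^n = i$ and $z_n^{-1}-z_n = -2i\sin(\pi/(2n))$ are used correctly. The result is worth isolating as a lemma because it will feed directly into computing the normalization $\gamma_n$ and the winning probability of $\mathcal{S}_n$ in the next results of Section~\ref{Analysis of winning strategy}.
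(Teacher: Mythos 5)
Your proof is correct, and it takes a genuinely different (and shorter) route than the paper's. The paper proves this identity by splitting into the cases of $n$ odd and $n$ even, re-centering the sum so that it becomes $\sum_j z^{2j}$ over a symmetric range of exponents, and then invoking the closed form of the Dirichlet kernel $\mathcal{D}_m(x)$ quoted in the preliminaries; the even case needs an extra boundary correction term $z^{n+1}$, which makes that branch somewhat fiddly. You instead sum the geometric series once and for all to get $\frac{2z_n^{n+1}}{1-z_n^2}$ (the same closed form the paper already records in the proof of Proposition \ref{lowerbound_lem_1}), and then finish with the two elementary identities $z_n^n = i$ and $z_n^{-1}-z_n = -2i\sin\Paren{\tfrac{\pi}{2n}}$. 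This avoids the parity case split entirely and makes the logical dependence cleaner, since the Dirichlet-kernel formula is itself just a packaged geometric series; the only hypotheses you use, $z_n^2\neq 1$ and $z_n^{2n}=-1$, hold for every $n\geq 1$. The paper's route has the mild advantage of reusing a stated preliminary identity verbatim, but your calculation is the more self-contained of the two and is fully rigorous as written.
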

\begin{proof}
We handle the even and odd case separately, and in both cases we use the well-known identity for the Dirichlet kernel mentioned in preliminaries. For odd $n$
\begin{align*}
-\sum_{j=0}^{n-1} z^{2j+n+1} &= \sum_{j=0}^{n-1} z^{2j-(n-1)} = \sum_{j=-\frac{n-1}{2}}^{\frac{n-1}{2}} z^{2j} = \sum_{j=-\frac{n-1}{2}}^{\frac{n-1}{2}} e^{\frac{\pi ij}{n}}\\ &= 2\pi \mathcal{D}_{\frac{n-1}{2}}\Paren{\frac{\pi}{n}} = \frac{ \sin\Paren{ \Paren{ \frac{n-1}{2} + \frac{1}{2} } \frac{\pi}{n} } }{\sin\Paren{ \frac{\pi}{2n} } } = \frac{ 1 }{ \sin\Paren{ \frac{\pi}{2n} } }.
\end{align*}
For even $n$
\begin{align*}
    -\sum_{j=0}^{n-1} z^{2j+n+1} &= z \sum_{j=0}^{n} z^{2j - n} - z^{n+1} = z \sum_{j= -\frac{n}{2}}^{\frac{n}{2}} z^{2j} - z^{n+1} = 2\pi z \mathcal{D}_{\frac{n}{2}} \Paren{\frac{\pi}{n}} - z^{n+1} \\
    & = \Paren{ \cos\Paren{ \frac{\pi}{2n} } + i \sin\Paren{ \frac{\pi}{2n} } } \frac{ \sin\Paren{ \Paren{ \frac{n}{2} + \frac{1}{2} }\frac{\pi}{n} }  }{ \sin\Paren{ \frac{\pi}{2n} } } - i\Paren{ \cos\Paren{ \frac{\pi}{2n} } + i \sin\Paren{ \frac{\pi}{2n} } } \\
    &= \frac{ \cos^{2}\Paren{ \frac{\pi}{2n} } + \sin^{2}\Paren{ \frac{\pi}{2n} } }{ \sin\Paren{ \frac{\pi}{2n} } } = \frac{ 1 }{ \sin\Paren{ \frac{\pi}{2n} } }.
\end{align*}

\end{proof}

Now let's observe a commutation relation between $D_{j}$ and $X^{k}$.

\begin{prop}\label{comm_relation}
$X^{i}D_{j} = D_{\sigma^{i}(j)}X^{i}$, for all $i, j \in [n]$. 
\end{prop}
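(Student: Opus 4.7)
The plan is to give a direct proof using the explicit description of $D_j$ as a rank-one perturbation of the identity, together with the action of $X$ on the standard basis. The cleanest route is to show the conjugation identity $X^i D_j X^{-i} = D_{\sigma^i(j)}$ and then multiply both sides by $X^i$ on the right; the stated equality follows immediately since $X$ is unitary and $X^{-i} X^i = I$.

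Concretely, I will first note that $X$ is a permutation matrix (the generalized Pauli shift), hence unitary, so $X^{-i} = (X^i)^{\ast}$. Writing $D_j = I_n - 2 e_j e_j^{\ast}$, conjugation gives
\[
X^i D_j X^{-i} \;=\; X^i (I_n - 2 e_j e_j^{\ast}) X^{-i} \;=\; I_n - 2 (X^i e_j)(X^i e_j)^{\ast}.
\]
By the defining property $X_n e_k = e_{\sigma_n(k)}$, iterating gives $X^i e_j = e_{\sigma^i(j)}$, so the right-hand side equals $I_n - 2 e_{\sigma^i(j)} e_{\sigma^i(j)}^{\ast} = D_{\sigma^i(j)}$. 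Multiplying on the right by $X^i$ yields $X^i D_j = D_{\sigma^i(j)} X^i$, which is exactly the claim.

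As a sanity check, one can verify this on basis vectors: if $k = j$ then $X^i D_j e_k = -e_{\sigma^i(j)}$ and $D_{\sigma^i(j)} X^i e_k = D_{\sigma^i(j)} e_{\sigma^i(j)} = -e_{\sigma^i(j)}$; if $k \neq j$ then $\sigma^i(k) \neq \sigma^i(j)$ (since $\sigma$ is a permutation), and both sides give $e_{\sigma^i(k)}$. There is no real obstacle here — the statement is essentially a restatement of the fact that the shift operator permutes the diagonal entries of a diagonal matrix according to $\sigma$, and the only thing to be careful about is keeping track of indices modulo $n$, which the cycle notation for $\sigma_n$ handles automatically.
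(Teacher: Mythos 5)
Your proof is correct, and it rests on the same essential fact as the paper's ($X e_k = e_{\sigma(k)}$ together with the structure of $D_j$); the paper simply reduces to $i=1$ and verifies $X D_j e_k = (-1)^{\delta_{j,k}} e_{\sigma(k)} = D_{\sigma(j)} X e_k$ on basis vectors, which is exactly your sanity check. Packaging the computation as conjugation of the rank-one perturbation $I_n - 2e_j e_j^{\ast}$ is a slightly slicker way to write the same argument, and handles general $i$ in one step rather than by induction.
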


\begin{proof}
It suffices to prove $XD_{j}=D_{\sigma(j)}X$. We show this by verifying $X D_{j}e_k = D_{\sigma(j)}X e_k$ for all $k \in [n]$. 

\begin{align*}
    X D_{j}e_k  = (-1)^{\delta_{j,k}} e_{\sigma(k)} = (-1)^{\delta_{\sigma(j),\sigma(k)}} e_{\sigma(k)} = D_{\sigma(j)}X e_k
\end{align*}
\end{proof}

Now we prove the strategy defined in section \ref{Observables} is a valid quantum strategy.

\begin{prop} 
$A_{0}, A_{1}, B_{0}, B_{1}$ are order-$n$ generalized observables and $\ket{\psi_n}$ is a unit vector. 
\end{prop}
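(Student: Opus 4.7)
The plan is to handle the observables and the state normalization separately; both reduce to short calculations using identities already established.

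For the observables, note that $A_0 = B_0 = X$ is the shift operator, which is manifestly unitary with $X^n = I$. For $A_1 = z^2 D_0 X$, unitarity is immediate from the fact that $z^2, D_0, X$ are all unitary. To check $A_1^n = I$, I would use Proposition \ref{comm_relation} to move all the $D$'s to the left: applying $X D_0 = D_{\sigma(0)} X = D_1 X$ repeatedly telescopes
\[ (D_0 X)^n = D_0 D_1 D_2 \cdots D_{n-1} X^n = D_{[n]} \cdot I = -I, \]
since $D_{[n]}$ negates every standard basis vector. Combined with $z^{2n} = -1$ (because $z_n = e^{i\pi/2n}$), this gives $A_1^n = z^{2n}(-I) = I$. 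The argument for $B_1 = z^2 D_0 X^{\ast}$ is identical after noting that $X^{\ast} D_j = D_{\sigma^{-1}(j)} X^{\ast}$ (the inverse form of Proposition \ref{comm_relation}), which telescopes $(D_0 X^{\ast})^n$ to $D_{\{0, -1, \dots, -(n-1)\}} I = D_{[n]} = -I$.

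For the state, the key observation is that the kets $\ket{\sigma^i(0), \sigma^{-i}(0)}$ for $i \in [n]$ are pairwise orthogonal, since distinct values of $i$ produce distinct first coordinates. Hence
\[ \|\ket{\psi_n}\|^2 = \frac{1}{\gamma_n^2} \sum_{i=0}^{n-1} \bigabs{1 - z^{n+2i+1}}^2. \]
Expanding $|1 - z^{n+2i+1}|^2 = 2 - z^{n+2i+1} - z^{-(n+2i+1)}$ and summing over $i$, I would then apply Proposition \ref{lowerbound_lem_1} to combine the two complex exponential sums into a single real quantity, followed by Proposition \ref{lowerbound_lem_2} to evaluate it explicitly:
\[ \sum_{i=0}^{n-1} \bigabs{1 - z^{n+2i+1}}^2 = 2n - 2\sum_{i=0}^{n-1} z^{n+2i+1} = 2n + \frac{2}{\sin\!\bigl(\tfrac{\pi}{2n}\bigr)} = \gamma_n^2. \]
This yields $\|\ket{\psi_n}\| = 1$ exactly, confirming the choice of the normalization constant.

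There is no real obstacle; the only spot requiring care is the telescoping of $(D_0 X)^n$, where one must track the indices correctly under $\sigma$ so as to obtain the product $\prod_{j \in [n]} D_j = -I$ rather than some other sign. Once this is handled, the rest is a direct combination of the two preceding propositions.
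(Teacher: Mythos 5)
Your proof is correct and follows essentially the same route as the paper: both arguments telescope $(D_0X)^n$ via the commutation relation $X^iD_j=D_{\sigma^i(j)}X^i$ to obtain $D_{[n]}=-I$ (and symmetrically for $X^\ast$), and both evaluate $\sum_i\lvert 1-z^{n+2i+1}\rvert^2$ using Proposition \ref{lowerbound_lem_2} to recover $\gamma_n^2$. Your explicit appeal to Proposition \ref{lowerbound_lem_1} to realify the exponential sum is a slightly cleaner bookkeeping of what the paper does via real and imaginary parts, but it is the same computation.
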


\begin{proof}

Observe that
\[ A_{0}^{n} = B_{0}^{n} = X^{n} = I,\]
also
\[ A_{1}^{n} = (z^{2} D_{0}X)^{n} = z^{2n} D_{\{0, \sigma^{1}(0), \dots,\sigma^{n-1}(0)\}} X^{n}  = (-1)(-I)I = I.\]
Similarly,
\[ B_{1}^{n} = (z^{2} D_{0} X^{*})^{n} = z^{2n} (X^{*})^{n} D_{\{0,\sigma^{1}(0), \dots,\sigma^{n-1}(0)\}} = (-1)I(-I) = I.\]

It is an easy observation that these operators are also unitary. To see that $\ket{\psi_n}$ is a unit vector write
\begin{align*}
\sum_{i=0}^{n-1} \abs{1-z^{n+2i+1}}^2 &= \sum_{i=0}^{n-1} \Paren{1-\cos\Paren{\frac{\pi(n+2i+1)}{2n}}}^2 +  \sin\Paren{\frac{\pi(n+2i+1)}{2n}}^2\\
&= \sum_{i=0}^{n-1} 2\Paren{1-\cos\Paren{\frac{\pi(n+2i+1)}{2n}}}\\
&= 2n - \sum_{i=0}^{n-1} \Re(z^{n+2i+1})\\
&= 2n + \frac{2}{\sin(\pi/2n)}\\
&= \gamma_n^2,
\end{align*}
where we have used Proposition \ref{lowerbound_lem_2} in the third equality.

\end{proof}

\begin{lem}

The entangled state $\ket{\psi}$ is an eigenvector for the bias $\mathcal{B} = \sum_{j=1}^{n-1} A_{0}^{j} B_{0}^{-j} + A_{0}^{j} B_{1}^{j} + A_{1}^{j} B_{0}^{-j} + z^{-4j} A_{1}^{j} B_{1}^{j}$ with eigenvalue $2n - 4 + \frac{2}{\sin\Paren{ \frac{\pi}{2n} } }$.

\end{lem}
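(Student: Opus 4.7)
The plan is to verify $\mathcal{B}_n \ket{\psi_n} = \lambda \ket{\psi_n}$ with $\lambda = 2n - 4 + 2/\sin(\pi/2n)$ by a direct calculation in the antidiagonal basis $\{\ket{i,-i} : i\in[n]\}$ on which $\ket{\psi_n}$ is supported. First, using Proposition \ref{comm_relation} together with an easy induction on $j$, I would derive the closed forms $A_1^j = z^{2j} D_{\{0,1,\ldots,j-1\}} X^j$ and $B_1^j = z^{2j} D_{\{0,-1,\ldots,-(j-1)\}} X^{-j}$, with the index sets read mod $n$. Each term in $\mathcal{B}_n$ is then of the shape $(\text{diagonal})\cdot(X^j\otimes X^{-j})$, which preserves the antidiagonal subspace since $(X^j\otimes X^{-j})\ket{i,-i} = \ket{i+j,\,-i-j}$.

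Next, writing $c_i := 1 - z^{n+2i+1}$ for the unnormalised coefficients of $\ket{\psi_n}$ and setting $k = i+j \bmod n$, I would compute, for each of the four term types in $\mathcal{B}_n$, the scalar coefficient of $\ket{k,-k}$ produced by applying that term to $c_i \ket{i,-i}$. A short case analysis on whether the shifted indices land in $\{0,\ldots,j-1\}$ or $\{0,-1,\ldots,-(j-1)\}$ reveals the following key simplification: in the fourth term $z^{-4j}A_1^jB_1^j$ the signs coming from the two $D_S$ operators on Alice's and Bob's sides coincide and therefore cancel, while the two cross terms $A_0^jB_1^j$ and $A_1^jB_0^{-j}$ each pick up the same single sign $\epsilon_{j,k} := (-1)^{[k\in\{0,\ldots,j-1\}]}$. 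Summing the four contributions yields the clean expression $2c_{k-j}\bigl(1 + \epsilon_{j,k}\, z^{2j}\bigr)$ for the contribution of term $j$ to the coefficient of $\ket{k,-k}$.

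Finally, I would reindex $m = k - j \bmod n$ and use $z^{2n} = -1$ to consolidate the two cases $j\le k$ and $j>k$: in each case the sign flip in $z^{2j}$ combines with $\epsilon_{j,k}$ to give the uniform factor $\bigl(1 + z^{2(k-m)}\bigr)$, so the total coefficient of $\ket{k,-k}$ equals $2\sum_{m\neq k} c_m\bigl(1 + z^{2(k-m)}\bigr)$. Propositions \ref{lowerbound_lem_1}--\ref{lowerbound_lem_2} then evaluate $\sum_m c_m = n + 1/\sin(\pi/2n) = \gamma_n^2/2$ and $\sum_m c_m z^{-2m} = -z^{n+1}\gamma_n^2/2$, so the whole sum collapses to $\gamma_n^2(1 - z^{n+2k+1}) - 4c_k = (\gamma_n^2 - 4)c_k = \lambda c_k$, which is exactly the coefficient of $\ket{k,-k}$ in $\lambda\ket{\psi_n}$.

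The main obstacle I expect is the sign bookkeeping in the middle step: one must carefully check, for each pair $(j,k)$, which shifted indices fall inside the index sets of $D_{\{0,\ldots,j-1\}}$ and $D_{\{0,-1,\ldots,-(j-1)\}}$, and verify the crucial coincidence that the two cross-term signs are equal while the two signs in the fourth term cancel. Once this combinatorial step is in place, the reindexing and the two geometric-series identities from Propositions \ref{lowerbound_lem_1}--\ref{lowerbound_lem_2} make the eigenvalue fall out cleanly.
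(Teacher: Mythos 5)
Your proposal is correct and follows essentially the same route as the paper's proof: the sign coincidence you identify (the two cross terms picking up the same $D_{[j]}$-sign while the fourth term's signs cancel) is exactly the content of the paper's inner sub-lemma, and the subsequent reindexing with $z^{2n}=-1$ and the appeal to Propositions \ref{lowerbound_lem_1}--\ref{lowerbound_lem_2} mirror the paper's change of variables $r=i+j$ and its evaluation of $\sum_i(1-z^{2i+n+1})(1+z^{2(r-i)})$. The only difference is presentational — you track the coefficient of each $\ket{k,-k}$ directly rather than manipulating the full vector sum — and your stated values $\sum_m c_m=\gamma_n^2/2$ and $\sum_m c_m z^{-2m}=-z^{n+1}\gamma_n^2/2$ check out.
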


\begin{proof}
For the sake of brevity, we drop the normalization factor $\gamma_n$ in the derivation below, and let $\ket{\varphi} = \gamma_n\ket{\psi_n}$. We write

\begin{align*}
\mathcal{B}\ket{\varphi} &= \Paren{\sum_{j=1}^{n-1} A_{0}^{j} \otimes B_{0}^{-j} + A_{0}^{j} \otimes B_{1}^{j} + A_{1}^{j} \otimes B_{0}^{-j} + z^{-4j} A_{1}^{j} \otimes B_{1}^{j} }\ket{\varphi} \\
&= \Paren{\sum_{j=1}^{n-1} \Paren{X \otimes X^{*}}^{j} + z^{2j} \Paren{X \otimes D_{0} X^{*}}^{j} + z^{2j}\paren{D_{0} X \otimes X^{*}}^{j} + \Paren{D_{0} X \otimes D_{0} X^{*}}^{j} }\ket{\varphi}.
\end{align*}

\begin{lem}  $\Paren{X \otimes D_{0} X^{*}}^{j} \ket{\varphi} = \paren{D_{0} X \otimes X^{*}}^{j}\ket{\varphi}$ and $\Paren{X \otimes X^{*}}^{j} \ket{\varphi} = \paren{D_{0} X \otimes D_{0} X^{*}}^{j}\ket{\varphi}$.
\end{lem}
\begin{proof}
It suffices to show these identities for $j=1$ on states $\ket{\sigma^{i}(0),\sigma^{-i}(0)}$, for all $i$, in place of $\ket{\varphi}$. The result then follows by simple induction. In other words, we prove
\begin{align*}
\Paren{X \otimes D_{0} X^{*}} \ket{\sigma^{i}(0),\sigma^{-i}(0)} = \paren{D_{0} X \otimes X^{*}}\ket{\sigma^{i}(0),\sigma^{-i}(0)},\\
\Paren{X \otimes X^{*}} \ket{\sigma^{i}(0),\sigma^{-i}(0)} = \paren{D_{0} X \otimes D_{0}X^{*}}\ket{\sigma^{i}(0),\sigma^{-i}(0)}.
\end{align*}
Note that $I\otimes D_0 \ket{\sigma^{i+1}(0),\sigma^{-i-1}(0)} = D_0 \otimes I \ket{\sigma^{i+1}(0),\sigma^{-i-1}(0)}$ since $-i-1 = 0 \Mod{n}$ iff $i+1 = 0\Mod{n}$. Therefore
\begin{align*}
    \Paren{X \otimes D_{0} X^{*}} \ket{\sigma^{i}(0),\sigma^{-i}(0)} &=
    \Paren{I \otimes D_{0}} \ket{\sigma^{i+1}(0),\sigma^{-i-1}(0)}\\ &=
    \Paren{D_{0} \otimes I} \ket{\sigma^{i+1}(0),\sigma^{-i-1}(0)}\\ &= \paren{D_{0} X \otimes X^{*}}\ket{\sigma^{i}(0),\sigma^{-i}(0)}.
\end{align*}
The other identity follows similarly.
\end{proof}
Now we write
\begin{align*}
\mathcal{B}\ket{\varphi} &= 2\Paren{\sum_{j=1}^{n-1} \Paren{X \otimes X^{*}}^{j} + z^{2j}\paren{D_{0} X \otimes X^{*}}^{j} }\ket{\varphi}\\
&= 2\sum_{j=1}^{n-1} \Paren{1 + z^{2j}\paren{D_{[j]} \otimes I }}(X \otimes X^*)^j\ket{\varphi}\\
&= 2\sum_{j=1}^{n-1}\sum_{i=0}^{n-1} \Paren{1 - z^{2i+n+1} } \Paren{1 + z^{2j}\paren{D_{[j]} \otimes I }}(X \otimes X^*)^j\ket{\sigma^i(0),\sigma^{-i}(0)}\\
&= 2\sum_{j=1}^{n-1}\sum_{i=0}^{n-1} \Paren{1 - z^{2i+n+1} } \Paren{1 + z^{2j}\paren{D_{[j]} \otimes I }}\ket{\sigma^{i+j}(0),\sigma^{-(i+j)}(0)},
\end{align*}
where in the second equality we use Proposition \ref{comm_relation}, and in the third equality we just expanded $\ket{\varphi}$. Note that 
\begin{align*}
\paren{D_{[j]}\otimes I}\ket{\sigma^{i+j}(0),\sigma^{-(i+j)}(0)} = \begin{cases}
-\ket{\sigma^{i+j}(0),\sigma^{-(i+j)}(0)}\quad &i\in [n-j,n-1],\\
\ket{\sigma^{i+j}(0),\sigma^{-(i+j)}(0)}\quad &i\in [0,n-j-1],
\end{cases}
\end{align*}
and we use this to split the sum
\begin{align*}
\mathcal{B}\ket{\varphi} = 2\sum_{j=1}^{n-1}& \Bigg(\sum_{i=0}^{n-j-1}  \Paren{1 - z^{2i+n+1} } \Paren{1 + z^{2j}}\ket{\sigma^{i+j}(0),\sigma^{-(i+j)}(0)}\\&+ \sum_{i=n-j}^{n-1} \Paren{1 - z^{2i+n+1}} \Paren{1 - z^{2j}} \ket{\sigma^{i+j}(0),\sigma^{-(i+j)}(0)}\Bigg)\\
= 2\sum_{i=0}^{n-1}& \Bigg(\sum_{j=1}^{n-i-1}  \Paren{1 - z^{2i+n+1} } \Paren{1 + z^{2j}}\ket{\sigma^{i+j}(0),\sigma^{-(i+j)}(0)}\\&+ \sum_{j=n-i}^{n-1} \Paren{1 - z^{2i+n+1}} \Paren{1 - z^{2j}} \ket{\sigma^{i+j}(0),\sigma^{-(i+j)}(0)}\Bigg),
\end{align*}
and make a change of variable $r = i+j$ to get
\begin{align*}
\mathcal{B}\ket{\varphi} = 2\sum_{i=0}^{n-1}& \Bigg(\sum_{r=i+1}^{n-1}  \Paren{1 - z^{2i+n+1} } \Paren{1 + z^{2(r-i)}}\ket{\sigma^{r}(0),\sigma^{-r}(0)}\\&+ \sum_{r=n}^{n+i-1} \Paren{1 - z^{2i+n+1}} \Paren{1 - z^{2(r-i)}} \ket{\sigma^{r}(0),\sigma^{-r}(0)}\Bigg).
\end{align*}
We have $z^{2(r-i)} = z^{2(r-n+n-i)} = z^{2n}z^{2(r-n-i)}=-z^{2(r-n-i)}$ and $\sigma^r(0) = \sigma^{r+n}(0)$, so by another change of variable in the second sum where we are summing over $r=[n,n+i-1]$ we obtain
\begin{align*}
\mathcal{B}\ket{\varphi} = 2\sum_{i=0}^{n-1}& \Bigg(\sum_{r=i+1}^{n-1}  \Paren{1 - z^{2i+n+1} } \Paren{1 + z^{2(r-i)}}\ket{\sigma^{r}(0),\sigma^{-r}(0)}\\&+ \sum_{r=0}^{i-1} \Paren{1 - z^{2i+n+1}} \Paren{1 + z^{2(r-i)}} \ket{\sigma^{r}(0),\sigma^{-r}(0)}\Bigg) \\
= 2\sum_{i=0}^{n-1} &\Paren{ \sum_{r=0}^{n-1} \Paren{1 - z^{2i+n+1}}\Paren{1 + z^{2(r-i)}}\ket{\sigma^{r}(0) \sigma^{-r}(0)} - 2\Paren{1 - z^{2i+n+1}}\ket{\sigma^{i}(0) \sigma^{-i}(0)} } \\
= 2\sum_{i=0}^{n-1} &\Paren{ \sum_{r=0}^{n-1} \Paren{1 - z^{2i+n+1}}\Paren{1 + z^{2(r-i)}}\ket{\sigma^{r}(0) \sigma^{-r}(0)}} - 4\ket{\varphi}  \\
= 2\sum_{r=0}^{n-1}&\ket{\sigma^{r}(0) \sigma^{-r}(0)} \Paren{ \sum_{i=0}^{n-1} \Paren{1 - z^{2i+n+1}}\Paren{1 + z^{2(r-i)}}} - 4\ket{\varphi}.
\end{align*}
We also have
\begin{align*}
    \sum_{i=0}^{n-1} \Paren{1 - z^{2i+n+1}}\Paren{1 + z^{2(r-i)}} &= \sum_{i=0}^{n-1} 1 - z^{2r+n+1} + z^{2(r-i)} - z^{2i+n+1}\\
    &= \sum_{i=0}^{n-1} 1 - z^{2r+n+1} + z^{2(r-i)} - z^{-(2i+n+1)}\\
    &= (1-z^{2r+n+1})\sum_{i=0}^{n-1} 1 - z^{-(2i+n+1)}\\
    &= \left(n + \frac{1}{\sin(\frac{\pi}{2n})} \right)(1-z^{2r+n+1}),
\end{align*}
where in the second and last equality we used Propositions \ref{lowerbound_lem_1} and \ref{lowerbound_lem_2}, respectively. Putting these together, we obtain
\begin{align*}
\mathcal{B}\ket{\varphi} &= 2\left(n + \frac{1}{\sin(\frac{\pi}{2n})}\right) \sum_{r=0}^{n-1}(1-z^{2r+n+1}) \ket{\sigma^{r}(0) \sigma^{-r}(0)}  - 4\ket{\varphi}\\
&=\left(2n -4 + \frac{2}{\sin(\frac{\pi}{2n})}\right)\ket{\varphi}.
\end{align*}
\end{proof}

\begin{figure}[H]
\centering
\begin{tikzpicture}
\begin{axis}[
    xlabel = $n$,
    ylabel = {$\nu(\mathcal{G}_n, \mathcal{S}_n)$},
    xmin=1.5, xmax=40.5,
    ymin=0.81, ymax=0.860,
    xtick={2,10,20,30,40},
    ytick={.81, 0.82,0.83,0.84,0.85,0.86},
    yticklabel style={/pgf/number format/fixed, /pgf/number format/precision=4}
]
 
\addplot[
    color=black,
    only marks,
    mark=o,
    ]
    coordinates {
   (2,0.853553) (3,0.833333) (4,0.826641) (5,0.823607) (6,0.821975) (7,0.820997) (8,0.820364) (9,0.819932) (10,0.819623) (11,0.819394) (12,0.819221) (13,0.819086) (14,0.818979) (15,0.818892) (16,0.818822) (17,0.818763) (18,0.818714) (19,0.818673) (20,0.818637) (21,0.818607) (22,0.818581) (23,0.818557) (24,0.818537) (25,0.818519) (26,0.818504) (27,0.818490) (28,0.818477) (29,0.818466) (30,0.818455) (31,0.818446) (32,0.818438) (33,0.818430) (34,0.818423) (35,0.818417) (36,0.818411) (37,0.818406) (38,0.818401) (39,0.818396) (40,0.818392)
    };
 
\end{axis}
\end{tikzpicture}
\hskip 5pt
\begin{tikzpicture}
\begin{axis}[
    xlabel = $n$,
    ylabel = {$S_{\psi_n}/\log(n)$},
    xmin=1.5, xmax=40.5,
    ymin=0.99, ymax=1,
    xtick={2,10,20,30,40},
    yticklabel style={/pgf/number format/fixed, /pgf/number format/precision=4},
]
 
\addplot[
    color=black,
    only marks,
    mark=o,
    ]
    coordinates {
    (2,1.000000) (3,0.991159) (4,0.990294) (5,0.990534) (6,0.990951) (7,0.991362) (8,0.991731) (9,0.992053) (10,0.992334) (11,0.992580) (12,0.992796) (13,0.992988) (14,0.993160) (15,0.993315) (16,0.993454) (17,0.993582) (18,0.993698) (19,0.993805) (20,0.993904) (21,0.993995) (22,0.994080) (23,0.994160) (24,0.994234) (25,0.994303) (26,0.994369) (27,0.994431) (28,0.994489) (29,0.994545) (30,0.994597) (31,0.994647) (32,0.994695) (33,0.994740) (34,0.994783) (35,0.994825) (36,0.994864) (37,0.994902) (38,0.994939) (39,0.994974) (40,0.995008)
    };
\end{axis}

\end{tikzpicture}
\caption{The figure on the left illustrates the fast convergence rate of the winning probabilities as they approach the limit  $1/2+1/\pi$. The figure on the right illustrates the ratio of the entanglement entropy to the maximum entanglement entropy of the states for $n \leq 40$.} \label{fig:convergence_rate}
\end{figure}
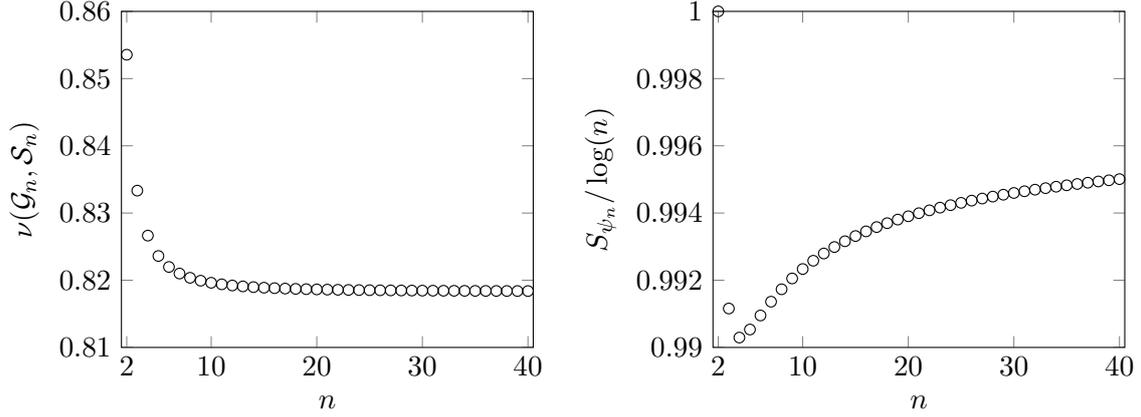

Next we calculate $\nu(\mathcal{G}_n,\mathcal{S}_n)$, its limit as $n$ grows and the entanglement entropy of states $\ket{\psi_n}$. See Figure \ref{fig:convergence_rate}.

\begin{thm}\label{LowerBoundThm}
$\nu(\mathcal{G}_n,\mathcal{S}_n) = \frac{1}{2} + \frac{1}{2n\sin\Paren{ \frac{\pi}{2n} } }.$
\end{thm}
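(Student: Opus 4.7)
The plan is to combine the two results established immediately prior to this theorem. By Proposition~\ref{Bias equation theorem}, the winning probability of any strategy $\mathcal{S}=(\{A_i\},\{B_j\},\ket{\psi})$ for $\mathcal{G}_n$ equals $\frac{1}{4n}\bra{\psi}\mathcal{B}_n\ket{\psi}+\frac{1}{n}$. Moreover, the preceding lemma shows that the state $\ket{\psi_n}$ used in the strategy $\mathcal{S}_n$ is an eigenvector of the bias operator $\mathcal{B}_n$ with eigenvalue $2n-4+\frac{2}{\sin(\pi/2n)}$. So the proof reduces to substituting this eigenvalue into the bias formula and simplifying.

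Concretely, I would write
\begin{align*}
\nu(\mathcal{G}_n,\mathcal{S}_n) &= \frac{1}{4n}\bra{\psi_n}\mathcal{B}_n\ket{\psi_n} + \frac{1}{n} \\
&= \frac{1}{4n}\Paren{2n-4+\frac{2}{\sin(\pi/2n)}} + \frac{1}{n} \\
&= \frac{1}{2} - \frac{1}{n} + \frac{1}{2n\sin(\pi/2n)} + \frac{1}{n} \\
&= \frac{1}{2} + \frac{1}{2n\sin(\pi/2n)},
\end{align*}
which is exactly the claimed expression.

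There is essentially no obstacle here: the entire content of the theorem has already been carried by the bias identity and the eigenvalue computation in the previous lemma. The only thing worth remarking on, perhaps in a brief comment after the displayed chain of equalities, is the sanity check that $\frac{1}{2n\sin(\pi/2n)}>\frac{1}{4}$ for every $n\geq 2$ (so that the value strictly exceeds the classical value $3/4$), and that this quantity tends to $\frac{1}{\pi}$ as $n\to\infty$, matching the limiting behaviour advertised earlier and illustrated in Figure~\ref{fig:convergence_rate}. Both follow immediately from the standard inequality $\sin(x)<x$ on $(0,\pi/2]$ and from $\lim_{x\to 0}\sin(x)/x=1$, respectively.
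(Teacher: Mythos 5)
Your proposal is correct and follows exactly the paper's own argument: apply Proposition~\ref{Bias equation theorem} to reduce the winning probability to $\frac{1}{4n}\bra{\psi_n}\mathcal{B}_n\ket{\psi_n}+\frac{1}{n}$, substitute the eigenvalue $2n-4+\frac{2}{\sin(\pi/2n)}$ from the preceding lemma, and simplify. The closing remarks about the quantum advantage and the limit are not part of the paper's proof of this theorem (they appear in the subsequent theorem) but are harmless.
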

\begin{proof}
\begin{align*}
\nu(\mathcal{G}_n,\mathcal{S}_n) &= \frac{1}{4n}\bra{\psi}\mathcal{B}\ket{\psi} + \frac{1}{n}\\
&= \frac{1}{4n}\bra{\psi} \Paren{ 2n - 4 + \frac{2}{\sin\Paren{\frac{\pi}{2n}} } }\ket{\psi} + \frac{1}{n} \\
&= \frac{1}{4n}\Paren{ 2n - 4 + \frac{2}{\sin\Paren{ \frac{\pi}{2n} } } } + \frac{1}{n} \\
&= \frac{1}{2} + \frac{1}{2n\sin\Paren{ \frac{\pi}{2n} } }. 
\end{align*} 
\end{proof}

\begin{thm} The following hold
\begin{enumerate}
    \item $\lim_{n\to \infty} \nu(\mathcal{G}_n,\mathcal{S}_n) = 1/2+1/\pi$.
    \item $\nu(\mathcal{G}_n,\mathcal{S}_n)$ is a strictly decreasing function.
    \item The games $\mathcal{G}_n$ exhibit quantum advantage, i.e., for $n > 1$ $$\nu^\ast(\mathcal{G}_{n}) > 1/2+1/\pi > 3/4 = \nu(\mathcal{G}_{n}).$$
\end{enumerate}
\end{thm}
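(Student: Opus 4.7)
The plan is to base all three parts on the closed-form expression
\[
\nu(\mathcal{G}_n,\mathcal{S}_n) \;=\; \frac{1}{2} + \frac{1}{2n\sin\!\bigl(\pi/(2n)\bigr)}
\]
proved in Theorem \ref{LowerBoundThm}, and to work with the substitution $x = \pi/(2n) \in (0,\pi/4]$, which rewrites the nontrivial term as $1/(2n\sin(\pi/(2n))) = x/(\pi\sin x)$. Under this substitution, the three parts reduce to easy properties of the single real function $h(x) = x/\sin x$ on $(0,\pi)$.

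For part (1), since $n\to\infty$ corresponds to $x\to 0^+$, I would invoke the standard limit $\lim_{x\to 0}\sin(x)/x = 1$, which yields $h(x)\to 1$ and hence $\nu(\mathcal{G}_n,\mathcal{S}_n) \to 1/2 + 1/\pi$.

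For part (2), the goal is to prove $h(x) = x/\sin x$ is strictly increasing on $(0,\pi)$; since $n\mapsto \pi/(2n)$ is strictly decreasing in $n$, this immediately yields strict monotonicity of $\nu(\mathcal{G}_n,\mathcal{S}_n)$ in $n$. Compute
\[
h'(x) \;=\; \frac{\sin x - x\cos x}{\sin^2 x},
\]
and observe that the numerator $\phi(x) := \sin x - x\cos x$ satisfies $\phi(0)=0$ and $\phi'(x) = x\sin x > 0$ on $(0,\pi)$; hence $\phi(x) > 0$ on $(0,\pi)$, so $h' > 0$ there. Since $\pi/(2n) \in (0,\pi/4] \subset (0,\pi)$ for every $n\geq 2$, this suffices.

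For part (3), combine the previous two: monotonicity plus the limit imply $\nu(\mathcal{G}_n,\mathcal{S}_n) > 1/2 + 1/\pi$ for every finite $n > 1$, and the trivial inequality $\nu^\ast(\mathcal{G}_n) \geq \nu(\mathcal{G}_n,\mathcal{S}_n)$ completes one side. For the other, the inequality $1/2 + 1/\pi > 3/4$ is equivalent to $\pi < 4$, which is elementary. There is no real obstacle in this theorem: the entire argument is a one-variable calculus exercise once the closed form of Theorem \ref{LowerBoundThm} is in hand, and the only mild care needed is to verify that the relevant arguments $\pi/(2n)$ all lie in the interval on which $h$ is shown to be strictly increasing.
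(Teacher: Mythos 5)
Your proposal is correct and follows essentially the same route as the paper: both arguments reduce the theorem to one-variable calculus on the closed form from Theorem \ref{LowerBoundThm}, proving the limit by the standard $\sin x/x \to 1$ fact (the paper uses L'H\^opital) and the monotonicity by a derivative-sign computation (the paper's criterion $\tan(\pi/2x) > \pi/2x$ is exactly your inequality $\phi(x)=\sin x - x\cos x>0$ after multiplying by the positive $\cos$). If anything, your treatment is slightly more complete, since the paper merely asserts $\tan u \geq u$ while you derive it from $\phi(0)=0$ and $\phi'(x)=x\sin x>0$.
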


\begin{proof}
For the first statement, it suffices to see that 
\begin{align*}
    \lim_{x\to \infty}\frac{1}{2x\sin\Paren{\frac{\pi}{2x} } } = \lim_{x\to \infty}\frac{\frac{1}{2x}}{\sin\Paren{\frac{\pi}{2x} } } = \lim_{x\to \infty}\frac{\frac{-1}{2x^2}}{-\frac{\pi\cos\Paren{\frac{\pi}{2x} }}{2x^2} } = \frac{1}{\pi}.
\end{align*}
For the second statement, we show that the function $f(x) = 2x\sin\paren{\pi/2x}$ is strictly increasing for $x\geq 1$. We have $f'(x) = 2\sin\paren{\pi/2x} - \pi \cos\paren{\pi/2x}/x$. Then $f'(x) > 0$ is equivalent to $\tan\paren{\pi/2x} \geq \pi/2x$. This latter statement is true for all $x \geq 1$. The third statement follows from the first two. 
\end{proof}

\begin{thm}\label{EntropyCorollary}
States $\ket{\psi_n}$ have full Schmidt rank and the ratio of entanglement entropy to maximum entangled entropy, i.e., $S_{\psi_n}/\log(n)$ approaches $1$ as $n \to \infty $. 
\end{thm}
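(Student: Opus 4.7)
The plan is to observe that the expression for $\ket{\psi_n}$ is already (essentially) a Schmidt decomposition and then analyze the resulting probability distribution directly. Since $\sigma$ is the cyclic permutation, $\{\ket{\sigma^i(0)}\}_{i=0}^{n-1} = \{\ket{i}\}_{i=0}^{n-1}$ is the computational basis on $\mathcal{H}_A$, and likewise $\{\ket{\sigma^{-i}(0)}\}_{i=0}^{n-1}$ is a reordering of the computational basis on $\mathcal{H}_B$. Hence the vectors $\ket{\sigma^i(0),\sigma^{-i}(0)}$ are pairwise orthonormal, and the Schmidt coefficients of $\ket{\psi_n}$ are precisely
\[
\lambda_i \;=\; \frac{\bigabs{1-z^{n+2i+1}}}{\gamma_n} \;=\; \frac{2\Bigabs{\sin\!\Paren{\tfrac{\pi(n+2i+1)}{4n}}}}{\gamma_n}, \qquad i=0,1,\dots,n-1,
\]
using $\abs{1-e^{i\theta}}=2\abs{\sin(\theta/2)}$ and $z=e^{i\pi/2n}$.

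First I would verify full Schmidt rank. Setting $\theta_i := \tfrac{\pi(n+2i+1)}{4n}$, we have $\theta_i \in \Bigbrac{\tfrac{\pi}{4}+\tfrac{\pi}{4n},\,\tfrac{3\pi}{4}-\tfrac{\pi}{4n}} \subset (0,\pi)$ for all $i\in[n]$, so $\sin\theta_i > 0$ and hence $\lambda_i > 0$. In fact $\sin^2\theta_i \in [1/2,1]$, a uniform bound we will reuse.

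Next I would compute the entropy. Writing $p_i = \lambda_i^2 = 4\sin^2\theta_i/\gamma_n^2$ and using $\sum_i p_i = 1$,
\[
S_{\psi_n} \;=\; -\sum_{i=0}^{n-1} p_i \log p_i \;=\; \log(\gamma_n^2) \;-\; \sum_{i=0}^{n-1} p_i \log\!\paren{4\sin^2\theta_i}.
\]
Since $4\sin^2\theta_i \in [2,4]$ for every $i$, the correction term satisfies $\bigabs{\sum_i p_i \log(4\sin^2\theta_i)} \leq \log 4$, i.e., it is $O(1)$ independently of $n$.

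Finally I would evaluate the leading term. From $\gamma_n^2 = 2n + 2/\sin(\pi/2n)$ and the expansion $\sin(\pi/2n) = \pi/(2n) + O(n^{-3})$, one has $\gamma_n^2 = 2n\bigparen{1+2/\pi} + o(n)$, so $\log(\gamma_n^2) = \log n + \log\!\paren{2(1+2/\pi)} + o(1)$. Dividing through,
\[
\frac{S_{\psi_n}}{\log n} \;=\; \frac{\log(\gamma_n^2)}{\log n} \;-\; \frac{1}{\log n}\sum_{i=0}^{n-1} p_i \log(4\sin^2\theta_i) \;=\; 1 + O\!\paren{\tfrac{1}{\log n}} \;\xrightarrow[n\to\infty]{}\; 1,
\]
as required. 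No step here is especially delicate — the only potential pitfall is being careful that the indexing $i \mapsto \sigma^{-i}(0)$ on Bob's side is a bijection on $[n]$ (which it is, since $\sigma$ has order $n$), so that the claimed expression is genuinely a Schmidt decomposition rather than merely a superposition in a product basis.
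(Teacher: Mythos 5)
Your proposal is correct and follows essentially the same route as the paper: read off the Schmidt decomposition from the given form of $\ket{\psi_n}$, note that the unnormalized coefficients $\abs{1-z^{n+2i+1}}$ are bounded between two positive constants, and conclude that $S_{\psi_n} = \log(\gamma_n^2) + O(1) = \log n + O(1)$. Your version is marginally more careful in that you explicitly verify $\sin^2\theta_i \in [1/2,1]$ (hence full Schmidt rank and a two-sided bound on the correction term), whereas the paper only uses the upper bound $\abs{1-z^{2i+n+1}}\le 2$ together with the trivial bound $S_{\psi_n}/\log n \le 1$ and the squeeze theorem.
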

\begin{proof}
Recall that
\[ \ket{\psi_n} = \frac{1}{\gamma_n}\sum_{i = 0}^{n-1} \Paren{ 1 - z^{2i + n + 1}}\ket{ \sigma^{i}(0),\sigma^{-i}(0) } \in \mathcal{H}_{A} \otimes \mathcal{H}_{B}. \]
Let $\ket{i_{A}}= \frac{1-z^{2i + n + 1}}{\Norm{1-z^{2i+n+1}}}\ket{\sigma^{i}(0)}$ and $\ket{i_{B}}= \ket{\sigma^{-i}(0)}$. Clearly $\{i_A\}_i$ and $\{i_B\}_i$ are orthonormal bases for $\mathcal{H}_A$ and $\mathcal{H}_B$, respectively. The Schmidt decomposition is now given by
\[ \ket{\psi_n} = \frac{1}{\gamma_n}\sum_{i=0}^{n-1} \Norm{1-z^{2i+n+1}}\ket{i_{A} i_{B}}. \]
To calculate the limit of $S_{\psi_n}/\log(n)$ first note that
\begin{align*}
     \frac{S_{\psi_n}}{\log(n)} &= -\frac{ \sum_{i=0}^{n-1} \Norm{ 1 - z^{2i+n+1} }^{2}  \log\frac{\Norm{ 1 - z^{2i+n+1} }^{2}}{\gamma_n^2}  }{ \gamma_n^2 \log(n) } \\
    &= -\frac{ \sum_{i=0}^{n-1} \Norm{ 1 - z^{2i+n+1} }^{2} \Paren{ \log\Norm{ 1 - z^{2i+n+1} }^{2} - \log\gamma_n^2 } }{\gamma_n^2 \log(n) } \\
    &\geq  -\frac{ \log(4) \sum_{i=0}^{n-1} \Norm{ 1 - z^{2i+n+1} }^{2} }{  \gamma_n^2\log(n) } + \frac{ \log\gamma_n^2 \sum_{i=0}^{n-1} \Norm{ 1 - z^{2i+n+1} }^{2} }{ \gamma_n^2\log(n) } \\
    &=-\frac{ \log(4)}{  \log(n) } + \frac{ \log\gamma_n^2}{\log(n) }
\end{align*}
where for the inequality we used the fact that $\Norm{ 1 - z^{2i+n+1} } \leq 2$, and for the last equality we used the identity $\gamma_n^2 = \sum_{i=0}^{n-1} \Norm{ 1 - z^{2i+n+1} }^{2}$. So it holds that $$-\frac{ \log(4)}{  \log(n) } + \frac{ \log\gamma_n^2}{\log(n) }\leq \frac{S_{\psi_n}}{\log(n)}\leq 1.$$
By simple calculus $\lim_{n \to\infty}\frac{ \log\gamma_n^2}{\log(n) }-\frac{ \log(4)}{  \log(n) } = 1$. Therefore
by squeeze theorem $\lim_{n \to \infty} \frac{S_{\psi_n}}{\log(n)} = 1$.

\end{proof}
\section{Group structure of \texorpdfstring{$\mathcal{S}_n$}{Sn}}\label{Group generated by solutions}
Let
$H_{n} = \langle A_{0}, A_{1} \rangle$ 
be the group generated by Alice's observables in $\mathcal{S}_n$. Note that since $(A_1 A_0^\ast)^2 = z_n^4 I$, we could equivalently define $H_n = \langle A_{0}, A_{1}, z_n^{4} I \rangle$. Also let
\[ G_{n} = \left\langle P_{0}, P_{1}, J \ | \ P_{0}^{n}, P_{1}^{n}, J^{n}, [J, P_{0}], [J, P_{1}], J^{i}\Paren{ P_{0}^{i} P_{1}^{-i} }^{2} \text{ for } i = 1, 2, \dots, \left\lfloor n/2 \right\rfloor \right\rangle. \]

In this section we show that $H_{n} \cong G_{n}$. So it also holds that $H_{n}$ is a representation of $G_n$. We conjecture that $\mathcal{G}_n$ is a self-test for $G_{n}$, in the sense that every optimal strategy of $\mathcal{G}_n$ is a $\ket{\psi}$-representation of $G_{n}$. In Section \ref{Upperbounds}, we prove this for $n=3$. 

\begin{rmk}
    Note that the relations $J^{i}\Paren{ P_{0}^{i} P_{1}^{-i} }^{2}$ holds in $G_n$ for all $i$. 
    %
\end{rmk}

The following lemma helps us develop a normal form for elements of $G_{n}$.

\begin{lem} \label{comm_lemma} For all $i,j$, the elements $P_{0}^{i}P_{1}^{-i}$ and $P_{0}^{j}P_{1}^{-j}$ commute.
\end{lem}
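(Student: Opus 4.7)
My plan is to reduce both $a_i a_j$ and $a_j a_i$ to a common normal form using a single rearrangement of the defining relation $(P_0^k P_1^{-k})^2 = J^{-k}$, and then to observe that the resulting equality between the two normal forms is itself an instance of that very relation, now at index $j-i$. Since the remark preceding the lemma ensures $(P_0^k P_1^{-k})^2 = J^{-k}$ holds for every integer $k$, this will close the argument.

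First, I would extract from $(P_0^k P_1^{-k})^2 = J^{-k}$, together with centrality of $J$, the swap identity
\begin{equation*}
P_1^{-k} \;=\; J^{-k}\, P_0^{-k}\, P_1^{k}\, P_0^{-k}, \qquad k \in \mathbb{Z},
\end{equation*}
and substitute it with $k = i$ into the middle factor of $a_i a_j = P_0^{i} P_1^{-i} P_0^{j} P_1^{-j}$. The outer $P_0^{i}\cdot P_0^{-i}$ collapses and the exponent of the interior $P_0$ factor shifts from $j$ to $j-i$, giving
\begin{equation*}
a_i a_j \;=\; J^{-i}\, P_1^{i}\, P_0^{\,j-i}\, P_1^{-j}.
\end{equation*}
By the same manipulation with $i$ and $j$ swapped, $a_j a_i = J^{-j}\, P_1^{j}\, P_0^{\,i-j}\, P_1^{-i}$.

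Next, I would equate these two forms and clean up by multiplying on the left by $J^{i} P_1^{-i}$ and on the right by $P_1^{i}$, both legal since $J$ is central. The equation $a_i a_j = a_j a_i$ then reduces to
\begin{equation*}
P_0^{\,j-i}\, P_1^{-(j-i)} \;=\; J^{-(j-i)}\, P_1^{\,j-i}\, P_0^{-(j-i)},
\end{equation*}
and right-multiplying both sides by $P_0^{\,j-i} P_1^{-(j-i)}$ turns this into exactly $(P_0^{\,j-i} P_1^{-(j-i)})^2 = J^{-(j-i)}$. This is the defining relation at index $k = j-i$, available by the remark; hence $a_i a_j = a_j a_i$.

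The main obstacle is really just bookkeeping of exponents and $J$-powers, along with the small conceptual step of recognizing that the leftover equation after the cancellation is the defining relation reapplied at a new index rather than a genuinely new relation to be proved. One might instead hope for a cleaner route through showing $(a_i a_j)^2 \stackrel{?}{=} J^{-(i+j)}$, which combined with $a_i^2 = J^{-i}$, $a_j^2 = J^{-j}$ and centrality of $J$ would give commutativity for free via the identity $a_i a_j a_i a_j = a_i^2 a_j^2$; but I do not see how to establish such a formula without effectively performing the same swap substitution.
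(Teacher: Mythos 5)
Your proof is correct and takes essentially the same route as the paper's: both rewrite $a_i a_j$ as $J^{-i}P_1^{i}P_0^{j-i}P_1^{-j}$ using the defining relation at index $i$, and both close the argument by invoking the relation at index $j-i$ (valid for arbitrary indices by the preceding remark). The only difference is presentational — the paper runs a single forward chain of equalities turning $a_i a_j$ into $a_j a_i$, while you reduce both products to a common normal form and observe that their equality is exactly the relation at index $j-i$.
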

\begin{proof}
\begin{align*}
    \big(P_{0}^{i}P_{1}^{-i}\big)\big(P_{0}^{j}P_{1}^{-j}\big) &= J^{-i}P_{1}^{i}P_{0}^{-i}P_{0}^{j}P_{1}^{-j} \\
    &= J^{-i}P_{1}^{i}P_{0}^{j-i}P_{1}^{-j} \\
    &= J^{-i}P_{1}^{i} \big( P_{0}^{j-i}P_{1}^{-(j-i)} \big) P_{1}^{-i} \\
    &= J^{-i-(j-i)}P_{1}^{i}P_{1}^{j-i}P_{0}^{-(j-i)}P_{1}^{-i} \\
    &= J^{-j} \big( P_{1}^{j}P_{0}^{-j} \big) \big( P_{0}^{i}P_{1}^{-i} \big) \\
    &= J^{-j}\big( J^{j}P_{0}^{j}P_{1}^{-j} \big) \big( P_{0}^{i}P_{1}^{-i} \big) \\
    &= \big( P_{0}^{j}P_{1}^{-j} \big) \big( P_{0}^{i}P_{1}^{-i} \big).
\end{align*}
\end{proof}

\begin{lem} \label{normform}
For every $g \in G_{n}$ there exist $i, j \in [n]$ and $q_{k} \in \{0, 1\}$ for $k = 1, 2, \dots, n-1$ such that
\[ g = J^{i} P_{0}^{j} \big( P_{0}P_{1}^{-1} \big)^{q_{1}} \big( P_{0}^{2}P_{1}^{-2} \big)^{q_{2}} \cdots \big( P_{0}^{n-1}P_{1}^{-(n-1)} \big)^{q_{n-1}}. \] 
\end{lem}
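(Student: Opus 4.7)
My plan rests on the shorthand $u_i := P_0^i P_1^{-i}$ for $i = 1, \ldots, n-1$, under which the defining relation $J^i(P_0^i P_1^{-i})^2 = 1$ becomes $u_i^2 = J^{-i}$, and in particular $u_i^{-1} = J^i u_i$. By Lemma \ref{comm_lemma} the subgroup $U := \langle u_1, \ldots, u_{n-1}\rangle$ is abelian, and it contains $J = u_1^{-2}$. Any $v \in U$ can therefore be written in the form $J^{i'} u_1^{q_1} \cdots u_{n-1}^{q_{n-1}}$ with $q_k \in \{0,1\}$, by collecting like factors via commutativity and then reducing each exponent modulo $2$ using $u_k^2 = J^{-k}$, absorbing the residual powers of $J$ into $J^{i'}$.

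The core of the proof is to show that every $g \in G_n$ can be written as $J^{i_0} P_0^j v$ with $v \in U$. First I eliminate $P_1^{\pm 1}$: from $u_1 = P_0 P_1^{-1}$ and $u_1^{-1} = Ju_1$ we get $P_1 = Ju_1 P_0$ and $P_1^{-1} = P_0^{-1} u_1$, so any word in the original generators can be rewritten as a word in $P_0^{\pm 1}, u_1^{\pm 1}, J^{\pm 1}$. To then collect all $P_0$'s on the left, it suffices to show that $U$ is normalized by $P_0$; the key computation is
\begin{align*}
P_0\, u_k\, P_0^{-1} = P_0^{k+1} P_1^{-k} P_0^{-1} = P_0^{k+1} P_1^{-(k+1)} \cdot P_1 P_0^{-1} = u_{k+1}\, u_1^{-1},
\end{align*}
which is valid for $k = 1, \ldots, n-1$ with the convention $u_n = e$ (using $P_0^n = P_1^n = 1$). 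Thus $P_0 U P_0^{-1} \subseteq U$, and finiteness forces equality, so $P_0$ indeed normalizes $U$. A routine induction on word length, together with centrality of $J$, then produces the desired decomposition $g = J^{i_0} P_0^j v$.

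Applying the first paragraph to $v$ yields $g = J^{i_0 + i'} P_0^j u_1^{q_1} \cdots u_{n-1}^{q_{n-1}}$ with $q_k \in \{0,1\}$, and the relations $J^n = P_0^n = 1$ allow us to choose $i := i_0 + i' \in [n]$ and $j \in [n]$. The main obstacle is really just discovering the conjugation identity $P_0 u_k P_0^{-1} = u_{k+1} u_1^{-1}$; once that is in hand, the rest of the argument is bookkeeping driven by Lemma \ref{comm_lemma} and the centrality of $J$.
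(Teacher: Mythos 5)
Your proof is correct, but it takes a genuinely different route from the paper's. The paper argues by direct word rewriting: starting from $g = J^{i}P_0^{j_1}P_1^{j_2}\cdots P_1^{j_k}$ it peels factors off the right end, inserting terms like $P_0^{-j_k}P_1^{j_k}$ and using $P_1^{s}P_0^{-s} = J^{s}P_0^{s}P_1^{-s}$ to produce a telescoping product $J^{i-s}P_0^{-s_1}(P_0^{s_2}P_1^{-s_2})\cdots(P_0^{s_k}P_1^{-s_k})$, and only then commutes and reduces exponents mod $2$. You instead package the structure: $U=\langle u_1,\ldots,u_{n-1}\rangle$ is abelian (Lemma \ref{comm_lemma}), contains $J=u_1^{-2}$, and is normalized by $P_0$ via the conjugation identity $P_0 u_k P_0^{-1}=u_{k+1}u_1^{-1}$, after which the normal form falls out of $G_n=\langle P_0\rangle U$ together with reduction mod $2$ inside $U$. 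Your approach is more conceptual — it essentially exhibits $G_n$ as an extension of a cyclic group by the abelian group $U$, which also illuminates the later group-theoretic identifications of $G_n$ — at the cost of the one extra computation $P_0u_kP_0^{-1}=u_{k+1}u_1^{-1}$; the paper's is more pedestrian but self-contained. Both rely on the remark that $u_i^2=J^{-i}$ holds for all $i$, not just $i\le\lfloor n/2\rfloor$. One small point: your phrase ``finiteness forces equality'' should be justified without presupposing finiteness of $G_n$ (which this lemma is used to prove); the cleanest fix is to iterate the inclusion $P_0UP_0^{-1}\subseteq U$ exactly $n$ times and use $P_0^n=1$, or to note that $U$ is finite outright since it is abelian and each $u_k$ satisfies $u_k^{2n}=J^{-kn}=1$.
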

\begin{proof}
First note that $J$ is central, therefore we can write $g$ in $G_{n}$ as
\[ g = J^{i} P_{0}^{j_{1}} P_{1}^{j_{2}} P_{0}^{j_{3}} \cdots P_{1}^{j_{k}}, \]
for some $k \in \mathbb{N}$, $i \in [n]$, $j_{l} \in [n]$ where $l = 1, 2, \dots, k$. Without loss of generality, let $k$ be even. We perform the following sequence of manipulations
\begin{align*}
    g &= J^{i}P_{0}^{j_{1}}P_{1}^{j_{2}}P_{0}^{j_{3}} \cdots P_{1}^{j_{k-2}}P_{0}^{j_{k-1}} P_{1}^{j_{k}} \\
    &= J^{i}P_{0}^{j_{1}}P_{1}^{j_{2}}P_{0}^{j_{3}} \cdots P_{1}^{j_{k-2}} P_{0}^{j_{k-1}}P_{0}^{j_{k}} \big(P_{0}^{-j_{k}}P_{1}^{j_{k}}\big) \\
    &= J^{i}P_{0}^{j_{1}}P_{1}^{j_{2}}P_{0}^{j_{3}} \cdots P_{1}^{j_{k-2}} P_{1}^{j_{k-1}+j_{k}} \big(P_{1}^{-(j_{k-1}+j_{k})} P_{0}^{j_{k-1}+j_{k}}\big) \big(P_{0}^{-j_{k}}P_{1}^{j_{k}}\big) \\
    &= J^{i-(j_{k-1}+j_{k})}P_{0}^{j_{1}}P_{1}^{j_{2}}P_{0}^{j_{3}} \cdots P_{1}^{j_{k-2}+j_{k-1}+j_{k}} \big(P_{0}^{-(j_{k-1}+j_{k})} P_{1}^{j_{k-1}+j_{k}}\big) \big(P_{0}^{-j_{k}}P_{1}^{j_{k}}\big) \\
    &= \ \cdots \\
    &= J^{i-s}P_{0}^{-s_{1}}\big(P_{0}^{s_{2}}P_{1}^{-s_{2}}\big) \cdots \big(P_{0}^{s_{k-1}}P_{1}^{-s_{k-1}}\big)\big(P_{0}^{s_{k}}P_{1}^{-s_{k}}\big),
\end{align*}
where $s_{l} = -\sum_{t = l}^{k} j_{t}$ and $s = -\sum_{t = 1}^{(k-2)/2} s_{2t+1}$. Then we use the commutation relationship from lemma \ref{comm_lemma} to group the terms with the same $P_{0}$ and $P_{1}$ exponents, and use the relation $J^{i}(P_{0}^{i}P_{1}^{-i})^{2}$ to reduce each term to have an exponent of less than $1$, introducing extra $J$ terms as needed. Finally after reducing the exponents of $J$ and $P_{0}$, knowing that they are all order $n$, we arrive at the desired form. 
\end{proof}

\begin{cor}
$|G_{n}| \leq n^{2} 2^{n-1}$ for all $n \in \mathbb{N}$.
\end{cor}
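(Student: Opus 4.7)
The plan is to derive the bound directly from the normal form established in Lemma \ref{normform} by a straightforward counting argument. By that lemma, every element $g \in G_n$ can be expressed as
\[ g = J^{i} P_{0}^{j} \big( P_{0}P_{1}^{-1} \big)^{q_{1}} \big( P_{0}^{2}P_{1}^{-2} \big)^{q_{2}} \cdots \big( P_{0}^{n-1}P_{1}^{-(n-1)} \big)^{q_{n-1}}, \]
for some $i,j \in [n]$ and $q_k \in \{0,1\}$ for $k = 1, \ldots, n-1$. Since $|G_n|$ is bounded above by the number of such expressions, we immediately obtain
\[ |G_n| \leq n \cdot n \cdot 2^{n-1} = n^2 2^{n-1}. \]

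The only thing to check carefully is that the exponents $q_k$ really need range only over $\{0,1\}$; this is implicit in the proof of Lemma \ref{normform} and uses the defining relation $J^i (P_0^i P_1^{-i})^2 = e$, which lets us reduce $(P_0^i P_1^{-i})^{q_k}$ modulo $2$ at the cost of absorbing a factor of $J^{-i}$ into the central $J^i$ prefix (which is well defined because $J$ commutes with everything and has order $n$). Similarly $J$ and $P_0$ have order $n$, so $i, j$ range over $[n]$.

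There is no main obstacle here: once the normal form is in hand, the corollary is a one-line consequence. I would write the argument as a single short paragraph deriving the three factors $n$, $n$, and $2^{n-1}$ from the ranges of $i$, $j$, and $(q_1, \ldots, q_{n-1})$ respectively, with a brief remark noting that different tuples $(i, j, q_1, \ldots, q_{n-1})$ need not yield distinct group elements, which is exactly why the statement is an inequality rather than an equality.
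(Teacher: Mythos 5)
Your proposal is correct and is exactly the argument the paper intends: the paper's proof of this corollary is simply ``Follows from Lemma \ref{normform},'' and the intended content is precisely your count of $n\cdot n\cdot 2^{n-1}$ normal forms bounding $|G_n|$ from above. Your additional remarks about why the exponents reduce as claimed are already contained in the proof of Lemma \ref{normform}, so nothing further is needed.
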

\begin{proof}
Follows from lemma \ref{normform}.
\end{proof}

\begin{lem}
$|H_{n}| \geq n^{2} 2^{n-1}$ for all $n \in \mathbb{N}$.
\end{lem}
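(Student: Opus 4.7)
The plan is to exhibit $n^2 \cdot 2^{n-1}$ pairwise distinct elements of $H_n$, which combined with the upper bound $\lvert G_n\rvert \leq n^{2} 2^{n-1}$ from the previous corollary will yield both this lemma and the isomorphism $H_n \cong G_n$. The natural starting point is the normal form of Lemma \ref{normform}: every element of $G_n$ is of the form $J^{i} P_{0}^{j} \prod_{k=1}^{n-1}(P_0^{k} P_1^{-k})^{q_k}$ with $i,j \in [n]$ and $q_k \in \{0,1\}$, giving exactly $n^2 \cdot 2^{n-1}$ such tuples. Under the substitution $P_0 \mapsto A_0$, $P_1 \mapsto A_1$, $J \mapsto z^{4} I$ (which one checks respects all defining relations of $G_n$, most importantly $J^i (P_0^i P_1^{-i})^2$), these tuples yield a generating family of elements of $H_n$; the task reduces to showing no two tuples produce the same matrix.

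The key computation is to simplify $A_0^i A_1^{-i}$. Iterating the commutation relation $X D_j = D_{\sigma(j)} X$ of Proposition \ref{comm_relation} gives, by induction on $i$,
\[ A_1^{i} = (z^{2} D_{0} X)^{i} = z^{2i}\, D_{[i]}\, X^{i}, \]
where $D_{[i]} = D_{\{0, 1, \ldots, i-1\}}$. Hence $A_0^{i} A_1^{-i} = z^{-2i} D_{[i]}$, so the image of the normal-form element in $H_n$ equals
\[ z^{\,4i \,-\, 2\sum_{k=1}^{n-1} k\, q_k}\; X^{j}\; D_{[1]}^{q_{1}} D_{[2]}^{q_{2}} \cdots D_{[n-1]}^{q_{n-1}}. \]
Write this as $\alpha\, X^{j}\, D$ where $D$ is a diagonal $\pm 1$ matrix. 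A direct inspection shows the $(m,m)$-entry of $D$ is $(-1)^{q_{m+1}+q_{m+2}+\cdots+q_{n-1}}$; in particular the $(n-1,n-1)$-entry is always $+1$, and the map $(q_1,\ldots,q_{n-1}) \mapsto D$ is a bijection onto the $2^{n-1}$ such diagonal matrices (invertible by solving $q_{n-1},q_{n-2},\ldots,q_1$ in succession from the diagonal entries).

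Distinctness then follows by reading off the three tuples one at a time from the monomial matrix $\alpha X^j D$. First, the support (nonzero positions) of $\alpha X^j D$ depends only on the permutation $X^j$, so $j$ is determined. Second, since $D$ has $+1$ in position $(n-1,n-1)$, the scalar $\alpha$ equals the $(\sigma^{j}(n-1), n-1)$-entry of $\alpha X^j D$ and is therefore determined. Third, $D = \alpha^{-1} X^{-j}(\alpha X^j D)$ is now determined, and the bijection above recovers $(q_1,\ldots,q_{n-1})$. Finally, with $(q_k)$ fixed, the exponent $4i - 2\sum_k k q_k$ is determined modulo $4n$ by $\alpha$, and since $z^{4} = \omega_n$ is a primitive $n$-th root of unity, $i$ is determined modulo $n$. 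Thus all $n \cdot n \cdot 2^{n-1}$ tuples produce distinct matrices in $H_n$, giving $\lvert H_n\rvert \geq n^{2} 2^{n-1}$.

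No step here looks seriously obstructive: the only real computation is the inductive simplification of $A_1^i$, which is essentially contained in the proof of Proposition \ref{lowerbound_lem_2}-style manipulations used in Proposition \ref{comm_relation}. Everything else is bookkeeping in the monomial group of generalized permutation matrices.
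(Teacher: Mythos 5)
Your proof is correct. It follows the same overall strategy as the paper's --- exhibit $n^{2}2^{n-1}$ pairwise distinct generalized permutation matrices inside $H_n$ --- but the execution is genuinely different and, in my view, cleaner. The paper first manufactures the elements $z^{2}D_{i}\in H_{n}$ from the identity $z^{-4i}A_{1}^{i}A_{0}^{-i}A_{1}^{i+1}A_{0}^{-(i+1)}=z^{2}D_{i}$, then builds products $z^{4i}D_{S}X^{j}$ and $z^{4i+2}D_{S}X^{j}$ and counts them via a case analysis on the parity of $n$ and of $\lvert S\rvert$, with some hand-waving about where duplicates begin to appear. You instead push the normal form of Lemma \ref{normform} through the substitution $P_{0}\mapsto A_{0}$, $P_{1}\mapsto A_{1}$, $J\mapsto z^{4}I$, use the computation $A_{0}^{k}A_{1}^{-k}=z^{-2k}D_{[k]}$ (which is exactly the identity the paper itself uses when verifying the homomorphism relation $\tilde{f}(J^{i}(P_{0}^{i}P_{1}^{-i})^{2})=I$), and then decode $(j,\alpha,D)$ and hence $(i,j,q_{1},\dots,q_{n-1})$ from the resulting monomial matrix $\alpha X^{j}D$; the observation that $D$ always fixes $e_{n-1}$ is the right pivot for reading off the scalar, and the triangular system for the $q_{k}$ is immediate. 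This buys you a uniform argument with no parity split, and it dovetails directly with the surjective homomorphism $G_{n}\to H_{n}$ so that injectivity of that map falls out at the same time. The only cosmetic caveat is that you should phrase the membership of your matrices in $H_{n}$ as "each is a word in $A_{0}^{\pm1},A_{1}^{\pm1}$ and $z^{4}I=(A_{1}A_{0}^{\ast})^{2}$" rather than invoking the homomorphism as already established, since the lemma is used to prove that the homomorphism is an isomorphism; but as you note, well-definedness of the substitution is checked independently, so there is no circularity.
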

\begin{proof}
We lower bound the order of the group $H_{n}$ by exhibiting $n^{2}2^{n-1}$ distinct elements in the group. We divide the proof into cases depending on the parity of $n$.

First note that $z^{2}D_{i} \in H_{n}$ for all $i \in [n]$ since
\[
    z^{-4i} A_{1}^{i}A_{0}^{-i}A_{1}^{i+1}A_{0}^{-(i+1)} = z^{-4i}z^{2i}D_{[i]}X^{i}X^{-i} z^{2(i+1)} D_{[i+1]}X^{i+1}X^{-(i+1)}
    = z^{2}D_{i},
\]
where in the first equality we use Proposition \ref{comm_relation}. This allows us to generate $z^{2} D_{i_{0}} D_{i_{1}} \cdots D_{i_{k-1}}$ if $k$ is odd via
\[ z^{-4 (k-1)/2} (z^{2}D_{i_{0}})(z^{2}D_{i_{1}}) \cdots (z^{2}D_{i_{k-1}}) = z^{2} D_{i_{0}}D_{i_{1}} \cdots D_{i_{k-1}}, \tag{1}\]
and $D_{i_{0}} D_{i_{1}} \cdots D_{i_{k-1}}$ if $k$ is even by
\[ z^{-4(k/2)} (z^{2}D_{i_{0}})(z^{2}D_{i_{1}}) \cdots (z^{2}D_{i_{k-1}}) = D_{i_{0}}D_{i_{1}} \cdots D_{i_{k-1}}. \tag{2}\]

Let $n$ be odd. From $(2)$ we will be able to generate elements of the form $z^{4i} D_{0}^{q_{0}} D_{1}^{q_{1}} \cdots D_{n-1}^{q_{n-1}} X^{j}$ where there are an even number of nonzero $q_{k}$ for $i,j \in [n]$. It should be clear that the elements with $i \neq i^{\prime} \in \{0, 1, \dots, (n-1)/2\}$ will be distinct. For $i > (n-1)/2$, we simply note that we can factor out a $z^{2n} = -1$ and so we get elements of the form $z^{4i^{\prime}+2} D_{0}^{q_{0}} D_{1}^{q_{1}} \cdots D_{n-1}^{q_{n-1}} X^{j}$, where there are an odd number of nonzero $q_{k}$ for $i^{\prime} \in \{0, 1, \dots, (n-3)/2\}$, $j \in [n]$. Each of these will be distinct from each other as, again, the powers of the $n$th root of unity will be distinct, and distinct from the previous case by the parity of the sign matrices. Therefore we are able to lower-bound $|C_{n}|$ by $n^{2}2^{n-1}$.

If $n$ is even, we will still be able to generate elements of the form $z^{4i} D_{0}^{q_{0}} D_{1}^{q_{1}} \cdots D_{n-1}^{q_{n-1}} X^{j}$ where there are an even number of nonzero $q_{k}$ for $i, j \in [n]$. However, note that for $i > (n-2)/2$, we begin to generate duplicates. So from $(1)$ we can generate elements of the form $z^{4i+2} D_{0}^{q_{0}} D_{1}^{q_{1}} \cdots D_{n-1}^{q_{n-1}} X^{j}$ for $i, j \in [n]$ and an odd number of nonzero $q_{k}$. These will be distinct from the previous elements by the parity of the sign matrices but again will begin to generate duplicates after $i > (n-2)/2$. Therefore we have the lower-bound of $\frac{n}{2}n2^{n-1} + \frac{n}{2}n 2^{n-1} = n^{2}2^{n-1}$ elements.
\end{proof}

\begin{lem}
There exists a surjective homomorphism $f : G_{n} \to H_{n}$.
\end{lem}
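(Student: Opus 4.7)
The plan is to define $f$ on generators and then verify it extends to a well-defined homomorphism by checking that each defining relation of $G_n$ is satisfied in $H_n$. Motivated by the identity $(A_1 A_0^{-1})^2 = z^4 I$ already observed in the paper, I would set
\[ f(P_0) = A_0, \quad f(P_1) = A_1, \quad f(J) = z^4 I. \]
Surjectivity is then immediate, since by definition $H_n = \langle A_0, A_1\rangle$, and both generators lie in the image of $f$. The real content is to check the relations.

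The relations $P_0^n, P_1^n, J^n$ reduce to $A_0^n = A_1^n = I$ and $(z^4 I)^n = z^{4n} I = I$ (using $z_n = e^{i\pi/2n}$), all of which were already established when we verified $A_0, A_1$ are order-$n$ generalized observables. The centrality relations $[J, P_0]$ and $[J, P_1]$ become trivial because $f(J) = z^4 I$ is a scalar and therefore commutes with everything.

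The only substantive check is the family of relations $J^i(P_0^i P_1^{-i})^2 = e$ for $i = 1, \dots, \lfloor n/2\rfloor$. I would compute $A_1^i$ in closed form using the commutation relation $X^k D_j = D_{\sigma^k(j)} X^k$ from Proposition~\ref{comm_relation}. Unwinding the product $A_1^i = (z^2 D_0 X)^i$ inductively gives
\[ A_1^i = z^{2i}\, D_0 D_1 \cdots D_{i-1}\, X^i, \]
and since each $D_j$ is an involution,
\[ A_1^{-i} = z^{-2i}\, X^{-i}\, D_{i-1} \cdots D_1 D_0. \]
Then $A_0^i A_1^{-i} = X^i \cdot z^{-2i} X^{-i} D_{i-1}\cdots D_0 = z^{-2i} D_{i-1}\cdots D_0$, a diagonal sign matrix times a scalar, so squaring yields $(A_0^i A_1^{-i})^2 = z^{-4i} I$. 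Hence $f(J)^i \cdot (f(P_0)^i f(P_1)^{-i})^2 = z^{4i}\cdot z^{-4i} I = I$, as required.

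With every defining relation verified, $f$ extends to a well-defined group homomorphism $G_n \to H_n$, and surjectivity follows from the first paragraph. No step is genuinely hard; the only place where care is needed is the inductive computation of $A_1^i$ and the bookkeeping of the $D_j$'s, which is straightforward once the commutation relation with $X$ is applied.
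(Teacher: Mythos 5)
Your proposal is correct and follows essentially the same route as the paper: define $f(J)=z^4I$, $f(P_0)=A_0$, $f(P_1)=A_1$, note surjectivity from $H_n=\langle A_0,A_1\rangle$, and verify the relations, with the only substantive check being $J^i(P_0^iP_1^{-i})^2$ via the commutation $X^kD_j=D_{\sigma^k(j)}X^k$, which yields $(A_0^iA_1^{-i})^2=z^{-4i}I$ exactly as in the paper's computation with $D_{[i]}$.
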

\begin{proof}
Let us define $f: \{J,P_0,P_1\}\rightarrow H_n$ by $f(J) = z^4 I, f(P_0)=A_0, f(P_1)=A_1$. We show that $f$ can be extended to a homomorphism from $G_n$ to $H_n$. Consider the formal extension $\tilde{f}$ of $f$ to the free group generated by $\{J,P_0,P_1\}$. We know from the theory of group presentations that $f$ can be extended to a homomorphism if and only if $\tilde{f}(r) = I$ for all relation $r$ in the presentation of $G_n$. 

It is clear that $\tilde{f}$ respects the first five relations of $G_n$. Now we check the last family of relations:
\begin{align*}
    \tilde{f}(J^i (P_0^i P_1^{-i})^2) &= z^{4i}(A_{0}^{i}A_{1}^{-i})^{2} \\
    &= z^{4i}(X^{i}z^{-2i}(D_{0}X)^{-i})^{2} \\
    &= (X^{i}X^{-i}D_{[i]})^{2} \\
    &= D_{[i]}^{2} \\
    &= I.
\end{align*}
The homomorphism $f$ is surjective because $A_0,A_1$ generate the group $H_n$.
\end{proof}

\begin{thm}\label{group_theorem}
$H_{n} \cong G_{n}$ for all $n \in \mathbb{N}$.
\end{thm}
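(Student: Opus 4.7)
The plan is to combine the three lemmas immediately preceding the theorem. We have established: (i) a surjective homomorphism $f : G_n \to H_n$, (ii) the upper bound $\abs{G_n} \leq n^2 2^{n-1}$ coming from the normal form in Lemma \ref{normform}, and (iii) the lower bound $\abs{H_n} \geq n^2 2^{n-1}$ obtained by exhibiting that many distinct elements of $H_n$. Since $f$ is surjective, $\abs{H_n} \leq \abs{G_n}$, and so we obtain the chain of inequalities
\[ n^2 2^{n-1} \leq \abs{H_n} \leq \abs{G_n} \leq n^2 2^{n-1}, \]
which forces equality throughout, giving $\abs{G_n} = \abs{H_n} = n^2 2^{n-1}$.

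With $G_n$ and $H_n$ both finite of the same (finite) order and $f$ a surjection between them, $f$ must also be injective: its kernel has order $\abs{G_n}/\abs{H_n} = 1$, so $\ker f = \{e\}$. Hence $f$ is a bijective homomorphism, i.e., an isomorphism, which proves the theorem.

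There is no real obstacle in this step; all of the substantive work sits in the preceding three lemmas. In particular, the hard part was already done: producing the normal form of Lemma \ref{normform} (which relies crucially on Lemma \ref{comm_lemma} together with the relation $J^i(P_0^i P_1^{-i})^2 = 1$) for the upper bound, and a careful parity argument to enumerate $n^2 2^{n-1}$ distinct elements of $H_n$ in two regimes (depending on the parity of $n$) for the lower bound. Once those counts are matched, the isomorphism drops out formally from the surjection, and the conclusion $H_n \cong G_n$ follows for every $n$.
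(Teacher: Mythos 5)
Your proposal is correct and is essentially identical to the paper's own proof: both combine the surjection $f:G_n\to H_n$ with the matching upper bound $\abs{G_n}\leq n^2 2^{n-1}$ and lower bound $\abs{H_n}\geq n^2 2^{n-1}$ to force $\abs{G_n}=\abs{H_n}$, whence $f$ is injective and thus an isomorphism. Nothing further is needed.
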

\begin{proof}
Since $f$ is surjective, then $n^2 2^{n-1}\leq \abs{H_n} \leq \abs{G_n} \leq n^2 2^{n-1}$. Thus $\abs{H_n} = \abs{G_n}$, so the homomorphism is also injective. 
\end{proof}

\begin{rmk}\label{remark:Bob-group}
What about the group generated by Bob's operators in $\mathcal{S}_n$? We can define
\[ G'_{n} = \left\langle Q_{0}, Q_{1}, J \ | \ Q_{0}^{n}, Q_{1}^{n}, J^{n}, [J, Q_{0}], [J, Q_{1}], J^{i}\Paren{ Q_{0}^{-i} Q_{1}^{-i} }^{2} \text{ for } i = 1, 2, \dots, \left\lfloor n/2 \right\rfloor \right\rangle. \]
and with a similar argument as in Theorem \ref{group_theorem} show that $\langle B_{0}, B_{1}, z_n^{4} I \rangle \cong G'_n$. It is now easily verified that the mapping $P_0\mapsto Q_0^{-1},P_1\mapsto Q_1,J\mapsto J$ is an isomorphism between $G_n$ and $G'_n$. So Alice and Bob's operator generate the same group, that is $\langle A_{0}, A_{1}, z_n^{4} I \rangle = \langle B_{0}, B_{1}, z_n^{4} I \rangle$. The latter fact could also be verified directly.
\end{rmk}

\section{Sum of squares framework}\label{sec:sos_prelim}
In this paper, the sum of squares (SOS) proofs are used to demonstrate that certain non-commutative polynomials are positive semidefinite. We use this approach to upper bound the quantum value of non-local games and to establish rigidity. This approach has been used previously in the literature, e.g., \cite{NPA, TiltedCHSH}. We illustrate the basics of this framework by going over the proof of optimality and rigidity of CHSH. At the end of this section, we extend this method to deal with the complexities of $\mathcal{G}_n$ and similar games.

By Proposition \ref{Bias equation theorem}, the probability of winning $\mathcal{G}_2$ using a strategy consisting of a state $\ket{\psi}$ and observables $A_0, A_1$ for Alice and $B_0, B_1$ for Bob is given by the expression 
\begin{align*}
    \frac{1}{2} + \frac{1}{8} \bra{\psi}\paren{A_0 B_0 + A_0 B_1 + A_1 B_0 - A_1 B_1} \ket{\psi}.
\end{align*}
To prove $\nu^\ast(\mathcal{G}_2) = \frac{1}{2} + \frac{\sqrt{2}}{4}$, we just need to show that
\begin{align*}
    2\sqrt{2}I - \paren{A_0 B_0 + A_0 B_1 + A_1 B_0 - A_1 B_1} \succeq 0,
\end{align*}
for any observables $A_0,A_1,B_0,B_1$. This immediately follows from the following SOS decomposition
\begin{align}\label{eq:CHSH_sos}
    2\sqrt{2}I - \paren{A_0 B_0 + A_0 B_1 + A_1 B_0 - A_1 B_1} = \frac{\sqrt{2}}{4}\paren{A_0 + A_1 - \sqrt{2} B_0}^2 +  \frac{\sqrt{2}}{4}\paren{A_0 - A_1 - \sqrt{2} B_1}^2.
\end{align}
Next we use this SOS and the Gowers-Hatami theorem to establish that CHSH is a self-test for the strategy $\mathcal{S}_2$ given in Example \ref{example:CHSH}. We learned in Section \ref{Group generated by solutions} that $A_0 = B_0 = \sigma_x$ and $A_1 = B_1 =\sigma_y$ generate
\[ G_2 = \left\langle P_{0}, P_{1}, J \ | \ P_{0}^{2}, P_{1}^{2}, J^{2}, [J, P_{0}], [J, P_{1}], J\Paren{ P_{0} P_{1} }^{2}\right\rangle, \]
which is in fact the dihedral group $D_4$ (also known as the Weyl-Heisenberg group).

The strategy $\mathcal{S}_2$ gives a representation of $D_4$ as seen by the homomorphism $J\mapsto -I, P_0 \mapsto A_0$, and $P_1\mapsto A_1$. Our first step in proving rigidity is to show that a weaker statement holds for any optimal strategy $(\{\tilde{A}_0, \tilde{A}_1\},\{\tilde{B}_0, \tilde{B}_1\},\ket{\tilde{\psi}})$ where $\ket{\tilde{\psi}} \in \mathcal{H}_A \otimes \mathcal{H}_B$ and $\mathcal{H}_A = \complex^{d_A}, \mathcal{H}_B = \complex^{d_B}$. More precisely, we show that any optimal strategy gives rise to a $\ket{\tilde{\psi}}$-representation.
By optimality $$\bra{\tilde{\psi}}\paren{2\sqrt{2}I - \paren{\tilde{A}_0 \tilde{B}_0 + \tilde{A}_0 \tilde{B}_1 + \tilde{A}_1 \tilde{B}_0 - \tilde{A}_1 \tilde{B}_1}}\ket{\tilde{\psi}} = 0.$$ Then by (\ref{eq:CHSH_sos})
\begin{align*}
    \tilde{B}_0\ket{\tilde{\psi}} = \frac{1}{\sqrt{2}}\paren{\tilde{A}_0 + \tilde{A}_1}\ket{\tilde{\psi}},\\
    \tilde{B}_1\ket{\tilde{\psi}}= \frac{1}{\sqrt{2}}\paren{\tilde{A}_0 - \tilde{A}_1}\ket{\tilde{\psi}}.
\end{align*}
These then let us derive the state-dependent anti-commutation relation
\begin{align*}
    \paren{\tilde{B}_0 \tilde{B}_1 + \tilde{B}_1 \tilde{B}_0}\ket{\tilde{\psi}} &= \frac{1}{\sqrt{2}}\paren{\tilde{B}_0 \paren{\tilde{A}_0-\tilde{A}_1} + \tilde{B}_1 \paren{\tilde{A}_0+\tilde{A}_1}}\ket{\tilde{\psi}}\\
    &= \frac{1}{\sqrt{2}}\paren{\paren{\tilde{A}_0-\tilde{A}_1}\tilde{B}_0  + \paren{\tilde{A}_0+\tilde{A}_1}\tilde{B}_1}\ket{\tilde{\psi}}\\
    &= \frac{1}{2}\paren{\paren{\tilde{A}_0-\tilde{A}_1}\paren{\tilde{A}_0+\tilde{A}_1}  + \paren{\tilde{A}_0+\tilde{A}_1}\paren{\tilde{A}_0-\tilde{A}_1}}\ket{\tilde{\psi}}\\
    &= 0,
\end{align*}
where in the second equality we used the fact that Alice and Bob's operators commute. Similarly we have that
\begin{align*}
    \paren{\tilde{A}_0 \tilde{A}_1 + \tilde{A}_1 \tilde{A}_0}\ket{\tilde{\psi}} = 0.
\end{align*}

Define the functions $f_A:D_4 \to \Unitary_{d_A}(\complex), f_B:D_4\to \Unitary_{d_B}$ by 
\begin{align*}
f_A(J^i P_0^j P_1^k) &= (-1)^i \tilde{A}_0^j \tilde{A}_1^k,\\
f_B(J^i P_0^j P_1^k) &= (-1)^i \tilde{B}_0^j \tilde{B}_1^k,
\end{align*}
for all $i,j,k \in [2]$. This is well-defined because every element of $D_4$ can be written uniquely as $J^i P_0^j P_1^k$ (See Section \ref{Group generated by solutions}). Next we show that $f_A$ is a $\ket{\tilde{\psi}}$-representation, and a similar argument holds for $f_B$. We show that for all $i_1,j_1,k_1,i_2,j_2,k_2\in [2]$
\begin{align*}
    f_A\paren{J^{i_1} P_0^{j_1} P_1^{k_1}} f_A\paren{J^{i_2} P_0^{j_2} P_1^{k_2}}\ket{\psi} &= f_A\paren{\paren{J^{i_1} P_0^{j_1} P_1^{k_1}} \paren{J^{i_2} P_0^{j_2} P_1^{k_2}}}\ket{\psi} \\&= f_A\paren{J^{i_1+i_2+k_1j_2} P_0^{j_1+j_2} P_1^{k_1+k_2}}\ket{\psi}.
\end{align*}
We prove this as follows
\begin{align*}
    f_A\paren{J^{i_1} P_0^{j_1} P_1^{k_1}} f_A\paren{J^{i_2} P_0^{j_2} P_1^{k_2}}\ket{\psi} &= \paren{(-1)^{i_1} \tilde{A}_0^{j_1} \tilde{A}_1^{k_1}} \paren{(-1)^{i_2} \tilde{A}_0^{j_2} \tilde{A}_1^{k_2}}\ket{\psi}\\
    &= (-1)^{i_1+i_2+k_2j_2} \tilde{A}_0^{j_1} \tilde{A}_1^{k_1+k_2} \tilde{A}_0^{j_2} \ket{\psi}\\
    &= (-1)^{i_1+i_2+k_1j_2} \tilde{A}_0^{j_1+j_2}\tilde{A}_1^{k_1+k_2} \ket{\psi}\\
    &= f_A\paren{J^{i_1+i_2+k_1j_2} P_0^{j_1+j_2} P_1^{k_1+k_2}}\ket{\psi},
\end{align*}
where in lines 2 and 3, we make essential use of the fact that the exponents are modulo $2$.

The representation theory of $D_4$ is simple. There are four irreducible representations of dimension one: These are given by $P_0\mapsto (-1)^i, P_1\mapsto (-1)^j, J\mapsto 1$ for $i,j\in [2]$. The only irreducible representation of dimension larger than one is given by
\begin{align*}
\rho(P_0) &= \sigma_x, \ \rho(P_1) = \sigma_y, \ \rho(J) =-I.
\end{align*}
Among these, $\rho$ is the only irreducible representation that gives rise to an optimal strategy for CHSH. In addition $\ket{\psi_2}$ is the unique state that maximizes $\nu(\CHSH,\mathcal{S}_{\rho,\rho,\ket{\psi}})$. This follows since $\ket{\psi_2}$ is the unique eigenvector associated with the largest eigenvalue of $\mathcal{B}_2(\sigma_x,\sigma_y,\sigma_x,\sigma_y)$. The rigidity of CHSH follows from Corollary \ref{corollary: application of gh}.

Now we propose a general framework for proving rigidity of $\mathcal{G}_n$ and similar games. This framework extends the methods demonstrated in the CHSH example to deal with more complex games. For concreteness, we focus on $\mathcal{G}_n$. We use Corollary \ref{corollary: application of gh} to prove rigidity. This requires us to ascertain two facts about the game $\mathcal{G}$:
\begin{enumerate}
    \item Every optimal strategy induces $\ket{\psi}$-representations of some groups $G_A$ and $G_B$.
    \item There is a unique pair of irreducible representations $\rho,\sigma$ of $G_A,G_B$, respectively, such that $\nu(\mathcal{G},\rho,\sigma) = \nu^\ast(\mathcal{G})$.
\end{enumerate}
The first step is to obtain algebraic relations (i.e., groups $G_A$ and $G_B$) between the observables of optimal strategies. Suppose we found some SOS decomposition 
\[
\lambda_n I - \mathcal{B}_n(a_0,a_1,b_0,b_1) = \sum_{k} T_k(a_0,a_1,b_0,b_1)^\ast T_k(a_0,a_1,b_0,b_1),
\]
where $\mathcal{B}_n$ is the bias polynomial for $\mathcal{G}_n$ and $\lambda_n = 4n\nu^\ast(\mathcal{G}_n) - 4$. This equality is over $$\complex^\ast\langle a_0,a_1,b_0,b_1\rangle/\langle a_i^n-1, b_j^n -1, a_ib_j-a_jb_i : \forall i,j\in \{0,1\}\rangle$$ where $\complex^\ast\langle a_0,a_1,b_0,b_1\rangle$ is the ring of noncommutative polynomials equipped with adjoint, and $\langle a_i^n-1, b_j^n -1, a_ib_j-a_jb_i : \forall i,j\in \{0,1\}\rangle$ is the ideal that forces Alice and Bob's operators to form a valid strategy. 

For any optimal strategy $(\{A_0,A_1\},\{B_0,B_1\},\ket{\psi})$, it holds that 
\[
\bigparen{\lambda_n I-\mathcal{B}_n(A_0,A_1,B_0,B_1)}\ket{\psi} = 0.
\]
So it must also hold that $T_k(A_0,A_1,B_0,B_1)\ket{\psi} = 0$. Let $\paren{M_j(A_0,A_1) -I}\ket{\psi} = 0$ be all the relations derived from $T_k$ such that $M_i$ are monomials only in Alice's operators. Similarly let $\paren{N_j(A_0,A_1) -I}\ket{\psi} = 0$ be all the monomial relations involving only Bob's operators. We call $M_i,N_j$ the \emph{group relations}. Define groups 
\begin{align*}
G_A = \langle P_0, P_1 : M_i(P_0,P_1)\rangle,\quad G_B = \langle Q_0,Q_1: N_j(Q_0,Q_1)\rangle.
\end{align*}
In the case of $\mathcal{G}_n$, we in fact have $G_A=G_B=G_n$.\footnote{In Section \ref{Group generated by solutions}, we gave a presentation for $G_n$ using three generators, but in fact one could obtain a presentation using only two generators.} Next, prove that, for all optimal strategies, the functions $f_A,f_B$ defined by $f_A(P_i)=A_i$ and $f_B(Q_j)=B_j$ (as in the preliminaries) are $\ket{\psi}$-representations of $G_A, G_B$, respectively. 

To prove the second assumption, one approach is the brute force enumeration of irreducible representation pairs. A more practical approach, when dealing with families of games, is to demonstrate uniqueness of the pair of optimal irreducible representations using \emph{ring relations}. Let $R_i(A_0,A_1)\ket{\psi} = 0$ be all the relations derived from $T_k$. We call $R_i(A_0,A_1)\ket{\psi}=0$ ring relations. They are allowed to be arbitrary polynomials (as opposed to monomials in the case of group relations). Similarly let $S_j(B_0,B_1)\ket{\psi} = 0$ be all the relations derived from $T_k$ involving only Bob's operators. Then show that there is a unique irreducible representation $\rho$ of $G_A$ (resp. $\sigma$ of $G_B$) satisfying the ring relations, i.e., $R_i(\rho(P_0),\rho(P_1))=0$ (resp. $S_i(\sigma(Q_0),\sigma(Q_1))=0$). Note that here we require the stronger constraint $R_i(\rho(P_0),\rho(P_1))=0$ as opposed to $R_i(\rho(P_0),\rho(P_1))\ket{\psi}=0$.\footnote{The intuition behind this step is the one-to-one correspondence between the group representations of $G_A$ and the ring representations of the group ring $\complex[G_A]$. The optimal pair of irreducible representations are in fact irreducible representations of rings $\complex[G_A]/\langle R_i(P_0,P_1)\rangle$ and $\complex[G_B]/\langle S_j(Q_0,Q_1)\rangle$.}

In some special cases, e.g., games $\mathcal{G}_n$, there is one ring relation that rules them all. For $\mathcal{G}_n$ there is a unique irreducible representation of $G_n$ satisfying the ring relation $\paren{H_n + (n-2)I}\ket{\psi} = 0$ where \begin{align}
    H_n = H_n(A_0,A_1) = \omega \sum_{i=0}^{n-1} A_0^i A_1 A_0^{(n-i-1)}. \label{eq:ring_eq}
\end{align}

For example in the case of $G_5$, there are $25$ degree one irreducible representations given by $P_0\mapsto \omega_5^i,P_1\mapsto\omega_5^j, J\mapsto \omega^{2(j-i)}$ for all $i,j\in [5]$. There are also $15$ irreducible representations of degree five: For each $i\in [5]$, there are three irreducible representations sending $J\to \omega_5^i I_5$. Among these $40$ irreducible representations only one satisfies the ring relation $\paren{H_5 + 3I}\ket{\psi} = 0$. This unique irreducible representation is one of the three irreducible representations mapping $J\mapsto \omega_5 I_5$.\footnote{Interestingly, cousin games of $\mathcal{G}_5$, defined using systems of equation $x_0 x_1 = 1, x_0,x_1=\omega^i$ for $i\in [5]$, generate the same group $G_5$. For every $i$, the unique optimal irreducible representation strategy is one of the three irreducible representations mapping $J\mapsto \omega_5^i I_5$.}   

In section \ref{BCSwithSOS}, we show that in the special case of pseudo-telepathic games, this framework reduces to the solution group formalism of Cleve, Liu, and Slofstra \cite{BCSCommuting}. The group derived from the SOS is the solution group, and the analogue of the ring relation that hones in on the optimal irreducible representation $\rho$ is the requirement that $\rho(J)\neq I$.

In the next section, we use the SOS framework to give a full proof of the rigidity of $\mathcal{G}_3$. While omitted, the cases of $\mathcal{G}_4, \mathcal{G}_5$ follow similarly. The SOS decompositions of $\mathcal{B}_4,\mathcal{B}_5$ are comparatively long and tedious. 

\section{Optimality and rigidity for \texorpdfstring{$\mathcal{G}_3$}{G3}} \label{Upperbounds}

In this section, we show that $\mathcal{S}_3$ is optimal, and therefore $\nu^\ast(\mathcal{G}_3) = 5/6$. We also show that $\mathcal{G}_3$ is a self-test for the strategy $\mathcal{S}_3$. We obtain these results by obtaining algebraic relations between operators in any optimal strategy using an SOS decomposition for $\mathcal{B}_3$. 

\subsection{Optimality of \texorpdfstring{$\mathcal{S}_3$}{S3}}\label{sec:opt}
For every operator $A_i,B_j$ for which $A_i^3=B_j^3=I$ and $[A_i,B_j]=0$, we have the following SOS decomposition:
\begin{align}
&6I - A_0 B_0^\ast-A_0^\ast B_0-A_0 B_1-A_0^\ast B_1^\ast - A_1 B_0^\ast-A_1^\ast B_0-\omega^\ast A_1 B_1 - \omega A_1^\ast B_1^\ast\nonumber\\
& \ = \lambda_1 (S_1^\ast S_1 + S_2^\ast S_2)+\lambda_2(T_1^\ast T_1 + T_2^\ast T_2)+\lambda_3(T_3^\ast T_3 + T_4^\ast T_4)+\lambda_4(T_5^\ast T_5 + T_6^\ast T_6),\label{SOS_Z3}
\end{align}
where
\begin{align*}
S_1 &= A_0 + \omega A_1 + \omega^\ast B_0 + \omega B_1^\ast,\\
S_2 &= A_0^\ast + \omega^\ast A_1^\ast + \omega B_0^\ast + \omega^\ast B_1,\\
T_1 &= A_0 B_0^\ast+ ai A_0^\ast B_0 - a A_0 B_1 + i A_0^\ast B_1^\ast + a A_1 B_0^\ast- i A_1^\ast B_0 - \omega^\ast A_1 B_1 - ai \omega A_1^\ast B_1^\ast,\\
T_2 &= A_0 B_0^\ast+ ai A_0^\ast B_0 + a A_0 B_1 - i A_0^\ast B_1^\ast - a A_1 B_0^\ast+ i A_1^\ast B_0 - \omega^\ast A_1 B_1 - ai \omega A_1^\ast B_1^\ast,\\
T_3 &= A_0 B_0^\ast- ai A_0^\ast B_0 - a A_0 B_1 - i A_0^\ast B_1^\ast + a A_1 B_0^\ast+ i A_1^\ast B_0 - \omega^\ast A_1 B_1 + ai \omega A_1^\ast B_1^\ast,\\
T_4 &= A_0 B_0^\ast- ai A_0^\ast B_0 + a A_0 B_1 + i A_0^\ast B_1^\ast - a A_1 B_0^\ast- i A_1^\ast B_0 - \omega^\ast A_1 B_1 + ai \omega A_1^\ast B_1^\ast,\\
T_5 &= A_0 B_0^\ast+ b A_0^\ast B_0 - b A_0 B_1 -  A_0^\ast B_1^\ast -b A_1 B_0^\ast-  A_1^\ast B_0 + \omega^\ast A_1 B_1 +b \omega A_1^\ast B_1^\ast,\\
T_6 &= 6I - A_0 B_0^\ast-A_0^\ast B_0-A_0 B_1-A_0^\ast B_1^\ast - A_1 B_0^\ast-A_1^\ast B_0-\omega^\ast A_1 B_1 - \omega A_1^\ast B_1^\ast,
\end{align*}
and
\begin{align*}
\lambda_1 = \frac{5}{86}, \ \lambda_2=\frac{14+\sqrt{21}}{4\cdot 86}, \ \lambda_3 =\frac{14-\sqrt{21}}{4\cdot 86},\ \lambda_4=\frac{7}{86},\\
a = \frac{2\omega + 3\omega^\ast}{\sqrt{7}}, \ b=\frac{3\omega+8\omega^\ast}{7}, \omega = \omega_3.
\end{align*}

This SOS decomposition tells us that $\mathcal{B}_3 \preceq 6I$ in positive semidefinite order. So from Theorem \ref{Bias equation theorem}, it holds that $\nu^\ast(\mathcal{G}_3) \leq 5/6$. Combined with Theorem \ref{LowerBoundThm}, we have $\nu^\ast(\mathcal{G}_3) = 5/6$.

This SOS is obtained from the dual semidefinite program associated with the second level of the NPA hierarchy. Surprisingly, the first level of NPA is not enough to obtain this upper bound, as was the case with CHSH.

\subsection{Algebraic relations}\label{sec:algebraic_relations_G3}
As in Section \ref{sec:sos_prelim}, we derive group and ring relations for optimal strategies of $\mathcal{G}_3$ from the SOS (\ref{SOS_Z3}).
For the rest of this section, let $(A_0,A_1,B_0,B_1,\ket{\psi})$ be an optimal strategy. Then $\bra{\psi}( 6I- \mathcal{B}_3 )\ket{\psi} = 0$. So it also holds that $S_i \ket{\psi} = 0$ and $T_j \ket{\psi} = 0$ for all $i \in [2]$ and $j \in [6]$. Therefore
\begin{align*}
(T_1+T_2+T_3+T_4)\ket{\psi} = 0,\quad (T_1+T_2-T_3-T_4)\ket{\psi} = 0,\\
(T_1-T_2+T_3-T_4)\ket{\psi} = 0,\quad (T_1-T_2-T_3+T_4)\ket{\psi} = 0.
\end{align*}
From which by simplification we obtain the four relations
\begin{alignat}{2}
A_0 B_0^\ast \ket{\psi} &= \omega^\ast A_1 B_1 \ket{\psi}, \quad A_0^\ast B_0 \ket{\psi} &&= \omega A_1^\ast B_1^\ast \ket{\psi},\nonumber\\
 A_0 B_1 \ket{\psi} &= A_1 B_0^\ast \ket{\psi}, \quad \quad A_0^\ast B_1^\ast \ket{\psi} &&= A_1^\ast B_0 \ket{\psi}. \label{pairing}
\end{alignat}
Now from these four relations and the fact that $A_i,B_j$ are generalized observables satisfying $[A_i,B_j]=0$, we obtain
\begin{align}
\omega^\ast A_0^\ast A_1 \ket{\psi} &= B_1^\ast B_0^\ast \ket{\psi}\label{A_0A_1_1}\\
\omega A_0 A_1^\ast \ket{\psi} &= B_1 B_0 \ket{\psi}\label{A_0A_1_2}\\
A_0^\ast A_1 \ket{\psi} &= B_0 B_1 \ket{\psi}\label{A_0A_1_3}\\
A_0 A_1^\ast \ket{\psi} &= B_0^\ast B_1^\ast \ket{\psi}\label{A_0A_1_4}\\
A_1^\ast A_0 \ket{\psi} &= \omega^\ast B_0 B_1 \ket{\psi} \label{A_1A_0_1}\\
A_1 A_0^\ast \ket{\psi} &= \omega B_0^\ast B_1^\ast \ket{\psi} \label{A_1A_0_2}\\
A_1^\ast A_0 \ket{\psi} &= B_1^\ast B_0^\ast \ket{\psi}\label{A_1A_0_3}\\
A_1 A_0^\ast \ket{\psi} &= B_1 B_0 \ket{\psi}.\label{A_1A_0_4}
\end{align}
From the pair of relations (\ref{A_0A_1_1}) and (\ref{A_1A_0_3}) as well as the pair of relations (\ref{A_0A_1_2}) and (\ref{A_1A_0_4}), we obtain the following relations between Alice's observables acting on the state $\ket{\psi}$:
\begin{align}
A_0^\ast A_1 \ket{\psi} &= \omega A_1^\ast A_0 \ket{\psi}, \label{group_rel_1}\\
A_1 A_0^\ast \ket{\psi} &= \omega A_0 A_1^\ast \ket{\psi}. \label{group_rel_2}
\end{align}

Next we prove two propositions regarding $H = H_3 = \omega A_0 A_1 A_0 + \omega A_0^\ast A_1 + \omega A_1 A_0^\ast$ defined in (\ref{eq:ring_eq}).
\begin{prop} \label{prop:sum_of_rings_eqs}
$\paren{H+H^\ast}\ket{\psi} = -2\ket{\psi}$
\end{prop}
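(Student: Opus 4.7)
The plan is to combine the pairing relations \eqref{pairing} (which give (a) and (b)) with the $S_1, S_2$ relations in order to reduce $(H+H^{*})\ket{\psi}+2\ket{\psi}$ to zero. The starting point is the explicit expansion
\[
(H+H^{*})\ket{\psi}=\omega A_0 A_1 A_0\ket{\psi}+\omega^{*} A_0^{*}A_1^{*}A_0^{*}\ket{\psi}+\omega(A_0^{*}A_1+A_1 A_0^{*})\ket{\psi}+\omega^{*}(A_1^{*}A_0+A_0 A_1^{*})\ket{\psi}.
\]
Using (a) in the form $A_0^{*}A_1\ket{\psi}=\omega A_1^{*}A_0\ket{\psi}$, the first bracket collapses to $2\omega^{*}A_1^{*}A_0\ket{\psi}$ via $\omega^2+\omega^{*}=2\omega^{*}$, and similarly (b) turns the second into $2\omega A_1 A_0^{*}\ket{\psi}$. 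Hence it suffices to prove
\[
\omega A_0 A_1 A_0\ket{\psi}+\omega^{*}A_0^{*}A_1^{*}A_0^{*}\ket{\psi}+2\omega^{*}A_1^{*}A_0\ket{\psi}+2\omega A_1 A_0^{*}\ket{\psi}+2\ket{\psi}=0.
\]

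The genuinely three-letter terms $A_0 A_1 A_0\ket{\psi}$ and $A_0^{*}A_1^{*}A_0^{*}\ket{\psi}$ cannot be simplified using (a), (b) alone: every purely two-letter substitution loops back tautologically (as one verifies by substituting (a'): $A_1\ket{\psi}=\omega A_0 A_1^{*}A_0\ket{\psi}$ into $A_0 A_1 A_0\ket{\psi}$). To reach them I would multiply $S_1\ket{\psi}=0$ on the left by $\omega A_0 A_1$ and $S_2\ket{\psi}=0$ on the left by $\omega^{*}A_0^{*}A_1^{*}$ and add. Expanding and using $A_1^{2}=A_1^{*}$, this produces a master identity of the form
\[
\omega A_0 A_1 A_0\ket{\psi}+\omega^{*}A_0^{*}A_1^{*}A_0^{*}\ket{\psi}+\omega^{*}A_0 A_1^{*}\ket{\psi}+\omega A_0^{*}A_1\ket{\psi}+(\text{six mixed }A{\cdot}B\text{ terms})=0.
\]

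The final step is to eliminate the mixed terms $A_0 A_1 B_0\ket{\psi},\ A_0 A_1 B_1^{*}\ket{\psi},\ A_0^{*}A_1^{*}B_0^{*}\ket{\psi},\ A_0^{*}A_1^{*}B_1\ket{\psi}$. Each of these can be rewritten by commuting the $B$-operator past the $A$-operators, substituting $B_0\ket{\psi},B_1^{*}\ket{\psi}$ (or $B_0^{*}\ket{\psi},B_1\ket{\psi}$) from $S_1\ket{\psi}=0$ (resp.\ $S_2\ket{\psi}=0$), and then invoking \eqref{pairing} to convert the residual $A_0^{\pm}B_\mu^{\pm}$ products via their paired $A$-only forms. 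After collecting, and one more application of (a), (b) to express everything in a common basis, the $B$-dependent contributions cancel and one is left with exactly the identity displayed above, equivalent to $(H+H^{*})\ket{\psi}=-2\ket{\psi}$.

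The hard part will be the phase bookkeeping: there are six mixed terms whose substitutions each generate several pieces, and it is essential that the $\omega$-factors align so that the $B$-only pieces (such as $B_0 B_1\ket{\psi}$ and $B_0^{*}B_1^{*}\ket{\psi}$, which can themselves be reduced to $A$-only terms through \eqref{A_0A_1_1}--\eqref{A_1A_0_4}) cancel. A conceptually cleaner route, if available, would be to first deduce from $(A_1 A_0^{*})^{2}\ket{\psi}=\omega\ket{\psi}$ the 3-letter identity $A_0 A_1 A_0\ket{\psi}=\omega A_0^{*}A_1^{*}A_0^{*}\ket{\psi}$ (mirroring the relation $P_0 P_1 P_0=J\,P_0^{-1}P_1^{-1}P_0^{-1}$ in $G_3$), thereby combining the two 3-letter terms into $2\omega^{*}A_0^{*}A_1^{*}A_0^{*}\ket{\psi}$ before invoking $S_1, S_2$.
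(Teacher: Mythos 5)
There is a genuine gap: the proposed elimination of the mixed terms is circular. Your ``master identity'' is, by construction, nothing other than $\omega A_0A_1 S_1\ket{\psi}+\omega^\ast A_0^\ast A_1^\ast S_2\ket{\psi}=0$, and the four mixed terms group exactly as $\omega A_0A_1\paren{\omega^\ast B_0+\omega B_1^\ast}\ket{\psi}+\omega^\ast A_0^\ast A_1^\ast\paren{\omega B_0^\ast+\omega^\ast B_1}\ket{\psi}$. The only relations you invoke that remove a Bob operator outright are $S_1\ket{\psi}=S_2\ket{\psi}=0$, and substituting those back into this grouping returns precisely $-\omega A_0A_1\paren{A_0+\omega A_1}\ket{\psi}-\omega^\ast A_0^\ast A_1^\ast\paren{A_0^\ast+\omega^\ast A_1^\ast}\ket{\psi}$, i.e.\ minus the Alice-only part of the master identity, leaving $0=0$. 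The pairing relations \eqref{pairing} cannot break this circularity: each of them trades one Bob operator for one Bob operator, so a term carrying exactly one $B$ can never be reduced to an Alice-only word by them, and the transfer relations \eqref{A_0A_1_1}--\eqref{A_1A_0_4} only apply to the specific two-$B$ products $B_0^{\pm}B_1^{\pm}$, which never arise here. No amount of phase bookkeeping fixes this; the system of relations you allow yourself is closed under your moves and does not entail the claim.

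The ingredient you are missing is $T_6\ket{\psi}=0$ from the SOS \eqref{SOS_Z3}, which (after simplification via \eqref{pairing}) gives $\paren{3I-A_0B_0^\ast-A_0^\ast B_0-A_0B_1-A_0^\ast B_1^\ast}\ket{\psi}=0$. The paper applies the adjoint of this operator to it again; the cross terms of that product are exactly the anticommutator combinations $A_0\paren{B_0B_1^\ast+B_1^\ast B_0}\ket{\psi}$ and $A_0^\ast\paren{B_1B_0^\ast+B_0^\ast B_1}\ket{\psi}$, and it is these symmetrized two-$B$ expressions --- not individual products like $B_0B_1^\ast\ket{\psi}$ --- that $S_1,S_2$ convert into three-letter Alice words via \eqref{intermediate_1} and \eqref{intermediate_2}; the remaining quartic $B$-terms are handled by \eqref{A_0A_1_1}--\eqref{A_1A_0_4}. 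Your suggested ``cleaner route'' also does not rescue the argument: the identity $A_0A_1A_0\ket{\psi}=\omega A_0^\ast A_1^\ast A_0^\ast\ket{\psi}$ is Proposition \ref{eq:important_1} of the paper, which is \emph{deduced from} the present proposition (via $H\ket{\psi}=H^\ast\ket{\psi}$), and it does not follow from the state-dependent relations \eqref{group_rel_1}--\eqref{group_rel_2} alone, since those can only be applied to the letters adjacent to $\ket{\psi}$ and $\paren{A_0A_1A_0}^2\ket{\psi}$ cannot be rewritten into that position without already knowing a three-letter relation.
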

\begin{proof} 
We start by writing
\begin{align*}
(\omega B_0^\ast + \omega^\ast B_1 + B_0 B_1^\ast + B_1^\ast B_0) \ket{\psi} &= (\omega^\ast B_0 + \omega B_1^\ast)(\omega^\ast B_0 + \omega B_1^\ast) \ket{\psi}\\
&= -(\omega^\ast B_0 + \omega B_1^\ast)(A_0+\omega A_1) \ket{\psi}\\
&= -(A_0+\omega A_1)(\omega^\ast B_0 + \omega B_1^\ast)\ket{\psi}\\
&= (A_0+\omega A_1)(A_0+\omega A_1) \ket{\psi}\\
&= (A_0^\ast + \omega^\ast A_1^\ast +\omega A_0 A_1 + \omega A_1 A_0)\ket{\psi},
\end{align*}
where for the second and fourth equality, we used the relation $S_1\ket{\psi} = 0$, and for the third equality we used the fact that Alice and Bob's operators commute. Now using $S_2\ket{\psi} = 0$, we obtain
\begin{align}
( B_0 B_1^\ast + B_1^\ast B_0) \ket{\psi} = (2 A_0^\ast + 2 \omega ^\ast A_1^\ast +\omega A_0 A_1 + \omega A_1 A_0)\ket{\psi}. \label{intermediate_1}
\end{align}
Similarly we have
\begin{align}
( B_1 B_0^\ast + B_0^\ast B_1) \ket{\psi} = (2 A_0 + 2 \omega A_1 +\omega^\ast A_0^\ast A_1^\ast + \omega^\ast A_1^\ast A_0^\ast)\ket{\psi}. \label{intermediate_2}
\end{align}

We proceed by simplifying $T_6 \ket{\psi} = 0$ using relations (\ref{pairing}) to obtain
\begin{align*}
(3I - A_0 B_0^\ast - A_0^\ast B_0 - A_0 B_1 - A_0^\ast B_1^\ast)\ket{\psi} = 0.
\end{align*}
Let $P = A_0 B_0^\ast +A_0^\ast B_0 + A_0 B_1 +A_0^\ast B_1^\ast$, and write
\begin{align}
0 &= \bigparen{3I - A_0 B_0^\ast - A_0^\ast B_0 - A_0 B_1 - A_0^\ast B_1^\ast}^\ast \bigparen{3I - A_0 B_0^\ast - A_0^\ast B_0 - A_0 B_1 - A_0^\ast B_1^\ast}\ket{\psi}\nonumber\\
&= \bigparen{13I - 5P + A_0^\ast (B_1 B_0^\ast+B_0^\ast B_1) + A_0 (B_0 B_1^\ast + B_1^\ast B_0) + B_0^\ast B_1^\ast + B_0 B_1 + B_1 B_0 + B_1^\ast B_0^\ast} \ket{\psi}\nonumber\\
&= (-2I + A_0^\ast (B_1 B_0^\ast+B_0^\ast B_1) + A_0 (B_0 B_1^\ast + B_1^\ast B_0) + B_0^\ast B_1^\ast + B_0 B_1 + B_1 B_0 + B_1^\ast B_0^\ast) \ket{\psi},\label{eq:intermediate}
\end{align}
where in the last line, we used $(3I-P)\ket{\psi} = 0$. Using identities (\ref{intermediate_1}) and (\ref{intermediate_2})
\begin{align*}
&\bigparen{A_0^\ast (B_1 B_0^\ast + B_0^\ast B_1) + A_0 (B_0 B_1^\ast + B_1^\ast B_0)}\ket{\psi} \\&\quad= \bigparen{4I+\omega A_0 A_1 A_0 + \omega^\ast A_0^\ast A_1^\ast A_0^\ast+ 2\omega A_0^\ast A_1 +\omega^\ast A_0 A_1^\ast+2\omega^\ast A_0 A_1^\ast+\omega A_0^\ast A_1}\ket{\psi}.
\end{align*}
Transferring Bob's operators to Alice using identities (\ref{A_0A_1_1}-\ref{A_0A_1_4})
\begin{align*}
\bigparen{B_0^\ast B_1^\ast + B_0 B_1 + B_1 B_0 + B_1^\ast B_0^\ast} \ket{\psi} = 
\bigparen{A_0 A_1^\ast + A_0^\ast A_1+\omega A_0 A_1^\ast + \omega^\ast A_0^\ast A_1}\ket{\psi}.
\end{align*}
Plugging these back in (\ref{eq:intermediate})
\begin{align*}
0 &= (2I + \omega A_0 A_1 A_0 + \omega^\ast A_0^\ast A_1^\ast A_0^\ast +(3\omega +\omega^\ast+1)A_0^\ast A_1 + (3\omega^\ast+\omega+1)A_0 A_1^\ast)\ket{\psi}\\
&=(2I + \omega A_0 A_1 A_0 + \omega^\ast A_0^\ast A_1^\ast A_0^\ast +2\omega A_0^\ast A_1 + 2\omega^\ast A_0 A_1^\ast)\ket{\psi}\\
&=(2I + \omega A_0 A_1 A_0 + \omega^\ast A_0^\ast A_1^\ast A_0^\ast +\omega A_0^\ast A_1 + \omega^\ast A_1^\ast A_0 + \omega^\ast A_0 A_1^\ast + \omega A_1 A_0^\ast)\ket{\psi}. \\
&= (2I + H+H^\ast)\ket{\psi},
\end{align*}
where in the first line we used $1+\omega+\omega^\ast = 0$, and in the second line we used identities (\ref{group_rel_1}) and (\ref{group_rel_2}).
\end{proof}

\begin{prop} \label{prop:ring_eq}
$\paren{H + I}\ket{\psi}=\paren{H^\ast + I}\ket{\psi}=0$.
\end{prop}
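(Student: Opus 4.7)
The plan is to leverage Proposition \ref{prop:sum_of_rings_eqs}, which already provides $(H+H^\ast+2I)\ket{\psi}=0$, and reduce the task to proving a single one of the two stated identities. Rewriting the conclusion of that proposition as $(H+I)\ket{\psi}+(H^\ast+I)\ket{\psi}=0$, we see that the two vectors in question sum to zero, so proving $(H+I)\ket{\psi}=0$ automatically yields $(H^\ast+I)\ket{\psi}=0$ and vice versa.

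To establish $(H+I)\ket{\psi}=0$, I would prove $\|(H+I)\ket{\psi}\|^{2}=0$. Expanding and invoking Proposition \ref{prop:sum_of_rings_eqs},
\[
\|(H+I)\ket{\psi}\|^{2} \;=\; \bra{\psi}H^\ast H\ket{\psi}+\bra{\psi}(H+H^\ast)\ket{\psi}+1 \;=\; \bra{\psi}H^\ast H\ket{\psi}-1,
\]
so the task becomes showing $\|H\ket{\psi}\|^{2}=1$. Applying (\ref{A_0A_1_3}) and (\ref{A_1A_0_4}) to the rightmost factors of $H\ket{\psi}$ yields $H\ket{\psi}=\omega A_0 A_1 A_0\ket{\psi}+\omega(B_0 B_1+B_1 B_0)\ket{\psi}$. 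Because $A_0 A_1 A_0$ is unitary, the first summand has unit norm, and computing $\|H\ket{\psi}\|^{2}$ reduces to evaluating $\|(B_0 B_1+B_1 B_0)\ket{\psi}\|^{2}$ together with the mixed cross-term $2\Re\bra{\psi}A_0^\ast A_1^\ast A_0^\ast(B_0 B_1+B_1 B_0)\ket{\psi}$ and verifying that these two contributions cancel. Both can be attacked by using the pairings (\ref{pairing}) to transfer monomials between Alice's and Bob's sides, combined with the linear relations $S_1\ket{\psi}=S_2\ket{\psi}=0$ and the auxiliary identity $(A_0 B_0^\ast+bA_0^\ast B_0-bA_0 B_1-A_0^\ast B_1^\ast)\ket{\psi}=0$ obtained by simplifying $T_5\ket{\psi}=0$ through (\ref{pairing}), which provides the extra constraint needed beyond $T_6\ket{\psi}=0$.

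The main obstacle will be the bookkeeping forced by non-commutativity: the key relations (\ref{group_rel_1}) and (\ref{group_rel_2}) hold only when applied directly to $\ket{\psi}$, so any substitution must first migrate the relevant monomial all the way to the rightmost position, which typically requires commuting Alice's operators through Bob's (harmless) and then using $A_i^\ast=A_i^{2}$ to reshape the word. A cleaner and more conceptual alternative is to produce a second scalar or vector identity of the form $(\alpha H+\beta H^\ast+\gamma I)\ket{\psi}=0$ with $\alpha\neq\beta$, by mimicking the derivation in the proof of Proposition \ref{prop:sum_of_rings_eqs} but starting from the non-Hermitian relation $R_1\ket{\psi}=0$ rather than the Hermitian $(3I-P)\ket{\psi}=0$. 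Combined with Proposition \ref{prop:sum_of_rings_eqs}, this yields a $2\times 2$ linear system in $H\ket{\psi}$ and $H^\ast\ket{\psi}$ whose unique solution is $H\ket{\psi}=H^\ast\ket{\psi}=-\ket{\psi}$, as desired.
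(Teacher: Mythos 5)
Your overall architecture is sound and matches the paper's: both proofs reduce the proposition to the single scalar identity $\bra{\psi}H^\ast H\ket{\psi}=1$ and then combine it with Proposition \ref{prop:sum_of_rings_eqs}. Your closing step is in fact slightly cleaner than the paper's: you expand $\norm{(H+I)\ket{\psi}}^2=\bra{\psi}H^\ast H\ket{\psi}+\bra{\psi}(H+H^\ast)\ket{\psi}+1=1-2+1=0$ directly, and get $(H^\ast+I)\ket{\psi}=0$ for free from $(H+I)\ket{\psi}=-(H^\ast+I)\ket{\psi}$, whereas the paper argues separately that $H\ket{\psi}$ and $H^\ast\ket{\psi}$ are unit vectors summing to $-2\ket{\psi}$ and invokes the equality case of Cauchy--Schwarz. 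That part of your proposal is correct and complete.

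The gap is that the crux --- establishing $\bra{\psi}H^\ast H\ket{\psi}=1$ --- is only sketched, and you explicitly defer the ``bookkeeping.'' This is precisely where all the work in the paper's proof lives: it expands $H^\ast H$ into $3I$ plus six cross-monomials and reduces each one, using (\ref{group_rel_1}), (\ref{group_rel_2}) and the transfer identities (\ref{A_0A_1_3}), (\ref{A_1A_0_1}), to a term of $H$ or $H^\ast$ (e.g.\ $\bra{\psi}A_0A_1^\ast A_0^\ast A_1\ket{\psi}=\omega\bra{\psi}A_0A_1A_0\ket{\psi}$), yielding $\bra{\psi}H^\ast H\ket{\psi}=\bra{\psi}(3I+H+H^\ast)\ket{\psi}=1$. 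Your proposed route --- rewriting $H\ket{\psi}=\omega A_0A_1A_0\ket{\psi}+\omega(B_0B_1+B_1B_0)\ket{\psi}$ and showing that $\norm{(B_0B_1+B_1B_0)\ket{\psi}}^2$ cancels against the cross-term --- does work, but when you unwind it you land on exactly the same monomial reductions: the cancellation you need is $2+\bra{\psi}(H+H^\ast)\ket{\psi}=0$, and reaching it requires the identities above, not merely $S_1,S_2,T_5,T_6$ and the pairings. So the approach is viable but the decisive computation is missing rather than done. The ``cleaner alternative'' in your last paragraph is too vague to assess: there is no concretely identified non-Hermitian relation $R_1\ket{\psi}=0$ in Section \ref{sec:algebraic_relations_G3} that visibly produces a second independent linear relation $(\alpha H+\beta H^\ast+\gamma I)\ket{\psi}=0$ with $\alpha\neq\beta$, and without exhibiting one this branch of the argument does not yet exist.
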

\begin{proof}
First note
\begin{align}
\bra{\psi}H^\ast H\ket{\psi} &= \bra{\psi} (3I + A_0^\ast A_1^\ast A_0A_1
+A_1^\ast A_0^\ast A_1A_0+A_1^\ast A_0A_1 A_0^\ast+A_0A_1^\ast A_0^\ast A_1\nonumber\\
&\quad\quad\quad\quad\quad\quad\quad\quad\quad\quad\quad+A_0^\ast A_1^\ast A_0^\ast A_1 A_0^\ast+A_0 A_1^\ast A_0 A_1 A_0)\ket{\psi}.\label{intermediate_3} 
\end{align}
Using (\ref{group_rel_1}) and (\ref{group_rel_2}), we have
\begin{align*}
&\bra{\psi} A_0 A_1^\ast A_0^\ast A_1 \ket{\psi} = \omega\bra{\psi}A_0 A_1^\ast A_1^\ast A_0\ket{\psi} = \omega \bra{\psi}A_0 A_1 A_0\ket{\psi},\\
&\bra{\psi} A_0^\ast A_1^\ast A_0^\ast A_1 A_0^\ast \ket{\psi} = \omega \bra{\psi} A_0^\ast A_1^\ast A_0^\ast A_0 A_1^\ast\ket{\psi} = \omega \bra{\psi} A_0^\ast A_1\ket{\psi},
\end{align*}
and using (\ref{A_0A_1_3}) and (\ref{A_1A_0_1})
\begin{align*}
\bra{\psi} A_0^\ast A_1^\ast A_0 A_1\ket{\psi} = \bra{\psi} A_0^\ast A_1 A_1 A_0^\ast A_0^\ast A_1\ket{\psi} = \omega \bra{\psi}B_1^\ast B_0^\ast A_1 A_0^\ast B_0 B_1 \ket{\psi} = \omega \bra{\psi} A_1 A_0^\ast \ket{\psi},
\end{align*}
and taking conjugate transpose of these three we obtain
\begin{align*}
&\bra{\psi} A_1^\ast A_0 A_1 A_0^\ast \ket{\psi} = \omega^\ast \bra{\psi}A_0^\ast A_1^\ast A_0^\ast\ket{\psi},\\
&\bra{\psi} A_0 A_1^\ast A_0 A_1 A_0 \ket{\psi} = \omega^\ast \bra{\psi} A_1^\ast A_0\ket{\psi},\\
&\bra{\psi} A_1^\ast A_0^\ast A_1 A_0\ket{\psi} = \omega^\ast \bra{\psi} A_0 A_1^\ast \ket{\psi}.
\end{align*}
Plugging these back in (\ref{intermediate_3}), we obtain
\begin{align*}
\norm{H \ket{\psi}}^2 &= \bra{\psi}H^\ast H\ket{\psi} \\
&= \bra{\psi} (3I + \omega A_0 A_1 A_0 +\omega A_0^\ast A_1 +\omega A_1 A_0^\ast+\omega^\ast A_0^\ast A_1^\ast A_0^\ast+\omega^\ast A_1^\ast A_0+\omega^\ast A_0 A_1^\ast)\ket{\psi}\\
&=\bra{\psi} (3I + H + H^\ast)\ket{\psi}\\
&=\bra{\psi}I\ket{\psi}\\
&= 1,
\end{align*}
where in fourth equality we used Proposition \ref{prop:sum_of_rings_eqs}. Similarly $\norm{H^\ast \ket{\psi}} = 1$. From $(H+H^\ast)\ket{\psi} = -2\ket{\psi}$ and the fact that $H\ket{\psi}$ and $H^\ast \ket{\psi}$ are unit vectors, we get that $H\ket{\psi} = H^\ast \ket{\psi} = -\ket{\psi}$. 
\end{proof}

\begin{prop}\label{eq:important_1}
$A_0 A_1 A_0\ket{\psi} = \omega A_0^\ast A_1^\ast A_0^\ast \ket{\psi}.$
\end{prop}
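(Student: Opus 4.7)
The plan is to combine the two identities of Proposition \ref{prop:ring_eq}, namely $H\ket{\psi} = -\ket{\psi}$ and $H^\ast\ket{\psi} = -\ket{\psi}$, and eliminate the common cross terms using the state-dependent group relations (\ref{group_rel_1}) and (\ref{group_rel_2}). Recall that
\[
H = \omega\bigparen{A_0 A_1 A_0 + A_0^\ast A_1 + A_1 A_0^\ast}, \qquad H^\ast = \omega^\ast\bigparen{A_0^\ast A_1^\ast A_0^\ast + A_1^\ast A_0 + A_0 A_1^\ast}.
\]

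First I would rewrite $H\ket{\psi} = -\ket{\psi}$ by substituting $A_0^\ast A_1 \ket{\psi} = \omega A_1^\ast A_0 \ket{\psi}$ and $A_1 A_0^\ast \ket{\psi} = \omega A_0 A_1^\ast \ket{\psi}$ from (\ref{group_rel_1}) and (\ref{group_rel_2}). This yields
\[
\omega A_0 A_1 A_0 \ket{\psi} + \omega^2 \bigparen{A_1^\ast A_0 + A_0 A_1^\ast}\ket{\psi} = -\ket{\psi}.
\]
Next I would multiply $H^\ast \ket{\psi} = -\ket{\psi}$ by $\omega^2$ (using $\omega^3 = 1$) to obtain
\[
\omega^2 A_0^\ast A_1^\ast A_0^\ast \ket{\psi} + \omega^2 \bigparen{A_1^\ast A_0 + A_0 A_1^\ast}\ket{\psi} = -\ket{\psi}.
\]

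Subtracting these two equations cancels both the $-\ket{\psi}$ on the right and the common combination $\omega^2(A_1^\ast A_0 + A_0 A_1^\ast)\ket{\psi}$ on the left, leaving
\[
\omega A_0 A_1 A_0 \ket{\psi} - \omega^2 A_0^\ast A_1^\ast A_0^\ast \ket{\psi} = 0,
\]
which upon multiplying by $\omega^{-1} = \omega^2$ gives $A_0 A_1 A_0 \ket{\psi} = \omega A_0^\ast A_1^\ast A_0^\ast \ket{\psi}$. There is no real obstacle here; the work was already done in proving Proposition \ref{prop:ring_eq}, and this statement is an immediate algebraic corollary obtained by extracting the $\omega$-graded components of the relation $(H+H^\ast+2I)\ket{\psi}=0$ together with $H\ket{\psi}=H^\ast\ket{\psi}$.
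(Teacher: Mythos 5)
Your proof is correct and follows essentially the same route as the paper: both arguments combine $H\ket{\psi}=H^\ast\ket{\psi}$ (from Proposition \ref{prop:ring_eq}) with the state-dependent relations (\ref{group_rel_1}) and (\ref{group_rel_2}) to cancel the common cross terms and isolate $A_0A_1A_0$ against $A_0^\ast A_1^\ast A_0^\ast$. One cosmetic slip: your second displayed equation is just $H^\ast\ket{\psi}=-\ket{\psi}$ written out using $\omega^\ast=\omega^2$, so no multiplication by $\omega^2$ is needed --- taking that instruction literally would change the right-hand side to $-\omega^2\ket{\psi}$.
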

\begin{proof}
By Proposition \ref{prop:ring_eq}, $H\ket{\psi} = H^\ast\ket{\psi}$, and by identities (\ref{group_rel_1}), (\ref{group_rel_2}), $(\omega A_0^\ast A_1 + \omega A_1 A_0^\ast)\ket{\psi} = (\omega^\ast A_1^\ast A_0 + \omega^\ast A_0 A_1^\ast)\ket{\psi}$. Putting these together, we obtain $A_0 A_1 A_0\ket{\psi} = \omega A_0^\ast A_1^\ast A_0^\ast \ket{\psi}.$
\end{proof}

\begin{prop}\label{prop:obs_com}
$A_0 A_1^\ast A_0^\ast A_1 \ket{\psi} = A_0^\ast A_1 A_0 A_1^\ast \ket{\psi}$ in other words $A_0 A_1^\ast$ and $A_0^\ast A_1$ commute on $\ket{\psi}$
\end{prop}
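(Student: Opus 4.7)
The plan is to show both sides equal the same expression by applying the state-dependent commutation relations already established in Section \ref{sec:algebraic_relations_G3}. The key inputs are (\ref{group_rel_1}), (\ref{group_rel_2}), Proposition \ref{eq:important_1}, and the observable identities $A_1^{\ast 2} = A_1$ and $A_1^2 = A_1^\ast$ (coming from $A_1^3 = I$).

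For the left-hand side, I would apply (\ref{group_rel_1}) to the trailing subword $A_0^\ast A_1 \ket{\psi}$ to get
\begin{align*}
A_0 A_1^\ast A_0^\ast A_1 \ket{\psi} = \omega\, A_0 A_1^\ast A_1^\ast A_0 \ket{\psi} = \omega\, A_0 A_1 A_0 \ket{\psi},
\end{align*}
and then invoke Proposition \ref{eq:important_1} to conclude $A_0 A_1^\ast A_0^\ast A_1 \ket{\psi} = \omega^2 A_0^\ast A_1^\ast A_0^\ast \ket{\psi} = \omega^\ast A_0^\ast A_1^\ast A_0^\ast \ket{\psi}$.

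For the right-hand side, I would use (\ref{group_rel_2}) in the form $A_0 A_1^\ast \ket{\psi} = \omega^\ast A_1 A_0^\ast \ket{\psi}$ applied to the trailing subword $A_0 A_1^\ast\ket{\psi}$:
\begin{align*}
A_0^\ast A_1 A_0 A_1^\ast \ket{\psi} = \omega^\ast A_0^\ast A_1 A_1 A_0^\ast \ket{\psi} = \omega^\ast A_0^\ast A_1^\ast A_0^\ast \ket{\psi}.
\end{align*}
Since both sides equal $\omega^\ast A_0^\ast A_1^\ast A_0^\ast \ket{\psi}$, the claim follows.

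There is no real obstacle here; the only subtlety is legitimacy of applying the state-dependent relations inside longer products, which is fine because in each case the relation is being applied precisely at the rightmost position where $\ket{\psi}$ sits, and the remaining factors on the left are just operators acting on the resulting vector. In other words, from $X\ket{\psi} = Y\ket{\psi}$ and any operator $Z$ we may conclude $ZX\ket{\psi} = ZY\ket{\psi}$, which is exactly the way both manipulations above are used.
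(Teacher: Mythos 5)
Your proposal is correct and is essentially the paper's own argument: the paper's proof is a single chain from the left side to the right side that passes through exactly your common intermediate $\omega^\ast A_0^\ast A_1^\ast A_0^\ast \ket{\psi}$, using (\ref{group_rel_1}), Proposition \ref{eq:important_1}, and (\ref{group_rel_2}) in the same way. Reorganizing it as a meet-in-the-middle computation changes nothing of substance, and your remark about applying state-dependent relations at the rightmost position is precisely the justification the paper relies on.
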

\begin{proof}
To see this write
\begin{align*}
A_0 A_1^\ast A_0^\ast A_1 \ket{\psi} &= \omega A_0 A_1^\ast  A_1^\ast A_0\ket{\psi}\\
&= \omega A_0 A_1 A_0\ket{\psi}\\
&= \omega^\ast A_0^\ast A_1^\ast A_0^\ast \ket{\psi}\\
&= \omega^\ast A_0^\ast A_1 A_1 A_0^\ast \ket{\psi}\\
&= A_0^\ast A_1 A_0 A_1^\ast \ket{\psi},
\end{align*}
where in the first line we used \ref{group_rel_1}, in the third line we used \ref{eq:important_1}, and in the fifth line we used \ref{group_rel_2}.

\end{proof}

\subsection{Rigidity of \texorpdfstring{$\mathcal{G}_3$}{G3}}\label{sec:rig}
Suppose $(\{A_0,A_1\},\{B_0,B_1\},\ket{\psi})$ is an optimal strategy for $\mathcal{G}_3$. By Theorem \ref{group_theorem}, we know that the optimal operators of Alice defined in section \ref{Observables} generate the group
\begin{align*}
G_3 = \left\langle J,P_0,P_1: J^3, P_0^3, P_1^3, [J,P_0],[J,P_1],J(P_0 P_1^{-1})^2\right\rangle,
\end{align*}
The same group is generated by Bob's operators as in Remark \ref{remark:Bob-group}. We apply Corollary \ref{corollary: application of gh} with $G_A=G_B=G_3$. In order to do this, we first prove the following lemma stating that every optimal strategy is a $\ket{\psi}$-representation of $G$.
\begin{lem}\label{representability_lemma} 
Let $(\{A_0,A_1\},\{B_0,B_1\},\ket{\psi})$ be an optimal strategy for $\mathcal{G}_3$. Define maps $f_A, f_B: G_3\rightarrow \Unitary_d(\complex)$ by 
\begin{alignat*}{4}
f_A(J)&=\omega_3 I, \ f_A(P_0) &&= A_0, \ f_A(P_0 P_1^{-1}) &&=A_0 A_1^\ast, \ f_A(P_0^{-1} P_1) &&=A_0^\ast A_1\\
f_B(J)&=\omega_3 I, \ f_B(P_0) &&= B_0^\ast, \ f_B(P_0 P_1^{-1}) &&=B_0^\ast B_1^\ast, \ f_B(P_0^{-1} P_1) &&=B_0 B_1
\end{alignat*}
and extend it to all of $G_3$ using the normal form from Lemma \ref{normform}. Then $f_A,f_B$ are $\ket{\psi}$-representations of $G_3$.
\end{lem}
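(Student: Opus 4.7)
The plan is to lift $f_A$ to a group homomorphism $\tilde{f}_A: \hat{G} \to \Unitary_d(\complex)$ on the free group $\hat{G} = \langle J, P_0, P_1\rangle$ with $\tilde{f}_A(J) = \omega_3 I$, $\tilde{f}_A(P_0) = A_0$, $\tilde{f}_A(P_1) = A_1$, and similarly for $\tilde{f}_B$. Using $A_i^3 = B_j^3 = I$, these agree with the values of $f_A, f_B$ prescribed on $\{J, P_0, P_0 P_1^{-1}, P_0^{-1} P_1\}$. Showing $f_A$ is a $\ket{\psi}$-representation of $G_3$ then amounts to showing $\tilde{f}_A(n)\ket{\psi} = \ket{\psi}$ for every $n$ in the kernel of $\hat{G} \to G_3$, i.e., for every $n$ in the normal closure of the defining relations.

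First I would verify that each defining relation $r$ of $G_3$ satisfies $\tilde{f}_A(r)\ket{\psi} = \ket{\psi}$. The relations $J^3$, $P_0^3$, $P_1^3$, $[J, P_0]$, $[J, P_1]$ hold as operator identities via $\omega_3^3 = 1$, the order-three property of the observables, and the scalar nature of $\omega_3 I$. The only nontrivial relation $J(P_0 P_1^{-1})^2 = 1$ requires $(A_0 A_1^{\ast})^2\ket{\psi} = \omega^{-1}\ket{\psi}$, which follows in one line from (\ref{group_rel_2}):
\begin{align*}
(A_0 A_1^{\ast})^2\ket{\psi} = A_0 A_1^{\ast}\Paren{\omega^{-1} A_1 A_0^{\ast}}\ket{\psi} = \omega^{-1} A_0 A_0^{\ast}\ket{\psi} = \omega^{-1}\ket{\psi}.
\end{align*}
Taking partial inner products with $\ket{j_B}$ in the Schmidt decomposition $\ket{\psi} = \sum_i \sqrt{p_i}\ket{i_A}\ket{i_B}$ upgrades this to the operator identity $(A_0 A_1^{\ast})^2 = \omega^{-1} I$ on $\supp(\sigma_A)$; by the same argument applied to the Bob-side analogue of (\ref{group_rel_2}) (obtained from (\ref{A_0A_1_1})--(\ref{A_1A_0_4})), $(B_0^{\ast} B_1^{\ast})^2 = \omega^{-1} I$ on $\supp(\sigma_B)$.

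Second, I would promote this to the full $\ket{\psi}$-representation condition by handling the closure of the kernel under conjugation. The mechanism is the pairing identity (\ref{A_0A_1_4}), $A_0 A_1^{\ast}\ket{\psi} = B_0^{\ast} B_1^{\ast}\ket{\psi}$, combined with $[A_i, B_j] = 0$. For any Alice word $G = \tilde{f}_A(g)$, the pure-Bob operator $\omega(B_0^{\ast} B_1^{\ast})^2$ fixes $\ket{\psi}$ and commutes with $G^{-1}$, so it fixes $G^{-1}\ket{\psi}$. Combined with the equality $(A_0 A_1^{\ast})^2\ket{\psi} = (B_0^{\ast} B_1^{\ast})^2\ket{\psi}$, the Schmidt-decomposition trick applied to this bipartite identity, and commutativity of Bob-side operators with Alice's, this transports the relation through the conjugation to yield $\omega(A_0 A_1^{\ast})^2 G^{-1}\ket{\psi} = G^{-1}\ket{\psi}$. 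Hence every conjugate of the nontrivial relation acts as the identity on $\ket{\psi}$ under $\tilde{f}_A$, and the same argument applied on Bob's side gives the statement for $\tilde{f}_B$.

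The main obstacle is precisely this transport step. A priori the identity $(A_0 A_1^{\ast})^2\ket{\psi} = \omega^{-1}\ket{\psi}$ is only state-dependent, and the passage to the whole orbit $\tilde{f}_A(\hat{G})\ket{\psi}$ relies delicately on substituting the pure-Bob equivalent $(B_0^{\ast} B_1^{\ast})^2$ supplied by (\ref{A_0A_1_4}). What makes this go through in $\mathcal{G}_3$ is that the only nontrivial $G_3$ relation involves the combination $A_0 A_1^{\ast}$ (and by symmetry $A_0^{\ast} A_1$), and both of these combinations admit pure-Bob analogues on $\ket{\psi}$ by (\ref{A_0A_1_3})--(\ref{A_0A_1_4}); the commuting Alice--Bob structure then ferries the state-dependent identity to every point in the orbit.
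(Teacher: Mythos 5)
Your setup is fine and partly parallels the paper: reducing the claim to showing that $\tilde{f}_A(n)\ket{\psi}=\ket{\psi}$ for every $n$ in the normal closure of the relators is a legitimate reformulation, and your one-line verification $(A_0A_1^{\ast})^2\ket{\psi}=\omega^{-1}\ket{\psi}$ from (\ref{group_rel_2}) is correct. The gap is in the transport step, and it is a genuine one. From $\omega(B_0^{\ast}B_1^{\ast})^2\ket{\psi}=\ket{\psi}$ and commutativity you only get
\begin{align*}
\omega(B_0^{\ast}B_1^{\ast})^2\,G^{-1}\ket{\psi}=G^{-1}\,\omega(B_0^{\ast}B_1^{\ast})^2\ket{\psi}=G^{-1}\ket{\psi},
\end{align*}
whereas you need $\omega(A_0A_1^{\ast})^2\,G^{-1}\ket{\psi}=G^{-1}\ket{\psi}$. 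Bridging these requires $(A_0A_1^{\ast})^2\eta=(B_0^{\ast}B_1^{\ast})^2\eta$ for $\eta=G^{-1}\ket{\psi}$, i.e.\ a state-dependent identity evaluated on a state other than $\ket{\psi}$; relations like (\ref{A_0A_1_4}) only license left-multiplication of both sides of an identity that already ends in $\ket{\psi}$, not substitution inside a word whose rightmost factor is an arbitrary Alice operator. Your Schmidt-decomposition upgrade gives $(A_0A_1^{\ast})^2=\omega^{-1}I$ only on $\supp(\sigma_A)$, and $G^{-1}\ket{\psi}$ has Alice marginal supported on $G^{-1}\supp(\sigma_A)$; you have not shown (and it does not follow from the relations you invoke) that $\supp(\sigma_A)$ is invariant under $A_0,A_1$.

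That the gap is substantive and not merely presentational can be seen on the single conjugate $P_1\,\bigl(J(P_0P_1^{-1})^2\bigr)\,P_1^{-1}$: its image fixes $\ket{\psi}$ iff $A_0A_1^{\ast}A_0A_1\ket{\psi}=\omega^{\ast}A_1^{\ast}\ket{\psi}$, and the word $A_0A_1\ket{\psi}$ is one about which (\ref{group_rel_1})--(\ref{group_rel_2}) and the pairing relations say nothing (they only control the combinations $A_0^{\pm1}A_1^{\mp1}$ acting on $\ket{\psi}$). This is exactly why the paper's proof does not stop at the two group relations: it first establishes the ring relation $(H+I)\ket{\psi}=0$ (Propositions \ref{prop:sum_of_rings_eqs} and \ref{prop:ring_eq}, which need $T_6\ket{\psi}=0$, the $S_i$ relations, and a norm argument), extracts Proposition \ref{eq:important_1} ($A_0A_1A_0\ket{\psi}=\omega A_0^{\ast}A_1^{\ast}A_0^{\ast}\ket{\psi}$), and then verifies multiplicativity of $f_A$ case by case on normal forms from Lemma \ref{normform}, using a reduction in which the trailing $(P_0P_1^{-1})^{q_1'}(P_0^{-1}P_1)^{q_2'}$ of the \emph{right-hand} factor is converted to Bob operators via (\ref{A_0A_1_3})--(\ref{A_0A_1_4}) \emph{while it is still the innermost thing acting on $\ket{\psi}$}, and only then commuted out of the way. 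To repair your argument you would need either to prove the invariance of $\supp(\sigma_A)$ under $A_0,A_1$, or to import the additional ring-relation machinery; as written, the conjugation step does not go through.
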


\begin{proof}
These maps are well defined since every element of $G_3$ can be written uniquly as $$J^{i} P_{0}^{j} \big( P_{0}P_{1}^{-1} \big)^{q_{1}} \big( P_{0}^{-1}P_{1} \big)^{q_{2}}$$
for $i,j\in[3],q_1,q_2\in [2]$. All we need is that $f_A(g)f_A(g')\ket{\psi} = f_A(gg')\ket{\psi}$ for all $g,g'\in G_3$. The proof is reminiscent of the proof that $gg'$ can be written in normal form for every $g,g'\in G_3$. Except that we need to be more careful here, since we are dealing with Alice's operators $A_0,A_1$, and not the abstract group elements $P_0,P_1$. Therefore we can only use the state-dependent relations derived in the previous section. We must show that
\begin{align}\label{eq:giant}
&f_A(J^{i} P_0^{j} (P_0 P_1^{-1})^{q_1} (P_0^{-1} P_1)^{q_2}) f_A(J^{i'} P_0^{j'} (P_0 P_1^{-1})^{q'_1} (P_0^{-1} P_1)^{q'_2})\ket{\psi} \nonumber\\&\quad= f_A(J^{i} P_0^{j} (P_0 P_1^{-1})^{q_1} (P_0^{-1} P_1)^{q_2}    J^{i'} P_0^{j'} (P_0 P_1^{-1})^{q'_1} (P_0^{-1} P_1)^{q'_2})\ket{\psi}
\end{align}
for all $i,j,i',j' \in [3]$ and $q_1,q_2,q'_1,q'_2 \in [2]$.  
\begin{claim}
Without loss of generality, we can assume $i=j=i'=q'_1=q'_2=0$.
\end{claim}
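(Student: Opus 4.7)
The plan is to peel off the five exponents $i, j, i', q_1', q_2'$ from \eqref{eq:giant} one at a time, showing in each case that the stripped factor cancels between the two sides, either as an operator identity or after applying $\ket{\psi}$.

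First I would eliminate $i$ and $i'$. Since $J$ is central in $G_3$ (a defining relation) and $f_A(J)=\omega I$ is a scalar, it follows directly from the normal form of Lemma~\ref{normform} that $f_A(J^a u)=\omega^a f_A(u)$ for every $u\in G_3$. Both sides of \eqref{eq:giant} therefore pick up a common factor $\omega^{i+i'}$ which cancels, so we may assume $i=i'=0$.

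Next I would eliminate $j$ by establishing the operator identity $f_A(P_0\cdot u)=A_0\cdot f_A(u)$ for all $u\in G_3$. Writing $u=J^aP_0^b(P_0P_1^{-1})^{q_1}(P_0^{-1}P_1)^{q_2}$ in normal form, the commutation of $P_0$ with $J$ in $G_3$ means $P_0\cdot u$ is already in normal form with the $P_0$-exponent incremented modulo $3$; since $A_0^3=I$, the operator identity follows. Iterating gives $f_A(P_0^j u)=A_0^jf_A(u)$, so $A_0^j$ cancels from both sides of \eqref{eq:giant} and we may take $j=0$.

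Eliminating $q_1'$ and $q_2'$ is subtler and genuinely uses $\ket{\psi}$. For each $s\in\{P_0P_1^{-1},P_0^{-1}P_1\}$ I would prove the right-equivariance $f_A(v\cdot s)\ket{\psi}=f_A(v)f_A(s)\ket{\psi}$ for every $v\in G_3$. Writing $v$ in normal form, Lemma~\ref{comm_lemma} lets me commute $s$ at the group level past the $(P_0P_1^{-1})$- and $(P_0^{-1}P_1)$-factors of $v$. The only nontrivial case is when the exponent of $s$ in $vs$ reaches $2$, triggering the derived $G_3$-relations $(P_0P_1^{-1})^2=J^{-1}$ and $(P_0^{-1}P_1)^2=J$ (the latter from the remark after the presentation of $G_n$, using $J^i(P_0^iP_1^{-i})^2=1$ for $i=-1$). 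Matching the two sides on $\ket{\psi}$ then requires two ingredients: (a) Proposition~\ref{prop:obs_com}, for state-dependent commutation of $A_0A_1^\ast$ with $A_0^\ast A_1$; and (b) the scalar identities $(A_0A_1^\ast)^2\ket{\psi}=\omega^{-1}\ket{\psi}$ and $(A_0^\ast A_1)^2\ket{\psi}=\omega\ket{\psi}$, each a one-line consequence of \eqref{group_rel_1} or \eqref{group_rel_2}, for instance
\[(A_0^\ast A_1)^2\ket{\psi}=A_0^\ast A_1\cdot \omega A_1^\ast A_0\ket{\psi}=\omega\, A_0^\ast A_0\ket{\psi}=\omega\ket{\psi}.\]
Applying this right-equivariance at most once for each of $q_1'\in\{0,1\}$ and $q_2'\in\{0,1\}$ cancels $f_A(s)$ from both sides of \eqref{eq:giant}, reducing us to the case $q_1'=q_2'=0$.

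The main obstacle is purely bookkeeping: left-equivariance in $P_0$ is an honest operator identity because $P_0$ sits in the prefix of the normal form, whereas right-equivariance by the $(P_0P_1^{-1})$-type generators is only a $\ket{\psi}$-identity and uses essentially every state-dependent relation assembled in Section~\ref{sec:algebraic_relations_G3}. After the reduction, what remains is a finite check for $q_1,q_2\in\{0,1\}$ and $j'\in\{0,1,2\}$, handled by the same ingredients.
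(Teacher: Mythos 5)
Your reductions of $i$, $i'$, and $j$ are sound and essentially flesh out what the paper leaves implicit: prepending $J^aP_0^b$ to a normal form yields another normal form with exponents shifted mod $3$, so $f_A(J^aP_0^b\,u)=\omega^aA_0^b\,f_A(u)$ holds as an operator identity and the common invertible left factor $\omega^{i+i'}A_0^{j}$ cancels from both sides of (\ref{eq:giant}). Your auxiliary facts for the last step are also correct: $(P_0^{-1}P_1)^2=J$ and $(P_0P_1^{-1})^2=J^{-1}$ in $G_3$, and the matching state identities $(A_0^\ast A_1)^2\ket{\psi}=\omega\ket{\psi}$ and $(A_0A_1^\ast)^2\ket{\psi}=\omega^{-1}\ket{\psi}$ do follow from (\ref{group_rel_1}) and (\ref{group_rel_2}).

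The gap is in the claim that right-equivariance lets you ``cancel $f_A(s)$ from both sides.'' Write $g'=g''s$ with $s\in\{P_0P_1^{-1},P_0^{-1}P_1\}$ the last factor of $g'$. Applying your right-equivariance to the right-hand side of (\ref{eq:giant}) and the operator factorization $f_A(g')=f_A(g'')f_A(s)$ to the left-hand side leaves you with
\[
f_A(g)f_A(g'')\,\bigl(f_A(s)\ket{\psi}\bigr)\;\overset{?}{=}\;f_A(gg'')\,\bigl(f_A(s)\ket{\psi}\bigr),
\]
i.e.\ agreement of two operators on the vector $f_A(s)\ket{\psi}=A_0^\ast A_1\ket{\psi}$ (or $A_0A_1^\ast\ket{\psi}$), not on $\ket{\psi}$. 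The case $q'_1=q'_2=0$ only gives agreement on $\ket{\psi}$, so this is not a reduction: a state-dependent factor sitting between the operators and the state cannot be cancelled. The paper closes exactly this hole by transferring the trailing Alice factors to Bob's side: relations (\ref{A_0A_1_3}) and (\ref{A_0A_1_4}) give $(A_0A_1^\ast)^{q'_1}(A_0^\ast A_1)^{q'_2}\ket{\psi}=(B_0B_1)^{q'_2}(B_0^\ast B_1^\ast)^{q'_1}\ket{\psi}$, and since Bob's operators commute with all of Alice's, this Bob word slides to the far left of both sides of (\ref{eq:giant}) and becomes a common \emph{unitary left} factor, which is legitimately cancellable. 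Your argument never invokes Bob's operators, and without that move the elimination of $q'_1,q'_2$ does not go through; Proposition \ref{prop:obs_com} and your scalar identities serve only to reconcile the normal form of $gg'$ with $f_A$, not to effect the cancellation.
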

\begin{proof}
Fix $i,j,q_1,q_2,i',j',q'_1,q'_2$. We first show that without loss of generality we can assume $q'_1=q'_2=0$. By Lemma \ref{normform}, there exist $i'',j''\in [3],q''_1,q''_2\in[2]$ such that
\begin{align*} 
\bigparen{J^{i} P_0^{j} (P_0 P_1^{-1})^{q_1} (P_0^{-1} P_1)^{q_2}} \bigparen{J^{i'} P_0^{j'}} = J^{i''} P_0^{j''} (P_0 P_1^{-1})^{q''_1} (P_0^{-1} P_1)^{q''_2}.
\end{align*}
So it also holds that
\begin{align*} 
\bigparen{J^{i} P_0^{j} (P_0 P_1^{-1})^{q_1} (P_0^{-1} P_1)^{q_2}} \bigparen{J^{i'} P_0^{j'} (P_0 P_1^{-1})^{q'_1} (P_0^{-1} P_1)^{q'_2}} = J^{i''} P_0^{j''} (P_0 P_1^{-1})^{q''_1 + q'_1} (P_0^{-1} P_1)^{q''_2+q'_2} 
\end{align*}
since by Lemma \ref{comm_lemma}, $P_0 P_1^{-1}$ and $P_0^{-1} P_1$ commute. So the right-hand-side of (\ref{eq:giant}) can be written
\begin{align*}
&f_A(J^{i} P_0^{j} (P_0 P_1^{-1})^{q_1} (P_0^{-1} P_1)^{q_2}    J^{i'} P_0^{j'} (P_0 P_1^{-1})^{q'_1} (P_0^{-1} P_1)^{q'_2})\ket{\psi}\\&\quad=
f_A(J^{i''} P_0^{j''} (P_0 P_1^{-1})^{q''_1 + q'_1} (P_0^{-1} P_1)^{q''_2+q'_2} )\ket{\psi}\\
&\quad=
\omega^{i''} A_0^{j''} (A_0 A_1^{-1})^{q''_1 + q'_1} (A_0^{-1} A_1)^{q''_2+q'_2} \ket{\psi}\\
&\quad=(B_0 B_1)^{q'_2} (B_0^\ast B_1^\ast)^{q'_1}
\omega^{i''} A_0^{j''} (A_0 A_1^{-1})^{q''_1} (A_0^{-1} A_1)^{q''_2} \ket{\psi}\\
&\quad=(B_0 B_1)^{q'_2} (B_0^\ast B_1^\ast)^{q'_1}
f_A(J^{i''} P_0^{j''} (P_0 P_1^{-1})^{q''_1} (P_0^{-1} P_1)^{q''_2}) \ket{\psi}\\
&\quad=(B_0 B_1)^{q'_2} (B_0^\ast B_1^\ast)^{q'_1}
f_A\paren{\paren{J^{i} P_0^{j} (P_0 P_1^{-1})^{q_1} (P_0^{-1} P_1)^{q_2}} \paren{J^{i'} P_0^{j'}}} \ket{\psi},
\end{align*}
where in the fourth equality, we used (\ref{A_0A_1_3}) and (\ref{A_0A_1_4}) and the fact that Alice and Bob's operators commute.

Also since Alice and Bob's operators commute 
\begin{align*}
f_A(J^{i'} P_0^{j'} (P_0 P_1^{-1})^{q'_1} (P_0^{-1} P_1)^{q'_2})\ket{\psi} &= \omega^{i'} A_0^{j'} (A_0 A_1^\ast)^{q'_1} (A_0^\ast A_1)^{q'_2} \ket{\psi}\\
&=(B_0 B_1)^{q'_2}\omega^{i'} A_0^{j'} (A_0 A_1^\ast)^{q'_1}\ket{\psi}\\
&=(B_0 B_1)^{q'_2} (B_0^\ast B_1^\ast)^{q'_1} \omega^{i'} A_0^{j'} \ket{\psi}\\
&=(B_0 B_1)^{q'_2} (B_0^\ast B_1^\ast)^{q'_1} f_A(J^{i'} P_0^{j'})\ket{\psi}.
\end{align*}
Therefore the left-hand-side of (\ref{eq:giant}) can be written as
\begin{align*}
&f_A(J^{i} P_0^{j} (P_0 P_1^{-1})^{q_1} (P_0^{-1} P_1)^{q_2}) f_A(J^{i'} P_0^{j'} (P_0 P_1^{-1})^{q'_1} (P_0^{-1} P_1)^{q'_2})\ket{\psi} \\&\quad= (B_0 B_1)^{q'_2} (B_0^\ast B_1^\ast)^{q'_1} f_A(J^{i} P_0^{j} (P_0 P_1^{-1})^{q_1} (P_0^{-1}P_1)^{q_2})f_A(J^{i'} P_0^{j'})\ket{\psi}
\end{align*}

Since $B_0,B_1$ are unitaries, (\ref{eq:giant}) is equivalent to the following identity
\begin{align*}
f_A(J^{i} P_0^{j} (P_0 P_1^{-1})^{q_1} (P_0^{-1}P_1)^{q_2})f_A(J^{i'} P_0^{j'})\ket{\psi} = f_A\paren{\paren{J^{i} P_0^{j} (P_0 P_1^{-1})^{q_1} (P_0^{-1} P_1)^{q_2}} \paren{J^{i'} P_0^{j'}}} \ket{\psi},    
\end{align*}
in other words we can assume without loss of generality $q'_1=q'_2=0$. The case of $i=j=0$ is handled similarly. Also since $J$ and $f(J)$ are both central, we can assume $i' = 0$. 
\end{proof}
By this claim, we just need to verify
\begin{align}
f_A((P_0 P_1^{-1})^{q_1} (P_0^{-1} P_1)^{q_2}) f_A(P_0^{j'})\ket{\psi} = f_A((P_0 P_1^{-1})^{q_1} (P_0^{-1} P_1)^{q_2}P_0^{j'})\ket{\psi}
\end{align}
There are 12 cases to consider: $q_1,q_2 \in [2], j'\in [3]$. The case of $j'=0$ is trivial, and the case of $j'=2$ is handled similar to the case of $j'=1$. So we only consider the case of $j'=1$. The case of $q_1=q_2 = 0$ is trivial. We analyse the remaining three cases one-by-one:
\begin{itemize}
\item $q_1=0,q_2 =1$: First note that
\begin{align*}
(P_0^{-1} P_1) P_0 = P_0 P_0 P_1^{-1} P_1^{-1} P_0 = J^2 P_0 (P_0 P_1^{-1}) (P_0^{-1} P_1),
\end{align*}
which allows us to write
\begin{align*}
f_A((P_0^{-1} P_1)) f_A(P_0) \ket{\psi}&=  A_0^\ast A_1 A_0 \ket{\psi} \\
&= A_0^\ast A_1^\ast A_1^\ast A_0 \ket{\psi}\\
&= \omega^\ast A_0^\ast A_1^\ast A_0^\ast A_1 \ket{\psi}\\
&= \omega^\ast A_0 (A_0 A_1^\ast) (A_0^\ast A_1)\ket{\psi}\\
&= f_A(J^2 P_0 (P_0 P_1^{-1}) (P_0^{-1} P_1)) \ket{\psi} \\
&= f_A((P_0^{-1} P_1) P_0)\ket{\psi},
\end{align*}
where in the third line we used (\ref{group_rel_1}). 

\item $q_1=1,q_2=0$:
\begin{align*}
(P_0 P_1^{-1}) P_0 = J^2 P_0 (P_0^{-1} P_1)
\end{align*}
which allows us to write
\begin{align*}
f_A(P_0 P_1^{-1}) f_A(P_0)\ket{\psi} &= (A_0 A_1^\ast) A_0 \ket{\psi} \\
&= A_0 (A_1^\ast A_0)\ket{\psi} \\
&= \omega^\ast A_0 (A_0^\ast A_1)\ket{\psi} \\
&= f_A(J^{2} P_0 (P_0^{-1} P_1))\ket{\psi} \\
&= f_A((P_0 P_1^{-1}) P_0)\ket{\psi},
\end{align*}
where in the third line we used (\ref{group_rel_1}). 

\item $q_1=q_2=1$:
\begin{align*}
(P_0 P_1^{-1}) (P_0^{-1} P_1) P_0 = J (P_0 P_1^{-1}) (P_1^{-1} P_0) P_0 = J P_0 (P_1 P_0^{-1}) = J^{2} P_0 (P_0 P_1^{-1}).
\end{align*}
Now write
\begin{align*}
f_A((P_0 P_1^{-1}) (P_0^{-1} P_1)) f_A(P_0)\ket{\psi} &= A_0 A_1^\ast A_0^\ast A_1 A_0\ket{\psi}\\
&=  A_0 A_1^\ast A_0 A_0 A_1 A_0\ket{\psi}\\
&= \omega A_0 A_1^\ast A_0 A_0^\ast A_1^\ast A_0^\ast\ket{\psi}\\
&= \omega A_0 (A_1 A_0^\ast)\ket{\psi}\\
&= \omega^\ast A_0 (A_0 A_1^\ast)\ket{\psi}\\
&= f_A(J^{2} P_0 (P_0 P_1^{-1}))\ket{\psi} \\
&= f_A((P_0 P_1^{-1}) (P_0^{-1} P_1) P_0)\ket{\psi},
\end{align*}
where in the third line we used Proposition \ref{eq:important_1} and in the second last line we used (\ref{group_rel_2}). 

\end{itemize}
The proof that $f_B$ is a $\ket{\psi}$-representation follows similarly. 
\end{proof}

\begin{thm}
$\mathcal{G}_3$ is rigid.
\end{thm}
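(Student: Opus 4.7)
The plan is to invoke Corollary \ref{corollary: application of gh} with $G_A = G_B = G_3$. Lemma \ref{representability_lemma} already supplies the first hypothesis of the corollary, namely that every optimal strategy induces a pair of $\ket{\psi}$-representations of $G_3$. What remains is (i) to identify a unique pair of irreducible representations $(\bar\rho,\bar\sigma)$ of $G_3$ whose induced strategy achieves $\nu^\ast(\mathcal{G}_3) = 5/6$, and (ii) to verify that for this pair the optimizing state is $\ket{\psi_3}$ up to a global phase.

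For (i), I would first use the isomorphism $G_3 \cong \integer_3 \times A_4$ from Section \ref{Results} to enumerate the irreducible representations of $G_3$: nine one-dimensional irreps (products of characters of the two factors) and three three-dimensional irreps (tensoring the characters of $\integer_3$ with the unique 3-dim irrep of $A_4$). The nine 1-dim irreps yield commuting observables on each side, so the induced strategies are classical and attain value at most $3/4 < 5/6$. Hence any optimal irreducible pair must come from the three 3-dim irreps, which are distinguished by the scalar $\rho(J)\in\{1,\omega,\omega^2\}$. To select the right one, I would push the ring relation $(H+I)\ket{\psi}=0$ of Proposition \ref{prop:ring_eq} through the irreducible decomposition produced by the proof of Lemma \ref{lemma:application_of_gh}: for any optimal irreducible strategy $(\rho,\sigma,\ket{\phi})$ one obtains $\omega\,\rho\bigparen{P_0P_1P_0 + P_0^{-1}P_1 + P_1P_0^{-1}}\ket{\phi} = -\ket{\phi}$. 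Using the relation $J^{-2}=(P_0P_1^{-1})^2$ from the presentation of $G_3$, one checks by direct computation on each of the three 3-dim irreps that the operator $\omega\rho(\cdot)+I$ has a non-trivial kernel only when $\rho(J)=\omega I$; this singles out a unique $\bar\rho$. Running the same argument for $f_B$ and using the isomorphism $G_n \cong G'_n$ in Remark \ref{remark:Bob-group}, one obtains the matching $\bar\sigma$ on Bob's side. The resulting irreducible strategy is precisely the one from Example \ref{z3_example}.

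For (ii), I would form the bias operator $\mathcal{B}_3(\bar\rho(P_0),\bar\rho(P_1),\bar\sigma(Q_0),\bar\sigma(Q_1))$ on $\complex^3\otimes\complex^3$ and verify by finite-dimensional linear algebra that its top eigenvalue $6$ (matched to the SOS upper bound of (\ref{SOS_Z3})) has a one-dimensional eigenspace spanned by $\ket{\psi_3}$. Since Example \ref{z3_example} already exhibits $\ket{\psi_3}$ as an eigenvector of eigenvalue $6$, the only content of this step is simple-multiplicity of the eigenvalue, which is a 9-by-9 computation. Together, (i) and (ii) verify both hypotheses of Corollary \ref{corollary: application of gh}, which then produces the isometries $V_A, V_B$ and junk state witnessing the rigidity of $\mathcal{G}_3$ in the sense of Section \ref{sec:non-local-games}.

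The main obstacle is step (i): showing that the ring relation really distinguishes the three 3-dim irreps. A convenient way is to fix explicit models by taking the three twists of the representation $P_0\mapsto A_0$, $P_1\mapsto A_1$, $J\mapsto \omega^k I$ (for $k\in\{0,1,2\}$) from Section \ref{Observables}, substitute into $H+I$, and observe that the resulting operator is invertible for $k\neq 1$. Once this is done, the rest is a mechanical assembly of the SOS framework laid out in Section \ref{sec:sos_prelim}, using the algebraic consequences already proved in Section \ref{sec:algebraic_relations_G3}.
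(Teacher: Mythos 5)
Your proposal is correct and follows essentially the same route as the paper's proof: enumerate the twelve irreducible representations of $G_3$, use the ring relation $(H_3+I)\ket{\psi}=0$ from Proposition \ref{prop:ring_eq} to single out the unique optimal irreducible pair (the one with $\rho(J)=\omega I$, matching Example \ref{z3_example}), check that $\ket{\psi_3}$ is the unique top eigenvector of the corresponding bias operator, and conclude via Corollary \ref{corollary: application of gh}. The only cosmetic differences are that you organize the irreps via $G_3\cong\integer_3\times A_4$ and rule out the wrong three-dimensional irreps by invertibility of $H+I$ rather than by direct enumeration, both of which are sound.
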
 
\begin{proof}
The representation theory of $G_3$ is simple. There are nine irreducible representation of dimension one: These are given by $P_0\mapsto \omega^i, P_1\mapsto \omega^j, J\mapsto \omega^{2(j-i)}$ for $i,j\in [3]$. It also has three irreducible representations $g_1,g_2,g_3$ of dimension three defined by
\begin{align*}
g_1(P_0) &= \begin{pmatrix} 0 &0 &1 \\ 1 &0 &0\\ 0 &1 &0 \end{pmatrix}, \ g_1(P_1) = \begin{pmatrix} 0 &0 &\omega^\ast\\ -\omega^\ast &0 &0\\ 0 &-\omega^\ast &0\end{pmatrix}, \
  g_1(J) =\begin{pmatrix} \omega &0 &0\\ 0 &\omega &0\\0 &0 &\omega\end{pmatrix},\\
g_2(P_0) &= \begin{pmatrix}0 &0 &1\\1 &0 &0\\ 0 &1 &0\end{pmatrix}, \ g_2(P_1) = \begin{pmatrix}0 &0 &-1\\ -1 &0 &0\\0 &1 &0\end{pmatrix}, \ g_2(J) = \begin{pmatrix}1 &0 &0\\ 0 &1 &0\\0 &0 &1\end{pmatrix},\\
g_3(P_0) &= \begin{pmatrix} 0 &1 &0\\ 0 &0 &1\\1 &0 &0\end{pmatrix}, \ g_3(P_1) = \begin{pmatrix}0 &\omega &0\\ 0 &0 &-\omega\\ -\omega &0 &0\end{pmatrix}, \ 
g_3(J) = \begin{pmatrix}\omega^\ast &0 &0\\ 0 &\omega^\ast &0\\ 0 &0 &\omega^\ast\end{pmatrix}.
\end{align*}

Among these $g_1$, is the only representation that gives rise to an optimal strategy. This follows from a simple enumeration of these 12 irreducible representations. However we could also immediately see this, since $g_1$ is the only irreducible representation that satisfies the ring relation $H_3+I=0$.

Define a unitarily equivalent irreducible representation $g'_1 = U g_1 U^\ast$ where $U = \begin{pmatrix}0 &1 &0\\1 &0 &0\\0 &0 &1 \end{pmatrix}$. Now $\tilde{A}_0 = g_1(P_0), \tilde{A}_1 = g_1(P_1), \tilde{B}_0 = g'_1(P_0)^\ast, \tilde{B}_1\coloneqq g'_1(P_1)$ is the same strategy defined in example \ref{z3_example}. 

In addition $$\ket{\psi_3} = \frac{1}{\sqrt{10}} \Paren{ (1 - z^{4})\ket{00} + 2\ket{12} + (1 + z^{2})\ket{21} }$$ is the unique state that maximizes $\nu(\mathcal{G}_3,\mathcal{S}_{g_1,g'_1,\ket{\psi}})$. This follows since $\ket{\psi_3}$ is the unique eigenvector associated with the largest eigenvalue of $\mathcal{B}_3(\tilde{A}_0,\tilde{A}_1,\tilde{B}_0,\tilde{B}_1)$. The rigidity of $\mathcal{G}_3$ follows from Corollary \ref{corollary: application of gh}.

\end{proof}

\begin{rmk} The game $\mathcal{G}_3$ is in fact a robust self-test. We omit the proof, but at a high-level, if a strategy $(\{A_0,A_1\},\{B_0,B_1\},\ket{\psi})$ is $\epsilon$-optimal for $\mathcal{G}_3$, then
\[\bra{\psi}(6I - \mathcal{B}_3)\ket{\psi}\leq O(\epsilon).\]
Consequently, $\norm{S_i\ket{\psi}}\leq O(\sqrt{\epsilon}), \norm{T_j\ket{\psi}}\leq O(\sqrt{\epsilon})$ for all $i \in [2],j \in [6]$. From which one obtains a robust version of every relation in this section. 
\end{rmk}

\section{SOS approach to solution group}\label{BCSwithSOS}
In this section we show that the connection between an LCS game over $\mathbb{Z}_2$ and its solution group shown in \cite{BCSCommuting} can be determined using sum of squares techniques. 

We will suppress the tensor product notation and simply represent a strategy for an LCS game $\mathcal{G}_{A,b}$ by a state $\ket{\psi} \in \mathcal{H}$ and a collection of commuting measurement systems $\lbrace E_{i,x} \rbrace $ and $ \lbrace F_{j,y} \rbrace $. Using the notation outlined in section \ref{LCS Prelims} we define the following sets of observables
\begin{itemize}
    \item{Alice's Observables:} $  A_{j}^{(i)}= \sum_{x: x_j=1}E_{i,x} - \sum_{x:x_j=-1}E_{i,x}$, for each $i \in [r]$ and $j\in V_i$
    \item{Bob's Observables:} $B_j=F_{j,1} - F_{j,-1}$ for each $j \in [s]$. 
\end{itemize}
Note $A_{j}^{(i)}$ commutes with $A_{j'}^{(i)}$ for all $i \in [r]$ and $j,j' \in V_i$ and $B_j$ commutes with $A_{j}^{(i)}$ for all $i,j$. These observables will satisfy the following identities: 

\begin{align}
    \sum_{x: x \in S_{i}} E_{i,x} &= \frac{1}{2} \left( I+ (-1)^{b_{i}}\prod_{k \in V_{i}} A^{(i)}_{k} \right) \label{eq1} \\
    \sum_{x: y=x_{j}} E_{i,x} &= \frac{1}{2} \left( I+yA_{j}^{(i)} \right) \label{eq2}
\end{align}

The probability of Alice and Bob winning the game is given by evaluating $\bra{\psi} v \ket{\psi} $ where
\begin{align*} v &= \sum_{\substack{i \in [r] \\j \in V_i}} \frac{1}{r |V_i|} \left( \sum_{\substack{x ,y: \\ x \in S_i \\ y=x_j }} E_{i,x}F_{j,y} \right) \\ &=\sum_{i,j} \frac{1}{2 r |V_i|}\left(1- \sum_{\substack{x ,y: \\ x \in S_i \\ y=x_j }} E_{i,x}F_{j,y} \right)^2.\end{align*}

Observe using identities \ref{eq1} and \ref{eq2} we have    

\begin{align*}
\left( 1- \sum_{\substack{x ,y: \\ x \in S_i \\ y=x_j }} E_{i,x}F_{j,y} \right) &=I-\sum_y F_{j,y}\sum_{\substack{x : \\ x \in S_i \\ y=x_j }}E_{i,x} \\
    &=I-\frac{1}{4}\sum_y F_{j,y} \left((I+yA_{j}^{(i)})(I+(-1)^{b_{i}}\prod_{k \in v_{i}} A_{k}^{(i)}) \right) \\ &= I-\frac{1}{4}\sum_y F_{j,y}\Bigg(I+yA_{j}^{(i)}+(-1)^{b_{i}}\prod_{k \in v_{i}} A_{k}^{(i)}+y(-1)^{b_{i}}\prod_{k \in v_{i}} A_{k}^{(i)} A_{j}^{(i)}\Bigg)\\
    &=I-\frac{1}{4}F_{j,1}\Bigg(I+A_{j}^{(i)}\!\begin{aligned}[t]&+(-1)^{b_{i}}\prod_{k \in v_{i}} A_{k}^{(i)}
    +(-1)^{b_{i}}\prod_{k \in v_{i}} A_{k}^{(i)}A_{j}^{(i)}\Bigg)\\
    &- \frac{1}{4}F_{j,-1} \Bigg(I -A_{j}^{(i)}+(-1)^{b_{i}}\prod_{k \in v_{i}} A_{k}^{(i)}+-(-1)^{b_{i}}\prod_{k \in v_{i}} A_{k}^{(i)} A_{j}^{(i)}\Bigg)\end{aligned}\\
    \phantom{R_{i,j}} &= I - \frac{1}{4}I -\frac{1}{4}B_jA_{j}^{(i)} -\frac{1}{4}(-1)^{b_{i}}\prod_{k \in v_{i}} A_{k}^{(i)} -\frac{1}{4}B_j (-1)^{b_{i}}\prod_{k \in v_{i}} A_{k}^{(i)}A_{j}^{(i)}\\
   &= \frac{1}{8}\Bigg( (I-B_jA_{j}^{(i)})^2 + (I-(-1)^{b_{i}}\prod_{k \in v_{i}} A_{k}^{(i)})^2 + (I- (-1)^{b_{i}}\prod_{k \in v_{i}} A_{k}^{(i)}A_{j}^{(i)}B_j)^2 \Bigg).
\end{align*}

Thus Alice and Bob are using a perfect strategy if and only if \[
0 =(I-B_jA_{j}^{(i)}) \ket{\psi} \\ = (I-(-1)^{b_{i}}\prod_{k \in v_{i}}A_{k}^{(i)}) \ket{\psi}\\ = (I- (-1)^{b_{i}}\prod_{k \in v_{i}} A_{k}^{(i)}A_{j}^{(i)}B_j) \ket{\psi}. \]
The above equalities will hold exactly when the following two identities hold  for all  $i$  and $j \in V_i$,
\begin{align}
     B_j \ket{\psi}&=A_{j}^{(i)} \ket{\psi} \label{eq3} \\
      \ket{\psi} &=  (-1)^{b_{i}} \prod_{k \in v_{i}}A_{k}^{(i)} \ket{\psi} \label{eq4}
\end{align}
Using identities \ref{eq3} and \ref{eq4} it is possible to define a $\ket{\psi}$-representation for the solution group $G_{A,b}$.

\section{A non-rigid pseudo-telepathic LCS game} \label{semirigid}

The canonical example of a pseudo-telepathic LCS games is the Mermin-Peres magic square game \cite{Mermin} defined in the following figure.

\begin{figure}[H]
\[\begin{array}{c@{}c@{}c@{}c@{}c@{}c}
e_{1} & \text{ --- } & e_{2} & \text{ --- } & e_{3} \\
 \vert & & \vert & & \vert\vert \\
e_{4} & \text{ --- } & e_{5} & \text{ --- } & e_{6} \\
 \vert & & \vert & & \vert\vert\\
e_{7} & \text{ --- } & e_{8} & \text{ --- } & e_{9} \\
\end{array}\]

\caption{This describes the Mermin-Peres magic square game. Each single-line indicates that the variables along the line multiply to $1$, and the double-line indicates that the variables along the line multiply to $-1$. }
\label{fig:magic_square}

\end{figure}

It is well-known that the Mermin-Peres magic square game has the following operator solution for which the corresponding quantum strategy is rigid \cite{Wu}.

\begin{gather*}
    A_{1} = I \otimes \sigma_{Z}, \quad A_{2} = \sigma_{Z} \otimes I, \quad A_{3} = \sigma_{Z} \otimes \sigma_{Z} \\
    A_{4} = \sigma_{X} \otimes I, \quad A_{5} = I \otimes \sigma_{X}, \quad A_{6} = \sigma_{X} \otimes \sigma_{X} \\
    A_{7} = \sigma_{X} \otimes \sigma_{Z}, \quad A_{8} = \sigma_{Z} \otimes \sigma_{X}, \quad A_{9} = \sigma_{Y} \otimes \sigma_{Y},
\end{gather*}

In this section, we provide an example of a non-local game whose perfect solutions must obey particular group relations but is not a self-test. This game, \emph{glued magic square}, is described in Figure \ref{fig:2_magic_square}.

\begin{figure}[H]
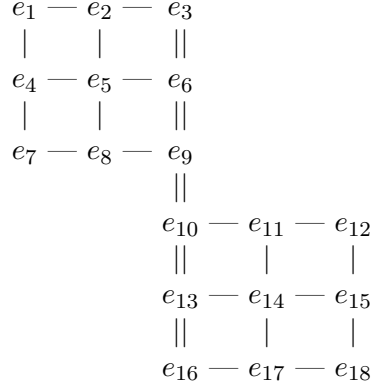

\[\begin{array}{c@{}c@{}c@{}c@{}c@{}c@{}c@{}c@{}c}
e_{1} & \text{ --- } & e_{2} & \text{ --- } & e_{3} & & & & \\
 \vert & & \vert & & \vert\vert& & & & \\
e_{4} & \text{ --- } & e_{5} & \text{ --- } & e_{6} & & & &\\
 \vert & & \vert & & \vert\vert & & & &\\
e_{7} & \text{ --- } & e_{8} & \text{ --- } & e_{9} & & & & \\

& & & & \vert\vert & & & \\

& & & & e_{10} & \text{ --- } & e_{11} & \text{ --- } & e_{12}\\
& & & & \vert\vert & & \vert & & \vert \\
& & & & e_{13} & \text{ --- } & e_{14} & \text{ --- } & e_{15}\\
& & & & \vert\vert & & \vert & & \vert \\
& & & & e_{16} & \text{ --- } & e_{17} & \text{ --- } & e_{18}\\
\end{array}\]

\caption{This describes a LCS game with 18 variables $e_{1}, e_{2}, \dots, e_{18}$. Each single-line indicates that the variables along the line multiply to $1$, and the double-line indicates that the variables along the line multiply to $-1$. }
\label{fig:2_magic_square}

\end{figure}

In order to show that this game is not a self-test, we first define two operator solutions, that give rise to perfect strategies. Let $\mathcal{E} = \{E_{1}, E_{2}, \dots, E_{18}\}$ be defined as
\[ E_i = \begin{cases}
\begin{pmatrix}
I_4 & 0 \\ 0 & A_i 
\end{pmatrix} & \text{ for } i = 1, 2, \dots, 9 \\ \begin{pmatrix}
A_{i-9} & 0 \\ 0 & I_4
\end{pmatrix} & \text { for } i= 10, 11, \dots, 18
\end{cases}\]
and $\mathcal{F} = \{F_{1}, F_{2}, \dots, F_{18}\}$ as
\[
F_i = \begin{cases}
A_{i} & \text{ for } i = 1, 2 \dots, 9 \\ 
I_4  & \text { for } i= 10, 11 \dots, 18
\end{cases}\]

These two operators solutions $\mathcal{E}$ and $\mathcal{F}$ give rise to two quantum strategies with the entangled states $\ket{\psi_{1}} = \frac{1}{\sqrt{8}} \sum_{i=0}^{7} \ket{i}\ket{i}$ and $\ket{\psi_{2}} = \frac{1}{2} \sum_{i=0}^{3} \ket{i}\ket{i}$.

\begin{thm}
The glued magic square game is not a self-test for any quantum strategy.
\end{thm}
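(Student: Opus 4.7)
The plan is proof by contradiction, leveraging the two perfect strategies $(\mathcal{E},\ket{\psi_1})$ and $(\mathcal{F},\ket{\psi_2})$ just constructed. Suppose the glued magic square game were a self-test for some canonical strategy consisting of Alice's observables $\{C_i\}$, Bob's observables $\{D_j\}$, and shared state $\ket{\psi_0}$. Since both given strategies are optimal (they win with probability $1$), the self-testing definition from Section \ref{sec:non-local-games} provides, for each $k \in \{1,2\}$, local isometries $V_A^{(k)}, V_B^{(k)}$ and a junk vector $\ket{\text{junk}^{(k)}}$ mapping the strategy to the canonical one plus junk.

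The key fact I plan to exploit is that any self-test must preserve every Alice-side two-point expectation on the shared state. Taking the inner product of the self-test identity $V_A^{(k)}\tilde{A}_i\otimes V_B^{(k)}\ket{\tilde{\psi}} = \paren{C_i\otimes I\ket{\psi_0}}\ket{\text{junk}^{(k)}}$ with the analogous identity at index $j$, and using $\paren{V_A^{(k)}}^\ast V_A^{(k)} = I$, $\paren{V_B^{(k)}}^\ast V_B^{(k)} = I$, and $\braket{\text{junk}^{(k)}}{\text{junk}^{(k)}} = 1$ together with Hermiticity of the observables, yields $\bra{\tilde{\psi}}\tilde{A}_i\tilde{A}_j\otimes I\ket{\tilde{\psi}} = \bra{\psi_0}C_iC_j\otimes I\ket{\psi_0}$. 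Applied to both given strategies, this forces
\[
\bra{\psi_1}E_iE_j\otimes I\ket{\psi_1} \;=\; \bra{\psi_2}F_iF_j\otimes I\ket{\psi_2}
\]
for every pair of indices $(i,j)$.

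I then plan to display a single pair of indices violating this identity. The natural choice is $(i,j) = (10,11)$. Since both $\ket{\psi_1}$ and $\ket{\psi_2}$ are maximally entangled, the standard identity $\bra{\Phi_d}M\otimes I\ket{\Phi_d} = \tfrac{1}{d}\Tr(M)$ reduces each side to a normalized trace. The matrix $E_{10}E_{11}$ is block-diagonal with blocks $A_1A_2 = \sigma_Z\otimes\sigma_Z$ (traceless) and $I_4$, so $\Tr(E_{10}E_{11}) = 4$ and the left-hand side equals $\tfrac{1}{2}$. Meanwhile $F_{10}F_{11} = I_4 \cdot I_4 = I_4$, so the right-hand side equals $1$. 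The discrepancy $\tfrac{1}{2} \neq 1$ contradicts the equality forced by self-testing, and hence no canonical self-tested strategy can exist.

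The only substantive step requiring care is the inner-product extraction of the preserved two-point correlation from the self-test identities: one must take the inner product of two distinct instances of the observable-side identity, rather than a single norm of one. Once stated precisely, the remainder is two short trace computations. Notably, no detailed analysis of the exact gluing equations in Figure \ref{fig:2_magic_square} is needed, because self-testing mandates equality of \emph{every} polynomial Alice-side correlation on the shared state, not only those directly observable as game statistics.
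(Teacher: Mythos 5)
Your proof is correct and uses the same mechanism as the paper: self-testing forces every Alice-side two-point correlation $\bra{\tilde{\psi}}\tilde{A}_i\tilde{A}_j\otimes I\ket{\tilde{\psi}}$ to agree across the two inequivalent perfect strategies, and one then exhibits a pair of indices on which the strategies $(\mathcal{E},\ket{\psi_1})$ and $(\mathcal{F},\ket{\psi_2})$ disagree. The only difference is the witness: the paper takes the pair $(1,5)$ and reaches the contradiction via the anti-commutation of $F_1$ and $F_5$ against the trace symmetry of $E_1E_5$ on the maximally entangled state, whereas your pair $(10,11)$ yields a direct numerical mismatch ($\tfrac{1}{2}$ vs.\ $1$), a slightly more economical route to the same conclusion.
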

\begin{proof}
Suppose, for the sake of contradiction, there is a quantum strategy $\left(\lbrace A_i \rbrace_i, \lbrace B_j \rbrace_j \ket{\psi} \right) $ that is rigid. Then there exist local isometries $U_{A}$, $U_{B}$ and $V_{A}$, $V_{B}$ such that
\begin{gather}
    (U_{A}E_{1} \otimes U_{B})\ket{\psi_{1}} = ((A_{1} \otimes I)\ket{\psi})\ket{\text{junk}_{1}} \label{Msquare_rel_1} \\
    (U_{A}E_{5} \otimes U_{B})\ket{\psi_{1}} = ((A_{5} \otimes I)\ket{\psi})\ket{\text{junk}_{1}} \label{Msquare_rel_2} \\
    (V_{A}F_{1} \otimes V_{B})\ket{\psi_{2}} = ((A_{1} \otimes I)\ket{\psi})\ket{\text{junk}_{2}} \label{Msquare_rel_3} \\
    (V_{A}F_{5} \otimes V_{B})\ket{\psi_{2}} = ((A_{5} \otimes I)\ket{\psi})\ket{\text{junk}_{2}}. \label{Msquare_rel_4}
\end{gather}

From relation (\ref{Msquare_rel_2}), we obtain 
\[ \bra{\psi_{1}}(E_{5}U_{A}^{*} \otimes U_{B}^{*}) = \bra{\text{junk}_{1}}(\bra{\psi}(A_{5}^{*} \otimes I)), \]
and hence together with relation (\ref{Msquare_rel_1}), we obtain the following relation between $E_{5}E_{1}$ and $A_{5}^{*}A_{1}$
\[ \bra{\psi_{1}}(E_{5}E_{1} \otimes I)\ket{\psi_{1}} = \bra{\psi}(A_{5}^{*}A_{1} \otimes I)\ket{\psi}. \]
Similarly, we also obtain
\[ \bra{\psi_{2}}(F_{5}F_{1} \otimes I)\ket{\psi_{2}} = \bra{\psi}(A_{5}^{*}A_{1} \otimes I)\ket{\psi}, \]
and hence
\[ \bra{\psi_{1}}(E_{5}E_{1} \otimes I)\ket{\psi_{1}} = \bra{\psi_{2}}(F_{5}F_{1} \otimes I)\ket{\psi_{2}}. \]

By first applying the adjoint to relation (\ref{Msquare_rel_1}) and (\ref{Msquare_rel_3}), we obtain
\[ \bra{\psi_{1}}(E_{1}E_{5} \otimes I)\ket{\psi_{1}} = \bra{\psi_{2}}(F_{1}F_{5} \otimes I)\ket{\psi_{2}}. \]
Now, since $F_{1}$ and $F_{5}$ anti-commute, we get the following relation between $E_{5}E_{1}$ and $E_{1}E_{5}$
\[ \bra{\psi_{1}}(E_{5}E_{1} \otimes I)\ket{\psi_{1}} = -\bra{\psi_{1}}(E_{1}E_{5} \otimes I)\ket{\psi_{1}}. \]
However, a direct computation of $\bra{\psi_{1}}(E_{5}E_{1} \otimes I)\ket{\psi_{1}}$ shows that
\[ \bra{\psi_{1}}(E_{5}E_{1} \otimes I)\ket{\psi_{1}} = \frac{1}{8} \sum_{i = 0}^{7} \bra{i}E_{5}E_{1}\ket{i} = \frac{1}{8} \Tr(E_{5}E_{1}) = \frac{1}{8} \Tr(E_{1}E_{5}) = \bra{\psi_{1}}(E_{1}E_{5} \otimes I)\ket{\psi_{1}}, \]
and $\Tr(E_{1}E_{5}) = \Tr(I_{4}) + \Tr(I \otimes \sigma_{Z}\sigma_{X}) = 4 \neq 0$. Hence, the glued magic square game is not rigid. 
\end{proof}

Although this game is not a self-test, we know from Section \ref{BCSwithSOS} Alice's operators must provide a $\ket{\psi}$-representation for the solution group of glued magic square, and thus must satisfy particular group relations.

\bibliographystyle{alpha}
\bibliography{bibliography.bib}

\newcommand{\etalchar}[1]{$^{#1}$}
\begin{thebibliography}{CGJV19b}

\bibitem[ABG{\etalchar{+}}07]{QuantumCrypto1}
Antonio Ac\'{i}n, Nicolas Brunner, Nicolas Gisin, Serge Massar, Stefano
  Pironio, and Valerio Scarani.
\newblock Device-independent security of quantum cryptography against
  collective attacks.
\newblock {\em Phys. Rev. Lett.}, 98:230501, 2007.
\newblock \url{https://doi.org/10.1103/physrevlett.98.230501}.

\bibitem[Bel64]{Bell}
John~Stewart Bell.
\newblock On the einstein-podolsky-rosen paradox.
\newblock {\em Physics}, 1:195, 1964.
\newblock \url{https://doi.org/10.1103/PhysicsPhysiqueFizika.1.195}.

\bibitem[BP15]{TiltedCHSH}
C\'{e}dric Bamps and Stefano Pironio.
\newblock Sum-of-squares decompositions for a family of
  clauser-horne-shimony-holt-like inequalities and their application to
  self-testing.
\newblock {\em Phys. Rev. A}, 91(052111), 2015.
\newblock \url{https://doi.org/10.1103/PhysRevA.91.052111}.

\bibitem[BS15]{Shor}
Mohammad Bavarian and Peter~W. Shor.
\newblock Information causality, szemerédi-trotter and algebraic variants of
  chsh.
\newblock In {\em Conference on Innovations in Theoretical Computer Science},
  2015.
\newblock \url{https://doi.org/10.1145/2688073.2688112}.

\bibitem[CGJV19a]{verifier-on-a-leash}
Andrea Coladangelo, Alex~B. Grilo, Stacey Jeffery, and Thomas Vidick.
\newblock Verifier-on-a-leash: New schemes for verifiable delegated quantum
  computation, with quasilinear resources.
\newblock In Yuval Ishai and Vincent Rijmen, editors, {\em Advances in
  Cryptology -- EUROCRYPT 2019}, pages 247--277, Cham, 2019. Springer
  International Publishing.
\newblock \url{https://doi.org/10.1007/978-3-030-17659-4_9}.

\bibitem[CGJV19b]{Leash}
Andrea Coladangelo, Alex~B. Grilo, Stacey Jeffery, and Thomas Vidick.
\newblock Verifier-on-a-leash: New schemes for verifiable delegated quantum
  computation, with quasilinear resources.
\newblock In Yuval Ishai and Vincent Rijmen, editors, {\em Advances in
  Cryptology -- EUROCRYPT 2019}, pages 247--277, Cham, 2019. Springer
  International Publishing.
\newblock \url{https://doi.org/10.1007/978-3-030-17659-4_9}.

\bibitem[CGS17]{selftestAnyState}
Andrea Coladangelo, Koon~Tong Goh, and Valerio Scarani.
\newblock All pure bipartite entangled states can be self-tested.
\newblock {\em Nature Communications}, 8(15485), 2017.
\newblock \url{https://doi.org/10.1038/ncomms15485}.

\bibitem[CHSH69]{CHSH}
John~F. Clauser, Michael~A. Horne, Abner Shimony, and Richard~A. Holt.
\newblock Proposed experiment to test local hidden-variable theories.
\newblock {\em Phys. Rev. Lett.}, 23:880, 1969.
\newblock \url{https://doi.org/10.1103/PhysRevLett.23.880}.

\bibitem[CHTW04]{Watrous}
Richard Cleve, Peter Hoyer, Benjamin Toner, and John Watrous.
\newblock Consequences and limits of nonlocal strategies.
\newblock In {\em Proceedings of the 19th IEEE Annual Conference on
  Computational Complexity}, CCC '04, pages 236--249, Washington, DC, USA,
  2004. IEEE Computer Society.
\newblock \url{https://doi.org/10.1109/CCC.2004.9}.

\bibitem[CLS17]{BCSCommuting}
Richard Cleve, Li~Liu, and William Slofstra.
\newblock Perfect commuting-operator strategies for linear system games.
\newblock {\em Journal of Mathematical Physics}, 58(012202), 2017.
\newblock \url{https://doi.org/10.1063/1.4973422}.

\bibitem[CM12]{BCSTensor}
Richard Cleve and Rajat Mittal.
\newblock Characterization of binary constraint system games.
\newblock In {\em International Colloquium on Automata, Languages, and
  Programming (ICALP) 2012}, pages {320--331}, 2012.
\newblock \url{https://doi.org/10.1007/978-3-662-43948-7_27}.

\bibitem[CN10]{QCQI}
Isaac Chuang and Michael Nielsen.
\newblock {\em Quantum Computation and Quantum Information.}
\newblock Cambridge University Press, 2010.
\newblock \url{https://doi.org/10.1017/CBO9780511976667}.

\bibitem[CN16]{RepeatedMagicSquare}
Matthew Coudron and Anand Natarajan.
\newblock The parallel-repeated magic square game is rigid.
\newblock arXiv:1609.06306 [quant-ph], 2016.

\bibitem[Col16]{ColadangeloParallel}
Andrea Coladangelo.
\newblock Parallel self-testing of (tilted) epr pairs via copies of (tilted)
  chsh.
\newblock {\em Quantum Information and Computation}, 17:35, 2016.

\bibitem[CS18]{RobustRigidityLCS}
Andrea Coladangelo and Jalex Stark.
\newblock Robust self-testing for linear constraint system games.
\newblock In {\em QIP 2018}, 2018.

\bibitem[FJVY19]{iterated-compression}
Joseph Fitzsimons, Zhengfeng Ji, Thomas Vidick, and Henry Yuen.
\newblock Quantum proof systems for iterated exponential time, and beyond.
\newblock In {\em Proceedings of the 51st Annual ACM SIGACT Symposium on Theory
  of Computing}, STOC 2019, pages 473--480, New York, NY, USA, 2019. ACM.
\newblock \url{https://doi.org/10.1145/3313276.3316343}.

\bibitem[GH17]{GowersHatami}
William~Timothy Gowers and Omid Hatami.
\newblock Inverse and stability theorems for approximate representations of
  finite groups.
\newblock {\em Sbornik: Mathematics}, 208(12):1784, 2017.
\newblock \url{https://doi.org/10.1070/SM8872}.

\bibitem[HPP16]{PaulsenNotes}
Samuel~J. Harris, Satish~K. Pandey, and Vern Paulsen.
\newblock Entanglement and non-locality.
\newblock Available at
  \url{https://www.math.uwaterloo.ca/~vpaulsen/EntanglementAndNonlocality_LectureNotes_7.pdf},
  2016.

\bibitem[Kan20]{weak2}
Jędrzej Kaniewski.
\newblock Weak form of self-testing.
\newblock {\em Physical Review Research}, 2(3), Sep 2020.

\bibitem[KvT{\etalchar{+}}18]{WeakRigidity}
J\c{e}drzej Kaniewski, Ivan \v{S}upi\'{c}, Jordi Tura, Flavio Baccari, Alexia
  Salavrakos, and Remigiusz Augusiak.
\newblock Maximal nonlocality from maximal entanglement and mutually unbiased
  bases, and self-testing of two-qutrit quantum systems.
\newblock Available at \url{https://arxiv.org/pdf/1807.03332.pdf}, 2018.

\bibitem[Mck16]{Matthew}
Matthew Mckague.
\newblock Self-testing in parallel with chsh.
\newblock {\em Quantum}, 1, 2016.
\newblock \url{https://doi.org/10.22331/q-2017-04-25-1}.

\bibitem[Mer90]{Mermin}
N.~David Mermin.
\newblock Simple unified form for the major no-hidden-variables theorems.
\newblock {\em Phys. Rev. Lett.}, 65(27):3373, 1990.
\newblock \url{https://doi.org/10.1103/PhysRevLett.65.3373}.

\bibitem[MNP21]{Mancinska2}
Laura Mančinska, Thor~Gabelgaard Nielsen, and Jitendra Prakash.
\newblock Glued magic games self-test maximally entangled states, 2021.

\bibitem[MY04]{MaoSelfTest}
Dominic Mayers and Andy Yao.
\newblock Self testing quantum apparatus.
\newblock {\em Quantum Info. Comput.}, 4(4):273--286, 2004.
\newblock \url{https://doi.org/10.1007/11786986_8}.

\bibitem[MYS12]{NewCHSH}
Matthew McKague, Tzyh~Haur Yang, and Valerio Scarani.
\newblock Robust self-testing of the singlet.
\newblock {\em Journal of Mathematical Physics}, 45:455304, 2012.
\newblock \url{http://doi.org/10.1088/1751-8113/45/45/455304}.

\bibitem[NPA08]{NPA}
Miguel Navascu{\'{e}}s, Stefano Pironio, and Antonio Ac{\'{\i}}n.
\newblock A convergent hierarchy of semidefinite programs characterizing the
  set of quantum correlations.
\newblock {\em New Journal of Physics}, 10(7):073013, 2008.
\newblock \url{https://doi.org/10.1088/1367-2630/10/7/073013}.

\bibitem[NV17]{VidickNatarajan}
Anand Natarajan and Thomas Vidick.
\newblock Robust self-testing of many-qubit states.
\newblock In {\em STOC}, 2017.
\newblock \url{https://doi.org/10.1038/s41534-018-0120-0}.

\bibitem[NV18]{low-degree}
Anand Natarajan and Thomas Vidick.
\newblock Low-degree testing for quantum states, and a quantum entangled games
  pcp for qma.
\newblock {\em 2018 IEEE 59th Annual Symposium on Foundations of Computer
  Science (FOCS)}, pages 731--742, 2018.
\newblock \url{https://doi.org/10.1109/focs.2018.00075}.

\bibitem[NW19]{neexp}
Anand Natarajan and John Wright.
\newblock Neexp in mip.
\newblock {\em ArXiv}, abs/1904.05870, 2019.
\newblock \url{https://doi.org/10.1109/focs.2019.00039}.

\bibitem[SB19]{Selftesting}
Ivan Supi\'{c} and Joseph Bowles.
\newblock Self-testing of quantum systems: a review.
\newblock \url{https://doi.org/10.22331/q-2020-09-30-337}, 2019.

\bibitem[Slo11]{Slofstra_2011}
William Slofstra.
\newblock Lower bounds on the entanglement needed to play xor non-local games.
\newblock {\em Journal of Mathematical Physics}, 52(10):102202, Oct 2011.

\bibitem[Slo19]{embedding}
William Slofstra.
\newblock The set of quantum correlations is not closed.
\newblock {\em Forum of Mathematics, Pi}, 7:e1, 2019.
\newblock \url{https://doi.org/10.1017/fmp.2018.3}.

\bibitem[SW87]{Summers}
Stephen~J. Summers and Reinhard Werner.
\newblock Maximal violation of bell's inequalities is generic in quantum field
  theory.
\newblock {\em Comm. Math. Phys.}, 110(2):247--259, 1987.
\newblock \url{https://doi.org/10.1007/BF01207366}.

\bibitem[Tsi93]{SelfTestCHSH}
Boris Tsirelson.
\newblock Some results and problems on quantum bell-type inequalities.
\newblock {\em Hadronis Journal Supplement}, 8:320--331, 1993.

\bibitem[Vid18]{VidickBlog}
Thomas Vidick.
\newblock A simplified analysis on robust self-testing of $n$ epr pairs.
\newblock Available at \url{http://users.cms.caltech.edu/~vidick/}, 2018.

\bibitem[VV12]{QuantumDice}
Umesh Vazirani and Thomas Vidick.
\newblock Certifiable quantum dice: Or, true random number generation secure
  against quantum adversaries.
\newblock In {\em Proceedings of the Forty-fourth Annual ACM Symposium on
  Theory of Computing}, STOC '12, pages 61--76, New York, NY, USA, 2012. ACM.
\newblock \url{http://doi.acm.org/10.1145/2213977.2213984}.

\bibitem[VV14]{device-independent}
Umesh Vazirani and Thomas Vidick.
\newblock Fully device-independent quantum key distribution.
\newblock {\em Phys. Rev. Lett.}, 113:140501, Sep 2014.
\newblock \url{https://doi.org/10.1145/3310974}.

\bibitem[Wat18]{TQI}
John Watrous.
\newblock {\em The Theory of Quantum Information.}
\newblock Cambridge University Press, 2018.
\newblock \url{https://doi.org/10.1017/9781316848142}.

\bibitem[WBMS15]{Wu}
Xingyao Wu, Jean-Daniel Bancal, Matthew Mckague, and Valerio Scarani.
\newblock Device-independent parallel self-testing of two singlets.
\newblock {\em Physical Review A}, 93, 2015.
\newblock \url{https://doi.org/10.1103/PhysRevA.93.062121}.

\end{thebibliography}

\end{document}